\crefname{hypothesis}{Hypothesis}{Hypotheses}
\crefname{fact}{Fact}{Facts}
\title{A Kinetic Theory Approach To Ordered Fluids\thanks{Submitted to the editors DATE.
\funding{This work was funded by the Engineering and Physical Sciences Research Council
[grant numbers EP/R029423/1, EP/W026163/1, EP/T022132/1 and EP/V051121/1], by the Advanced Grant Nonlocal-CPD
(Nonlocal PDEs for Complex Particle Dynamics: Phase Transitions, Patterns and Synchronization)
 of the European Research Council Executive Agency (ERC) under the European Union Horizon 2020 research and innovation programme (grant agreement No.\ 883363),
by the ``Maria de Maeztu'' Excellence Unit IMAG, reference CEX2020-001105-M, funded
by MCIN/AEI/10.13039/501100011033/,
 by the Donatio Universitatis Carolinae Chair ``Mathematical modelling of multicomponent systems'', and the UKRI Digital Research Infrastructure Programme through the Science and Technology Facilities Council's Computational Science Centre for Research Communities (CoSeC).
}}}
\author{
José A. Carrillo\thanks{Mathematical Institute, University of Oxford, Oxford, UK(\email{carrillo@maths.ox.ac.uk}).} 
\and Patrick E.~Farrell\thanks{Mathematical Institute, University of Oxford, Oxford, UK and Mathematical Institute, Faculty of Mathematics and Physics, Charles University, Czechia (\email{patrick.farrell@maths.ox.ac.uk}).}
\and Andrea Medaglia\thanks{Department of Mathematics and Computer Science, University of Ferrara, Italy (\email{andrea.medaglia@unife.it}, \url{https://www.andreamedaglia.eu/}).}
\and Umberto Zerbinati\thanks{Mathematical Institute, University of Oxford, Oxford, UK (\email{zerbinati@maths.ox.ac.uk}, \url{https://www.uzerbinati.eu}).}
}
\DeclareMathOperator{\Forall}{\forall}
\newcommand{\mat}[1]{{\mathpalette\mat@{#1}}}
\newcommand{\mat@}[2]{%
  \begingroup
  \sbox\z@{$\m@th#1\underline{#2}$}%
  \dimen@=\dp\z@ \advance\dimen@ -2\mat@dimen{#1}%
  \dp\z@=\dimen@
  \sbox\z@{$\m@th\underline{\box\z@}$}%
  \box\z@
  \endgroup
}
\newcommand\mat@dimen[1]{%
  \fontdimen8
  \ifx#1\displaystyle\textfont\else
  \ifx#1\textstyle\textfont\else
  \ifx#1\scriptstyle\scriptfont\else
  \scriptscriptfont\fi\fi\fi 3
}
\newcommand{\tensor}[1]{{\mathpalette\tensor@{#1}}}
\newcommand{\tensor@}[2]{%
  \begingroup
  \sbox\z@{$\m@th#1\underline{#2}$}%
  \dimen@=\dp\z@ \advance\dimen@ -2\tensor@dimen{#1}%
  \dp\z@=\dimen@
  \sbox\z@{$\m@th\underline{\box\z@}$}%
  \box\z@
  \endgroup
}
\newcommand\tensor@dimen[1]{%
  \fontdimen8
  \ifx#1\displaystyle\textfont\else
  \ifx#1\textstyle\textfont\else
  \ifx#1\scriptstyle\scriptfont\else
  \scriptscriptfont\fi\fi\fi 3
}
\def\contraction{\vbox{\baselineskip2.5\p@ \lineskiplimit\z@
  \kern\p@\hbox{.}\hbox{.}\hbox{.}\hbox{.}}}
\newcommand{\abs}[1]{\lvert#1\rvert}
\newcommand{\norm}[1]{\lVert #1\rVert}
\DeclareMathOperator{\tr}{tr}
\let \vec \underline
\let \pt \mathbf
\newcommand{\hathat}[1]{%
\begingroup%
  \let\macc@kerna\z@%
  \let\macc@kernb\z@%
  \let\macc@nucleus\@empty%
  \hat{\raisebox{.28ex}{\vphantom{\ensuremath{#1}}}\smash{\hat{#1}}}%
\endgroup%
}
\newcommand{\gpt}[1]{\textcolor{black}{#1}}
\newcommand{\rev}[1]{#1}
\tikzstyle{GreenNeuron}=[fill={rgb,255: red,0; green,128; blue,128}, draw={rgb,255: red,0; green,128; blue,128}, shape=circle]
\tikzstyle{BlueNeuron}=[fill={rgb,255: red,0; green,0; blue,173}, draw=none, shape=circle]
\tikzstyle{RedNeuron}=[fill={rgb,255: red,202; green,0; blue,0}, draw=none, shape=circle]
\tikzstyle{BlackCirc}=[fill=black, draw=black, shape=circle, font={\tiny}, scale=0.4]
\tikzstyle{tinyGreen}=[fill={rgb,255: red,0; green,128; blue,128}, draw={rgb,255: red,0; green,128; blue,128}, shape=circle, scale=0.4]
\tikzstyle{tinyRed}=[fill={rgb,255: red,202; green,0; blue,0}, draw={rgb,255: red,202; green,0; blue,0}, shape=circle, scale=0.4]
\tikzstyle{RedDisc}=[fill=none, draw={rgb,255: red,191; green,0; blue,64}, shape=circle, scale=0.75]
\tikzstyle{smallRedDisk}=[fill=none, draw={rgb,255: red,191; green,0; blue,64}, shape=circle, scale=0.6]
\tikzstyle{BigRedDisk}=[fill=none, draw={rgb,255: red,191; green,0; blue,64}, shape=circle, scale=0.9]
\tikzstyle{SuperBigDisckRed}=[fill=none, draw={rgb,255: red,191; green,0; blue,64}, shape=circle, scale=1.05]
\tikzstyle{tinyBlack}=[fill=black, draw=black, shape=circle, scale=0.4]
\tikzstyle{Arrow}=[->, draw=black]
\tikzstyle{Green}=[-, draw={rgb,255: red,0; green,128; blue,128}, thick]
\tikzstyle{GreenDashed}=[-, draw={rgb,255: red,0; green,128; blue,128}, thick, dashed]
\tikzstyle{Dashed}=[-, dashed]
\tikzstyle{RedDashed}=[-, thick, dashed, draw={rgb,255: red,202; green,0; blue,0}]
\tikzstyle{BlueDashed}=[-, draw={rgb,255: red,0; green,0; blue,173}, thick, dashed]
\tikzstyle{dotRed}=[-, draw={rgb,255: red,202; green,0; blue,0}, dotted]
\tikzstyle{Thikblue}=[-, draw={rgb,255: red,0; green,0; blue,173}, thick]
\tikzstyle{ArrowRed}=[draw={rgb,255: red,191; green,0; blue,64}, ->]
\tikzstyle{doubleRedArrow}=[->, draw={rgb,255: red,191; green,0; blue,64}, double]
\tikzstyle{blueArrow}=[draw=blue, ->]
\tikzstyle{spring}=[-, spring]
\tikzstyle{LargeSpring}=[-, largeSpring]
\tikzstyle{thikRedArrow}=[->, thick, draw={rgb,255: red,191; green,0; blue,64}]
\tikzstyle{empty}=[-, draw=none]
\tikzstyle{greenArrow}=[->, draw={rgb,255: red,0; green,128; blue,128}]
\tikzstyle{dubleArrow}=[<->]
\tikzstyle{dashedArrow}=[dashed, ->]
\tikzstyle{double Green Arrow}=[<->, draw={rgb,255: red,0; green,128; blue,128}]
\tikzset
{
  axis/.style={thick,-latex},
  my view/.style={3d view={65}{20}},
  nutation/.style={rotate around x=\nut}
}
\tikzstyle{spring}=[line width=0.8,blue!7!black!80,snake=coil,segment amplitude=1,segment length=1.25,line cap=round]
\tikzstyle{largeSpring}=[line width=0.8,blue!7!black!80,snake=coil,segment amplitude=2,segment length=3.25,line cap=round]
\colorlet{my cyan}{cyan!50}
\definecolor{my magenta}{RGB}{0,128,128}
\definecolor{verde}{RGB}{0,128,128}
\definecolor{rosso}{RGB}{202,0,0}
\pgfplotsset{compat = newest}
\begin{document}

\maketitle

\begin{abstract}
	We develop a unified kinetic theory for ordered fluids, which systematically extends the phase space with the appropriate generalized angular momenta.
	\rev{
		Our theory yields a mesoscopic model for any continuum with microstructure that is characterized by Capriz's order-parameter manifold.
	}
	We illustrate our theory with three running examples: liquids saturated with non-diffusive gas bubbles, liquids composed of calamitic (rod-like) molecules, and liquids composed of calamitic molecules with additional head-to-tail symmetry.
	We discuss the symmetries of the microscopic interactions via Noether's theorem, and use them to characterize the conserved quantities in mesoscopic dynamics.
	We derive the mesoscopic model for ordered fluids from a kinetic point of view, assuming that the microscopic interactions are of weak nature when it comes to the ordering of the fluid.
	Lastly, we discuss under which conditions an H-theorem result holds at the mesoscopic scale and for which Vlasov potential we can expect the emergence of collective behavior.
\end{abstract}

\begin{keywords}
order parameter manifold, kinetic modelling, calamitic fluids, H-theorem
\end{keywords}

\begin{MSCcodes}
82D30,	82C40, 	76P99, 70H33
\end{MSCcodes}

\section{Introduction}

Fluids with microscopic ordering commonly arise in physics, chemistry, and biology. Prominent examples include liquid crystals~\cite{virga,degennes}, liquids saturated with non-diffusive gas bubbles~\cite{carpiz, russoSmereka, russoSmereka2}, polymer mixtures like synovial fluid in joints~\cite{suli}, ferrofluids~\cite{carpiz,nochetto}, and superfluid helium-4 \cite{carpiz,Pavelka}. How does the microscopic ordering influence the macroscopic behavior of such fluids? Ericksen, for instance, pioneered the study of how nematic ordering in liquid crystals leads to anisotropic effects in the elastic properties of the macroscopic fluid \cite{ericksen, ericksen2}.

Several frameworks have been proposed to study this question. Capriz pioneered a unified approach from a continuum mechanics perspective, relating symmetries of the microscopic constituents to the macroscopic behavior of the fluid~\cite{carpiz}. Virga \& Sonnet developed a phenomenological framework deriving appropriate energy and dissipation functionals from macroscopic symmetries~\cite{virgaSonnet}. We also mention the framework of Beris \& Edwards, which exploits thermodynamic brackets to derive models from a continuum thermodynamics perspective~\cite{berisEdwards,GENERIC}.

While there have been kinetic approaches for specific fluids with microscopic ordering~\cite{russoSmereka,russoSmereka2,degond,curtissI,curtissII,curtissIII,curtissIV,curtissV,suli}, there has been no unified treatment of this question from a kinetic point of view. In this work we aim to develop a unified kinetic theory for ordered fluids.
Our theory will exploit the idea of the order parameter manifold introduced by Capriz, systematically extending the phase space with the appropriate generalized angular momenta.
\rev{
	Once appropriate interaction potentials are specified, our theory determines a mesoscopic model for any continuum with microstructure that is characterized by Capriz's order parameter manifold.
}
\rev{
	In particular, in our theory we will adopt a Hamiltonian description based on the canonical Poisson bracket, which poses no restriction on the choice of the order parameter manifold, but comes at the price of a larger phase space and does not encode the conservation laws associated with the symmetries of the order parameter manifold into the kinetic equations.
	In contrast the use of reduced Poisson brackets and the associated GENERIC framework would allow the encoding of the conservation laws associated with the symmetries of the order parameter manifold into the kinetic equations, but would require additional restrictions on the choice of the order parameter manifold, such as symplecticity \cite{GENERIC}.
}

Our theory does not address the question of the \emph{emergence} of microstructural ordering. However, numerous studies have addressed this on a case-by-case basis. For example, in the context of liquid crystals, Onsager explained the mechanisms behind the origin of nematic ordering \cite{onsager}.
\rev{In particular, due to the emergence of such ordering, the macroscopic behavior of liquid crystals is very different from that of simple fluids, presenting phenomena such as anisotropic elasticity, anisotropic viscosity and anisotropic acoustic properties \cite{virga,degennes, farrell}.}

\rev{We employ four running examples to illustrate our theory:
\begin{itemize}
\item Example A: liquids saturated with non-diffusive gas bubbles;
\item Example B: liquids composed of calamitic (rodlike) molecules in two dimensions;
\item Example C: liquids composed of calamitic molecules in two dimensions with additional head-to-tail symmetry;
\item Example D: liquids composed of calamitic molecules in three dimensions.
\end{itemize}}
\subsection*{Notation}
	Vectors and matrices are denoted as $\vec{x}$ and $\mat{A}$. Elements of affine spaces (where differences of elements lie in a vector space) are denoted as $\pt{x}$. Elements of an intrinsic manifold $\mathcal{M}$ are denoted without notation (e.g.~$\nu \in \mathcal{M}$); when $\mathcal{M}$ is considered to be embedded in a vector space, we denote the corresponding vector as $\vec{\nu}$, except for the particular case of the special orthogonal group $\text{SO}(d)$, whose elements we denote by $\mat{Q}$.
	We denote the dyadic product of two vectors $\vec{a}$ and $\vec{b}$ as $\vec{a}\otimes \vec{b}$.
	We denote the Lie algebra associated with the Lie group $\text{SO}(d)$, i.e.~the tangent space to the manifold $SO(d)$ at the identity, as $\mathfrak{so}(d)$.
    We denote the vector in $\mathbb{S}^d$ generating $\nu$ via the dyadic product as $\vec{\nu}$, i.e. $\nu = \vec{\nu}\otimes \vec{\nu}$.
	The mapping $\nu$ and $\vec{\nu}$ is not unique, yet as long as we are interested only in $\vec{\nu}\otimes \vec{\nu}$ the choice of the sign of $\vec{\nu}$ is irrelevant.
	We denote $\mathbb{S}^d$ the unit sphere in $\mathbb{R}^{d+1}$, and $\mathbb{RP}^d$ the real projective space of dimension $d$. 
	
\section{Order parameter and Lagrangian formulation}
In this section we focus our attention on the microscopic constituents of the fluid under consideration.
In particular, we will first introduce Capriz's concept of an order parameter manifold to describe the nature of the ordering of the fluid and the effect that rigid motions have on it.
We will then introduce the Lagrangian description of the dynamics of the microscopic constituents and study the symmetries of the interactions between the microscopic constituents via Noether's theorem~\cite{fasanoMarmi}.
Lastly, thanks to the symmetries of the interactions, we will characterise quantities conserved by the dynamics of the microscopic constituents.

\subsection{Order parameter manifold} 
\label{sec:manifold}
We introduce the concept of an order parameter manifold to describe the nature of the order parameter field describing the ordering of the fluid and the effect that rigid motions have on it.

Let $d\in\mathbb{N}$ be the dimension of Euclidean space $\mathbb{E}^d$ containing the fluid under consideration.
To describe the ordering of the fluid, it is common to introduce $\iota\in \mathbb{N}$ scalar fields $\nu_\alpha: \mathbb{E}^d\to \mathbb{R}$, $\alpha = 1, \dots, \iota$, called order parameters.
Clearly the range of the order parameters is not arbitrary, but rather dependent on the nature of the ordering; for example, an angle should lie within $[0, 2\pi)$.  For this reason we will think of the order parameters $\nu_\alpha$ as coordinates of an element $\nu$ belonging to a smooth compact manifold $\mathcal{M}$ of dimension $\iota$.

We fix a set of coordinate charts that fully describe the manifold.
For each chart, we fix a preferred frame in its domain, i.e.~we fix a parametrization of $\mathcal{M}$.
We assume that translating the fluid in space does not change its microstructure, i.e.~that any translation of $\mathbb{E}^d$ leaves the order parameter unchanged. However, rotating the fluid may change our representation of its microstructure. For example, in a calamitic fluid with rodlike constituents, rotating the fluid will rotate the alignment of the molecules.
To describe the effect of rotations, we consider the group $\text{SO}(d)$ acting on $\mathbb{E}^d$.
We then introduce a group action $\mathcal{A}: \text{SO}(d)\times \mathcal{M}\to \mathcal{M}$ describing the effect of rotations on the order parameter.
\begin{definition}
	\label{def:orderParameterManifold}
	We say that the tuple $(\mathcal{M}, \mathcal{A})$ is an order parameter manifold if $\mathcal{M}$ is a smooth compact manifold equipped with a fixed parametrization, and $\mathcal{A}$ is a Lie group action of $\text{SO}(d)$ on $\mathcal{M}$, i.e. the map $\mathcal{A}$ is smooth enough to be differentiable.
	Furthermore, we say that a field $\nu: \mathbb{E}^d\to \mathcal{M}$ is an order parameter field if $\Forall \vec{c} \in \mathbb{R}^d$ and $\Forall \mat{Q} \in \text{SO}(d)$ we have
	\begin{equation}
		\nu(\mat{Q}\pt{x} + \vec{c}) = \mathcal{A}(\mat{Q}, \nu(\pt{x})), \quad \Forall \pt{x} \in \mathbb{E}^d.
	\end{equation}
\end{definition}
\rev{Perhaps a more appropriate name for the concept introduced in the previous definition would be the one of ``microstructure manifold'', since it describes the nature of the microstructure of the fluid, yet following Capriz's choice of naming $\nu$ as order parameters we decided to opt for the name of order parameter manifold \cite{carpiz}.}
\begin{remark}
	The order parameter manifold introduced in Definition \ref{def:orderParameterManifold} is equivalent to a G-space \cite[Chapter 7, p.146]{lee}, which is a widely used concept in the context of Lie groups and Lie algebra actions on manifolds.
\end{remark}
\begin{exA}[Nondiffusing gas bubbles: Order parameter manifold]
	\label{ex:bubbles}
	We think of the order parameter as a scalar field $\nu$ that describes the volume fraction of gas in the liquid.
	The order parameter manifold is the interval $[0,1]$ together with the trivial group action of $\text{SO}(d)$, which leaves the order parameter unchanged.
    Notice that this parametrization excludes the possibility that a bubble might carry rotational energy.
\end{exA}
\begin{exB}[Calamitic molecules in two dimensions: Order parameter manifold]
	\label{ex:2D}
	We consider a calamitic liquid crystal occupying a two-dimensional domain.
	The liquid crystal is a fluid whose microscopic constituents are calamitic molecules of vanishing girth.
	We model such molecules as segments of uniform length in $\mathbb{R}^2$.
	Thus, we can imagine the state of a stationary molecule by the position of its center of mass $\pt{x}$ and a vector $\vec{\nu}$ pointing in the direction of the molecule.
	Since no additional information about the state of the molecule is provided by norm of $\vec{\nu}$, we normalize $\vec{\nu}$ and consider the unit circle $\mathcal{M} = \mathbb{S}^1$ as order parameter manifold together with the action of $\text{SO}(2)$, $\mathcal{A}(\mat{Q}, \vec{\nu}) = \mat{Q}\vec{\nu}$.
\end{exB}
\begin{exC}[Head-tail symmetric molecules in two dimensions: Order parameter manifold]
	\label{ex:head-tail}
	The previous example neglects a key feature of many rodlike molecules: their head-tail symmetry. Incorporating the fact that $\vec{\nu}$ and $-\vec{\nu}$ represent the same state leads us to consider the real projective space $\mathbb{RP}^1$ as order parameter manifold.
	Using angles $\theta$ to parametrize $\mathbb{S}^1$ induces a parametrization of the real projective space $\mathbb{RP}^1$ as 
	\begin{equation}
		\label{eq:head-tail-parametrization}
		\theta\mapsto
		\begin{bmatrix} \cos(\theta) \\ \sin(\theta) \end{bmatrix} \otimes \begin{bmatrix} \cos(\theta) \\ \sin(\theta) \end{bmatrix} = \begin{bmatrix} \cos^2(\theta) & \cos(\theta)\sin(\theta) \\ \cos(\theta)\sin(\theta) & \sin^2(\theta) \end{bmatrix},
	\end{equation}
    which is invariant under the map $\theta \mapsto \theta + \pi$.
	To characterize the order parameter manifold associated with head-tail symmetric rodlike molecules it remains to specify the action of $\text{SO}(2)$ on $\mathbb{RP}^1$, which we choose as
	\begin{equation}
		\mathcal{A}: \text{SO}(2)\times \mathbb{RP}^1\to \mathbb{RP}^1, \quad (\mat{Q}, \nu)\mapsto (\mat{Q}\nu)\otimes (\mat{Q}\nu).
	\end{equation}
\end{exC}
\begin{exD}[Calamitic molecules in three dimensions: Order parameter manifold]
	\label{ex:3D}
    We consider the three-dimensional analogue of \cref{ex:2D}.
	We model the molecules as segments of uniform length in $\mathbb{R}^3$.
	Hence, the order parameter manifold is the unit sphere $\mathcal{M} = \mathbb{S}^2$ together with the action of $\text{SO}(3)$, $\mathcal{A}(\mat{Q}, \vec{\nu}) = \mat{Q}\vec{\nu}$.
\end{exD}
\begin{remark}[Landau--de Gennes Q-tensor]
	\label{rem:Qtensor}
	One could also consider as order parameter manifold the set of symmetric traceless second order tensors.
	In the context of rigid bodies, this would be equivalent to endow each molecule with a description of its inertia tensor, or to be more precise its deviation from the inertia tensor of a sphere.
	The authors believe that this example would introduce unnecessary complications in the exposition of the theory, and thus we will not consider it in the rest of the paper.
	However, the development of the theory presented in this paper to the case of the Landau--de Gennes Q-tensor will be the subject of upcoming work.
\end{remark}
For a fixed order parameter $\nu\in \mathcal{M}$, the orbit of $\nu$ under the action of $\text{SO}(d)$
\begin{equation}
	\text{SO}(d)\nu \coloneqq \left\{\mathcal{A}(\mat{Q}, \nu) \;:\; \mat{Q}\in \text{SO}(d)\right\}
\end{equation}
is the set of all order parameters that can be obtained from $\nu$ via a change of frame in space. The orbits partition $\mathcal{M}$ into equivalence classes.
\begin{definition}
  The action $\mathcal{A}$ is said to be transitive if for any $\nu \in \mathcal{M}$ the orbit $\text{SO}(d)\nu$ is the whole manifold $\mathcal{M}$. \\
\end{definition}
\begin{remark}
	Transitivity of the action $\mathcal{A}$ is equivalent to the statement that all the possible order parameter configurations on the manifold $\mathcal{M}$ can be achieved by a change of frame. The action of \cref{ex:bubbles} is not transitive, while all other examples are transitive.
\end{remark}
For a fixed $\nu \in \mathcal{M}$, the orbit map
\begin{equation}
	\mathcal{A}_\nu: \text{SO}(d)\to \text{SO}(d)\nu, \quad \mat{Q}\mapsto \mathcal{A}(\mat{Q}, \nu),
\end{equation}
is differentiable at the identity \cite[Chapter 7, p.~146]{lee}. We will denote $A_\nu: \mathfrak{so}(d) \to T_\nu\mathcal{M}$ the differential of $\mathcal{A}_\nu$ evaluated at the identity.
The maps $A_\nu$ are the infinitesimal generators of the action of $\text{SO}(d)$ on the manifold $\mathcal{M}$, i.e.~they describe the effect of infinitesimal rotations on the order parameter $\nu$. Lastly we denote by $A$ the map $\nu\mapsto A_\nu$ which associates to any order parameter $\nu$ its infinitesimal generator $A_\nu$.\\

\begin{remark}
The map $A$ will be useful below in characterizing the angular momentum of the fluid in terms of the order parameter manifold.
\end{remark}

\begin{exD}[Calamitic molecules in three dimensions: Infinitesimal generators]
	\label{ex:infinitesimalGenerator3D}
	We continue Example \ref{ex:3D}. We wish to understand the group action associated with the order parameter manifold $\mathcal{A}(\mat{Q}, \vec{\nu}) = \mat{Q}\vec{\nu}$.
	We parametrize elements of $SO(3)$ via a rotation vector $\vec{q}\in \mathbb{R}^3$ whose norm is the angle of rotation and whose direction is the axis of rotation, i.e.\\
	\begin{equation}
		\label{eq:rodrigues}
		\mat{Q}(\norm{q},\norm{q}^{-1}\vec{q}) = e^{\mat{W}(\vec{q})} = \mat{I} + \sin(\|\vec{q}\|)\mat{W}(\vec{q}) + (1-\cos(\|\vec{q}\|))\mat{W}(\vec{q})^2,
	\end{equation}
	where $\mat{W}(\vec{q})$ is the skew-symmetric matrix associated with the vector $\vec{q}$ via the cross product, and $\mat{I}$ is the identity matrix. To obtain the last identity we used Rodrigues' formula \cite[Chapter 4, eq.~4.62]{goldstein}.
	Using this representation of the element of $SO(3)$ we can compute the differential of the orbit map as
	\begin{equation}
		A_{\vec{\nu}}: \mathbb{R}^3\to T_\nu\mathcal{M},\quad
		\vec{q} \mapsto \frac{\partial}{\partial{\norm{\vec{q}}}} \mat{Q}(\norm{\vec{q}}, \norm{q}^{-1}\vec{q}){\Big|_{_{\norm{q} = 0}}}\!\!\!\!\!\!\vec{\nu}= \mat{W}(\vec{q})\vec{\nu}= \vec{q}\times \vec{\nu}.
	\end{equation}
	By a slight abuse of notation we will denote both $A_{\vec{\nu}}: \mathfrak{so}(3)\to T_\nu\mathcal{M}$ and $A_{\vec{\nu}}: \mathbb{R}^3\to T_\nu\mathcal{M}$ with the same symbol.
\end{exD}
\begin{exB}[Calamitic molecules in two dimensions: Infinitesimal generators]
	\label{ex:infinitesimalGenerator2D}
	We continue \cref{ex:2D}. We wish to understand the group action associated with the order parameter manifold $\mathcal{A}(\mat{Q}, \vec{\nu}) = \mat{Q}\vec{\nu}$.
	Since the Lie algebra of $\text{SO}(2)$ is Abelian, its associated Lie bracket is zero.
	If we parametrize elements of $SO(2)$ via a rotation angle $\theta\in \mathbb{R}$, i.e.
	\begin{equation}
		\mat{Q}(\theta) = \begin{bmatrix}
			\cos(\theta) & -\sin(\theta)\\
			\sin(\theta) & \cos(\theta)
		\end{bmatrix},
	\end{equation}
	it is easy to see that the differential of the orbit map evaluated at $\theta = 0$ is constant, i.e.
	\begin{equation}
		A_{\vec{\nu}}: \mathbb{R}\to T_\nu\mathcal{M},\qquad 
		\theta \mapsto \partial_{\theta} \mat{Q}(\theta){\Big|_{_{\theta = 0}}}\!\!\!\!\!\!\vec{\nu}= \begin{bmatrix}
			0 & -1\\
			1 & 0
		\end{bmatrix}\vec{\nu}.
	\end{equation}
	It will be useful to observe that if we embed the unit circle $\mathbb{S}^1$ in $\mathbb{R}^3$ then we can consider the group $\text{SO}(2)$ as the subgroup of $\text{SO}(3)$ that leaves the $z$-axis invariant.
	Using the parametrization \eqref{eq:rodrigues} of the element of $SO(3)$ we can represent the element of $\text{SO}(2)$ as vector in $\mathbb{R}^3$, with third component equal to zero.
	Then we can compute the differential of the orbit map to obtain the infinitesimal generator of the action of $\text{SO}(3)$ on the unit circle $\mathbb{S}^1$ embedded in $\mathbb{R}^3$, as
	\begin{equation}
		A_{\vec{\nu}}: \mathbb{R}^3\to T_\nu\mathcal{M},\qquad 
		\vec{q} \mapsto \partial_{\norm{\vec{q}}} \mat{Q}(\norm{\vec{q}}, \norm{q}^{-1}\vec{q}){\Big|_{_{\norm{q} = 0}}}\!\!\!\!\!\!\vec{\nu}= \vec{q}\times \begin{bmatrix}
			\cos(\theta)\\
			\sin(\theta)\\
			0
		\end{bmatrix},
	\end{equation}
	where in this case $\theta$ is the angle of the vector $\vec{\nu}$ in $\mathbb{S}^1$.
\end{exB}
\begin{exC}[Head-tail symmetric molecules in two dimensions: Infinitesimal generators]
	\label{ex:infinitesimalGeneratorHeadTail}
    We continue \cref{ex:head-tail} and compute the infinitesimal generators of the action of $\text{SO}(2)$ on the real projective space $\mathbb{RP}^1$.
	Again considering $SO(2)$ as the subgroup of $SO(3)$ that leaves the $z$-axis invariant, we can parametrize element of $SO(2)$ as a vector in $\mathbb{R}^3$ with third component equal to zero, as in the previous Example.
	We then observe that if we denote by $\mathcal{A}$ the group action of $SO(2)$ on $\mathbb{S}^1$ introduced in \cref{ex:2D}, and by $\mathcal{A}^{\text{ht}}$ the group action of $SO(2)$ on $\mathbb{RP}^1$ introduced in \cref{ex:head-tail}, the following identity holds:
	\begin{equation}
		\mathcal{A}^{\text{ht}}:\mathbb{R}^3\times \mathcal{M}\to \mathcal{M}, \qquad (\vec{q}, \nu) \mapsto \mathcal{A}(\vec{q}, \vec{\nu})\otimes \mathcal{A}(\vec{q},\vec{\nu}).
	\end{equation}
	Thus, we see that the infinitesimal generators $A_{\nu}^{\text{ht}}$ of the action of $SO(2)$ on $\mathbb{RP}^1$ are the third order tensors obtained by
	\begin{equation}
		A_{\nu}^{\text{ht}}: \mathbb{R}^3 \to T_{\nu}\mathcal{M},\qquad
		\vec{q} \mapsto A_{\nu} \vec{q} \otimes\vec{\nu} + \vec{\nu} \otimes A_{\nu} \vec{q}.
	\end{equation}
	Notice that if we parametrize the element of $\mathbb{RP}^1$ as in \cref{ex:head-tail} we can rewrite the previous expression as
	\begin{equation}
		A_{\nu}^{\text{ht}}: \mathbb{R}^3\to T_{\nu}\mathcal{M},\quad \vec{q} \mapsto (\vec{q}\times \vec{\nu})\otimes \vec{\nu} + \vec{\nu}\otimes (\vec{q}\times \vec{\nu}),
	\end{equation}
	thus we can compute $A_{\nu}^{\text{ht}}\vec{q}$ as
	\begin{equation}
		A_{\nu}^{\text{ht}}\vec{q}=\left(\vec{\nu}\times \begin{bmatrix}
			\cos(\theta)\\
			\sin(\theta)\\
			0
		\end{bmatrix}\right)\otimes \begin{bmatrix}
			\cos(\theta)\\
			\sin(\theta)\\
			0
		\end{bmatrix} + \begin{bmatrix}
			\cos(\theta)\\
			\sin(\theta)\\
			0
		\end{bmatrix}\otimes \left(\vec{\nu}\times \begin{bmatrix}
			\cos(\theta)\\
			\sin(\theta)\\
			0
		\end{bmatrix}\right).
	\end{equation}
\end{exC}

Wherever orbits of groups arise, stabilizers arise soon after.
For a fixed order parameter $\nu\in \mathcal{M}$, the stabilizer of $\nu$ under the action of $\text{SO}(d)$ is the subgroup
\begin{equation}
	\text{SO}(d)_{\nu} \coloneqq \left\{{Q}\in \text{SO}(d) \;:\; \mathcal{A}(\mat{Q}, \nu) = \nu\right\}.
\end{equation}
\begin{definition}
We say that the action $\mathcal{A}$ is free if the stabilizer of any order parameter is the trivial subgroup generated by the identity.
\end{definition}
\begin{remark}
	\label{rem:free}
	If the action $\mathcal{A}$ is free, the infinitesimal generators $A_\nu$ are injective \cite[Chapter 7, p.~147]{lee} which means that all rotations have an effect on any order parameter.
\end{remark}

\subsection{Symmetries and conserved quantities}
\label{sec:symmetry}
In this section we investigate the symmetries of the microscopic interactions via Noether's theorem for a generic order parameter manifold $(\mathcal{M}, \mathcal{A})$.

The state of a single microscopic constituent of the fluid under investigation can be described via a Lagrangian description.
To this aim, we introduce the generalized coordinates $\pt{x}$ and $\nu$ to describe the position and the ordering state of the constituent and the respective generalized velocities $\vec{v}$ and $\dot{\vec{\nu}}$.
Under the hypothesis that any constraint acting on the constituent is holonomic, and any active force is conservative, the dynamics of the constituent can be described by the Euler--Lagrange equations
\begin{equation}
	\label{eq:EL}
	\frac{d}{dt}\left(\frac{\partial \mathcal{L}}{\partial \dot{q}}\right) - \frac{\partial \mathcal{L}}{\partial q} = {0}, \quad q = \left(\pt{x}, \nu\right),
\end{equation}
where $\mathcal{L}$ is the Lagrangian of the constituent. We focus our attention on the microscopic constituents for which $\mathcal{L}$ can be decomposed as
\begin{equation}
	\label{eq:decomposition}
	\mathcal{L} = \mathcal{L}_0(\vec{v}) + \mathcal{L}_1(\nu, \vec{\dot{\nu}}), \qquad \mathcal{L}_0(\vec{v}) \coloneqq \frac{1}{2}m(\vec{v}\cdot\vec{v}),\qquad \mathcal{L}_1(\nu, \vec{\dot{\nu}}) \coloneqq \frac{1}{2}\vec{\dot{\nu}}\cdot \mat{B}(\nu) \vec{\dot{\nu}},
\end{equation}
where $m$ is the mass of the constituent, and $\mat{B}(\nu)$ is the generalized inertia matrix, a positive definite matrix.
\rev{
	In particular, through the rest of the paper we will assume that if $\mat{B}$ depends on $\nu$, then $\mat{B}$ transforms objectively under the action of $SO(d)$, i.e.
	\begin{equation}
		\label{eq:covariant}
		\mat{B}(\mathcal{A}(\mat{Q}, \nu)) = \mat{Q}\mat{B}(\nu)\mat{Q}^T, \qquad \Forall \mat{Q}\in SO(d), \Forall \nu\in \mathcal{M}.
	\end{equation} 
}
We postulate that the Lagrangian satisfies the frame indifference constraint
\begin{equation}
	\label{eq:frameIndifference}
	\rev{
		\mathcal{L}(\mat{Q} \pt{x},
			\mat{Q}\vec{v},
			\mathcal{A}(\mat{Q},\nu),
			dA_{\mat{Q}}^{-1}(\mathcal{A}(\mat{Q},\nu))\dot{\nu}) = \mathcal{L}(\vec{v}, \dot{\nu}),
	}
\end{equation}
for any $\mat{Q}\in \text{SO}(d)$, {where $dA_{\mat{Q}}:T_\nu \mathcal{M}\to T_{\mathcal{A}(\mat{Q},\nu)}$ is the tangent map associated with the action $\mathcal{A}:SO(d)\times \mathcal{M}\to \mathcal{M}$ for a fixed $\mat{Q}\in SO(d)$.}
\begin{example}
\rev{In the setting of \Cref{ex:2D,ex:3D}, where the group action is given by $\mathcal{A}(\mat{Q}, \nu) = \mat{Q}\nu$, the frame indifference constraint \eqref{eq:frameIndifference} can be rewritten as
\begin{equation}
	\label{eq:frameIndifference2}
	\mathcal{L}(\mat{Q} \pt{x},
	\mat{Q}\vec{v},
	\mat{Q}\nu,
	\mat{Q}\vec{\dot{\nu}}) = \mathcal{L}(\pt{x},\vec{v},\nu, \dot{\nu}),
\end{equation}
since the tangent map associated with the action of $SO(d)$ on $\mathbb{S}^d$ is given by $dA_{\mat{Q}}(\nu) = \mat{Q}$, thus reducing the right hand side of \eqref{eq:frameIndifference} to the pull-back of Lagrangian by the action of $SO(d)$ on $\mathbb{R}\times \mathcal{M}$.
}
\end{example}
In particular, we observe that the frame indifference constraint is satisfied by $\mathcal{L}_0$, while for $\mathcal{L}_1$ Capriz expresses the frame indifference constraint as~\cite[(5.11)]{carpiz}
\begin{equation}
	\label{eq:frameIndifferenceLagrangian}
	\frac{\partial\mathcal{L}_1(\nu, \vec{\dot{\nu}})}{\partial \mat{Q}}\coloneqq {A^T_{\nu}}{\frac{\partial \mathcal{L}_1}{\partial \nu}} + {\left(\frac{\partial A_{\nu}}{\partial \nu}\vec{\dot{\nu}}\right)^T}{\frac{\partial \mathcal{L}_1}{\partial \vec{\dot{\nu}}}} = \vec{0}.
\end{equation}
\rev{Thus the frame indifference constraint for $\mathcal{L}_1$ has been expressed as the constraint that the Lie derivative of $\mathcal{L}_1$ along the infinitesimal generators of the action $\mathcal{A}$ is zero. An in-depth discussion of \eqref{eq:frameIndifferenceLagrangian} can be found in \cref{sec:frameIndifference} and in \cite{carpiz}.}
\begin{proposition}
	\label{prop:constant}
	If the group action $\mathcal{A}$ is transitive and $\mathcal{L}_1$ in \eqref{eq:decomposition} is frame indifferent, then the matrix $\mat{B}(\nu)$ is a constant.
\end{proposition}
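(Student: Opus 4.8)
The plan is to read the frame--indifference hypothesis as the statement that the kinetic quadratic form $\dot{\vec{\nu}}\mapsto\tfrac12\,\dot{\vec{\nu}}\cdot\mat{B}(\nu)\dot{\vec{\nu}}$ is invariant under the natural lift of the $\text{SO}(d)$--action to the tangent bundle of $\mathcal{M}$, and then to exploit transitivity to transport this single piece of data from one reference point to all of $\mathcal{M}$.

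First I would upgrade the infinitesimal frame--indifference identity \eqref{eq:frameIndifferenceLagrangian} to its finite form. Differentiating $s\mapsto\mathcal{L}_1\bigl(\mathcal{A}(\exp(s\xi),\nu),\,dA_{\exp(s\xi)}(\nu)\dot{\vec{\nu}}\bigr)$ and using the definition of the infinitesimal generator $A_\nu$ shows that \eqref{eq:frameIndifferenceLagrangian} is exactly the vanishing of this derivative for every $\xi\in\mathfrak{so}(d)$; by the cocycle property of the action the derivative then vanishes at every $s$, so the map is constant, and since the one--parameter subgroups generate the connected group $\text{SO}(d)$ we obtain $\mathcal{L}_1\bigl(\mathcal{A}(\mat{Q},\nu),dA_{\mat{Q}}(\nu)\dot{\vec{\nu}}\bigr)=\mathcal{L}_1(\nu,\dot{\vec{\nu}})$ for all $\mat{Q}\in\text{SO}(d)$, $\nu\in\mathcal{M}$, $\dot{\vec{\nu}}\in T_\nu\mathcal{M}$. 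Because $\dot{\vec{\nu}}$ is arbitrary in $T_\nu\mathcal{M}$ and $\mat{B}(\nu)$ is symmetric, polarization turns this into the pullback identity
\[
 dA_{\mat{Q}}(\nu)^{T}\,\mat{B}\bigl(\mathcal{A}(\mat{Q},\nu)\bigr)\,dA_{\mat{Q}}(\nu)=\mat{B}(\nu),
\]
valid as an identity of bilinear forms on $T_\nu\mathcal{M}$; in other words $\mat{B}$ is an $\text{SO}(d)$--invariant Riemannian metric on $\mathcal{M}$.

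Next I would use transitivity. Fix a reference order parameter $\nu_0\in\mathcal{M}$. For an arbitrary $\nu\in\mathcal{M}$ transitivity supplies $\mat{Q}\in\text{SO}(d)$ with $\mathcal{A}(\mat{Q},\nu_0)=\nu$; since $\mathcal{A}(\mat{Q},\cdot)$ is a diffeomorphism of $\mathcal{M}$ with inverse $\mathcal{A}(\mat{Q}^{-1},\cdot)$, the tangent map $dA_{\mat{Q}}(\nu_0)\colon T_{\nu_0}\mathcal{M}\to T_\nu\mathcal{M}$ is a linear isomorphism, and evaluating the pullback identity at $\nu_0$ gives $\mat{B}(\nu)=dA_{\mat{Q}}(\nu_0)^{-T}\mat{B}(\nu_0)\,dA_{\mat{Q}}(\nu_0)^{-1}$. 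Thus $\mat{B}$ carries no freedom beyond the single positive definite matrix $\mat{B}(\nu_0)$. To convert this into genuine constancy of the matrix in the fixed parametrization I would feed the stabilizer $\text{SO}(d)_{\nu_0}$ into the pullback identity: for $\mat{Q}\in\text{SO}(d)_{\nu_0}$ it reads $dA_{\mat{Q}}(\nu_0)^{T}\mat{B}(\nu_0)\,dA_{\mat{Q}}(\nu_0)=\mat{B}(\nu_0)$, so $\mat{B}(\nu_0)$ is invariant under the isotropy representation on $T_{\nu_0}\mathcal{M}$; for the order parameter manifolds in play this representation acts irreducibly (for $\mathbb{S}^2$ the stabilizer of a point is a copy of $\text{SO}(2)$ acting on $\mathbb{R}^2\cong T_{\nu_0}\mathbb{S}^2$), which forces $\mat{B}(\nu_0)=\lambda\mat{I}$ for some $\lambda>0$, and then the transport formula together with the orthogonality of the $dA_{\mat{Q}}$ in the chosen frame yields $\mat{B}(\nu)\equiv\lambda\mat{I}$. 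When $\dim\mathcal{M}=1$ (as in \cref{ex:2D} and \cref{ex:head-tail}) the stabilizer step is vacuous: $\mat{B}$ is then a positive scalar and the pullback identity combined with transitivity makes it constant directly.

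I expect the main obstacle to be this last step, namely pinning down the precise sense in which the $\text{SO}(d)$--invariant metric $\mat{B}$ is a \emph{constant matrix} relative to the globally fixed parametrization of \cref{def:orderParameterManifold}: this is where one genuinely needs the structure of the admissible order parameter manifolds rather than transitivity alone, and where one must verify that $dA_{\mat{Q}}$ is represented by an orthogonal matrix. By comparison, upgrading \eqref{eq:frameIndifferenceLagrangian} to its finite form and carrying out the transitivity transport are routine.
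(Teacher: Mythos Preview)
Your route is genuinely different from the paper's. The paper never integrates to the finite action: it stays with the infinitesimal constraint \eqref{eq:frameIndifferenceLagrangian}, sets $\vec{\dot{\nu}}=0$ to drop the second term and isolate $A_\nu^{T}\,\partial_\nu\mathcal{L}_1=\vec{0}$, and then argues that transitivity makes the infinitesimal generator $A_\nu:\mathfrak{so}(d)\to T_\nu\mathcal{M}$ surjective (so that $A_\nu^{T}$ is injective), deducing $\partial_\nu\mathcal{L}_1=\vec{0}$ for every $\nu$; in the fixed parametrization this says the coordinate derivative of $\mat{B}$ vanishes, and $\mat{B}$ is literally a constant matrix. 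This two--line argument entirely sidesteps the obstacle you flag at the end. By contrast, you integrate to the finite pullback identity and correctly recognise $\mat{B}$ as an $\text{SO}(d)$--invariant Riemannian metric on the homogeneous space $\mathcal{M}$; this is the geometrically natural reading and connects to the standard theory of invariant metrics, but --- as you honestly acknowledge --- it only yields literal constancy of the coordinate matrix after further input (irreducibility of the isotropy representation on $T_{\nu_0}\mathcal{M}$ and orthogonality of $dA_{\mat{Q}}$ in the chosen frame), and those are features of the specific running examples rather than consequences of transitivity alone. So the paper's direct infinitesimal route is what closes the argument under the stated hypotheses; your geometric route is more illuminating about \emph{what} $\mat{B}$ is, but needs the extra structure to land.
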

\begin{proof}
	\rev{
	Let us consider two different order parameters $\nu_1, \nu_2\in \mathcal{M}$ and observe that since the group action $\mathcal{A}$ is transitive, there always exists a rotation $\mat{Q}\in \text{SO}(d)$ such that $\nu_2 = \mathcal{A}(\mat{Q}, \nu_1)$.
	Then, using the frame indifference constraint for $\mathcal{L}_1$ we can write
	\begin{equation}
		\mathcal{L}_1(\nu_2, \vec{\dot{\nu}}) = \mathcal{L}_1(\mathcal{A}(\mat{Q}, \nu_1), \vec{\dot{\nu}}) = \mathcal{L}_1(\nu_1, \vec{\dot{\nu}}).
	\end{equation}
	Thus expanding the definition of $\mathcal{L}_1$ we can write
	\begin{equation}
		\vec{\dot{\nu}} \cdot \mat{B}(\nu_1) \vec{\dot{\nu}} = \vec{\dot{\nu}} \cdot \mat{B}(\nu_2) \vec{\dot{\nu}}.
	\end{equation}
	Since we can choose $\vec{\dot{\nu}}$ arbitrarily, we conclude that $\mat{B}(\nu_1) = \mat{B}(\nu_2)$, thus proving the claim.
	}
\end{proof}
\rev{
	Notice that in the setting of \cref{ex:2D} and \cref{ex:3D}, the previous theorem applies because $\mat{B}(\nu)$ is a scalar multiple of the identity matrix and we chose as conjugate variable $\vec{\dot{\nu}}$ the time derivative of a vector.
	If we would have expressed the Lagrangian $\mathcal{L}_1$ in terms of the angular velocity $\vec{\omega}$ and the inertia tensor $\mat{\mathbb{I}}(\nu)$, the previous theorem would not have applied since keeping fixed $\omega$ and changing $\nu$ would have changed the value of the Lagrangian, hence we could not have applied the frame indifference constraint to conclude that $\mat{\mathbb{I}}(\nu)$ is constant.
}
\begin{corollary}
\label{cor:diagonal}
Under the hypotheses of Proposition \ref{prop:constant}, there exists a change of coordinates for the tangent space to the manifold $\mathcal{M}$ such that the matrix $\mat{B}(\nu)$ is constant and diagonal.
\end{corollary}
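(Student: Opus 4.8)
The plan is to use Proposition~\ref{prop:constant} to reduce the statement to a purely linear-algebraic fact about a single constant matrix, and then invoke the spectral theorem. First I would note that, under the hypotheses in force, Proposition~\ref{prop:constant} gives $\mat{B}(\nu) \equiv \mat{B}$, a matrix independent of $\nu$. By the standing assumption in~\eqref{eq:decomposition}, $\mat{B}$ is positive definite; and since the kinetic term $\mathcal{L}_1(\nu,\vec{\dot{\nu}}) = \frac{1}{2}\vec{\dot{\nu}}\cdot\mat{B}\vec{\dot{\nu}}$ only depends on the symmetric part of $\mat{B}$, there is no loss of generality in taking $\mat{B}$ symmetric. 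Thus $\mat{B}$ is a symmetric positive definite $\iota\times\iota$ matrix.

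Next I would apply the spectral theorem: there exists an orthogonal matrix $\mat{O}$ such that $\mat{O}^T\mat{B}\,\mat{O} = \mat{D} = \diag(\lambda_1,\dots,\lambda_\iota)$, with each $\lambda_j > 0$ since $\mat{B}$ is positive definite. Because the fixed parametrization of $\mathcal{M}$ identifies every tangent space $T_\nu\mathcal{M}$ with $\mathbb{R}^\iota$, the linear map $\vec{\dot{\nu}} \mapsto \mat{O}^T\vec{\dot{\nu}}$ furnishes a change of frame on the tangent spaces, and in the new coordinates one has $\mathcal{L}_1(\nu,\vec{\dot{\nu}}) = \frac{1}{2}\vec{\dot{\nu}}\cdot\mat{D}\vec{\dot{\nu}}$, so that the generalized inertia matrix becomes the constant diagonal matrix $\mat{D}$, as claimed. (If one further wishes $\mat{O}\in\text{SO}(\iota)$, reorder the eigenvectors or flip the sign of one of them.)

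The only point that genuinely needs care — and which I expect to be the heart of the argument rather than any computation — is the uniformity in $\nu$. A priori a linear change of tangent coordinates could depend on the base point, in which case it would fail to be a legitimate global change of frame and could also destroy the constancy of the inertia matrix while making it diagonal pointwise. It is exactly here that the constancy established in Proposition~\ref{prop:constant} is indispensable: since $\mat{B}$, and hence its eigendecomposition, does not vary with $\nu$, the single orthogonal matrix $\mat{O}$ works simultaneously at every $\nu\in\mathcal{M}$ and is compatible across the fixed coordinate charts. Everything else is the elementary spectral theorem.
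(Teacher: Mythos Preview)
Your argument is correct and follows essentially the same route as the paper: invoke Proposition~\ref{prop:constant} to obtain a constant (symmetric, positive definite) $\mat{B}$, then apply the spectral theorem to diagonalize it via an orthogonal change of basis on the tangent space. Your explicit discussion of why a single, $\nu$-independent change of frame suffices is a welcome clarification, but the underlying idea is identical.
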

\begin{proof}
	We begin diagonalising the matrix $\mat{B}$, i.e.~we find a change of basis $\mat{V}$ for the tangent space to the manifold $\mathcal{M}$ such that $\mat{V}^T\mat{B}(\nu)\mat{V} = \mat{\Lambda}$, where $\mat{\Lambda}$ is a diagonal matrix.
	We conclude by observing that since the matrix is symmetric positive definite, the diagonal elements of $\mat{\Lambda}$ are positive and $\mat{V}$ is an orthogonal matrix, therefore it induces a well-defined change of coordinates $\cdot \mapsto V\cdot$.
\end{proof}

Let us now study a system formed by two constituents with the aim of understanding the microscopic two body interactions.
We describe the configuration of each constituent via the generalized coordinates $\pt{x}_i$, $\nu_i$ and the respective generalized velocities $\vec{v}_i$, $\vec{\dot{\nu}_i}$.
Furthermore, we will assume that the interactions between the two constituents are conservative and fully described by a potential of the form $\mathcal{W}(\abs{\pt{x}_1-\pt{x}_2}, \nu_1, \nu_2)$.
Once again under the hypothesis that all constraints acting on the constituents are holonomic, we can describe the dynamics of the system by the Euler--Lagrange equations \eqref{eq:EL} with $q_{1,2} = \left(\pt{x}_1, \pt{x}_2, \nu_1,\nu_2\right)$ and the Lagrangian
\begin{equation}
	\label{eq:lagrangian}
	\mathcal{L}(q_{1,2},{\dot{q}}_{1,2}) = -\mathcal{W}(\abs{\pt{x}_1-\pt{x}_2}, \nu_1, \nu_2)+\sum_{i=1}^2\left(\mathcal{L}_{0,i}(\vec{v}_i)+\mathcal{L}_{1,i}(\nu_i, \vec{\dot{\nu}}_i) \right) ,
\end{equation}
where
\begin{equation}
    \label{eq:afterlagrangian}
	\begin{alignedat}{2}
		\mathcal{L}_{0,1}(\vec{v}_1) &\coloneqq \frac{1}{2}m_1(\vec{v}_1\cdot\vec{v}_1),\qquad\qquad
		\mathcal{L}_{0,2}(\vec{v}_2) \coloneqq \frac{1}{2}m_2(\vec{v}_2\cdot\vec{v}_2),\\
		\mathcal{L}_{1,1}(\nu_1, \vec{\dot{\nu}}_1) &\coloneqq \frac{1}{2}\vec{\dot{\nu}}_1\cdot \mat{B}_1(\nu_1) \vec{\dot{\nu}}_1,\;\;\;\;\;\;
		\mathcal{L}_{1,2}(\nu_2, \vec{\dot{\nu}}_2) \coloneqq \frac{1}{2}\vec{\dot{\nu}}_2\cdot \mat{B}_2(\nu_2) \vec{\dot{\nu}}_2.
	\end{alignedat}
\end{equation}
We now proceed to study the effect of the rotational symmetry and the frame indifference constraint of the Lagrangian $\mathcal{L}$ on the dynamics of the system, via Noether's theorem.
Noether's theorem states that if a Lagrangian $\mathcal{L}$ is invariant under a group action with infinitesimal generators $G$, then
\begin{equation}
	\label{eq:Noether}
	\frac{d}{dt}\left(\frac{\partial \mathcal{L}}{\partial {\dot{q}}_{1,2}}\cdot {G}\right)=0, \qquad q_{1,2} = \left(\pt{x}_{1}, \pt{x}_{2}, \nu_{1}, \nu_{2}\right),
\end{equation}
where
\begin{equation}
\frac{\partial \mathcal{L}}{\partial {\dot{q}}_{1,2}} = (m_1 \vec{v}_1, m_2 \vec{v}_2, \mat{B}_1(\nu_1) \vec{\dot{\nu}}_1, \mat{B}_2(\nu_2) \vec{\dot{\nu}}_2).
\end{equation}
\begin{proposition}
	\label{prop:linear}
	The total linear momentum of the system composed of two constituents, defined as 
	\begin{equation}
		\label{eq:conservation_linear}
		\vec{p}_1 + \vec{p}_2, \quad \vec{p}_i \coloneqq m_i\vec{v}_i,
	\end{equation}
	is preserved by any two body interaction whose potential is of the form $\mathcal{W}(\abs{\pt{x}_1-\pt{x}_2}, \nu_1, \nu_2)$.
\end{proposition}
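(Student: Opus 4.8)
The plan is to obtain the conservation law as a direct application of Noether's theorem \eqref{eq:Noether}, using the one-parameter group of simultaneous spatial translations of the two constituents. Fix a direction $\vec{c}\in\mathbb{R}^d$ and consider the family of transformations $\pt{x}_i \mapsto \pt{x}_i + \epsilon\vec{c}$ for $i=1,2$, leaving $\nu_1,\nu_2$ and all generalized velocities unchanged. The infinitesimal generator of this action, in the coordinates $q_{1,2} = (\pt{x}_1,\pt{x}_2,\nu_1,\nu_2)$, is the constant vector $G = (\vec{c},\vec{c},\vec{0},\vec{0})$.

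First I would verify that the Lagrangian \eqref{eq:lagrangian} is invariant under this action. The kinetic terms $\mathcal{L}_{0,i}(\vec{v}_i)$ depend only on the velocities, which are untouched; the rotational terms $\mathcal{L}_{1,i}(\nu_i,\vec{\dot{\nu}}_i)$ depend only on $\nu_i$ and $\vec{\dot{\nu}}_i$, also untouched. The only term involving the positions is the interaction potential $\mathcal{W}(\abs{\pt{x}_1-\pt{x}_2},\nu_1,\nu_2)$, and since $(\pt{x}_1+\epsilon\vec{c}) - (\pt{x}_2+\epsilon\vec{c}) = \pt{x}_1-\pt{x}_2$, the argument $\abs{\pt{x}_1-\pt{x}_2}$ is invariant. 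Hence $\mathcal{L}$ is invariant for every $\vec{c}\in\mathbb{R}^d$.

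Then I would invoke Noether's theorem \eqref{eq:Noether} together with the expression for $\partial\mathcal{L}/\partial{\dot{q}}_{1,2}$ recorded above, which gives
\[
\frac{d}{dt}\left( \frac{\partial \mathcal{L}}{\partial {\dot{q}}_{1,2}}\cdot G \right) = \frac{d}{dt}\big( m_1\vec{v}_1\cdot\vec{c} + m_2\vec{v}_2\cdot\vec{c} \big) = 0 .
\]
Since this holds for every $\vec{c}\in\mathbb{R}^d$, we conclude $\tfrac{d}{dt}(\vec{p}_1+\vec{p}_2) = \vec{0}$, which is the claimed conservation of total linear momentum.

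There is no substantive obstacle in this argument; the only point needing a modicum of care is checking that simultaneous translation genuinely is a symmetry of the full Lagrangian, and in particular that the potential depends on the positions only through their difference, after which the result is an immediate consequence of the already-stated Noether identity. If one prefers to avoid citing Noether's theorem, the same conclusion follows by adding the two position components of the Euler--Lagrange equations \eqref{eq:EL} and using $\partial\mathcal{W}/\partial\pt{x}_1 = -\,\partial\mathcal{W}/\partial\pt{x}_2$ (valid because $\mathcal{W}$ depends on $\pt{x}_1,\pt{x}_2$ only via $\abs{\pt{x}_1-\pt{x}_2}$), so that the internal forces cancel and $\tfrac{d}{dt}(m_1\vec{v}_1 + m_2\vec{v}_2) = \vec{0}$.
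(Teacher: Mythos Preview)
Your proof is correct and follows essentially the same approach as the paper: invoke Noether's theorem \eqref{eq:Noether} with the infinitesimal generator of spatial translations, after observing that the Lagrangian depends on positions only through $\abs{\pt{x}_1-\pt{x}_2}$ and that translations leave the order parameters untouched. Your version is in fact slightly more explicit (working with an arbitrary direction $\vec{c}$ and then varying it) than the paper's, which simply writes $G=(1,1,0,0)$ and concludes directly.
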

\begin{proof}
	We first notice that the potential $\mathcal{W}$ is radially symmetric. Therefore, as discussed in the previous section, translations have no effect on the order parameters.
	This is equivalent to the statement that the Lagrangian $\mathcal{L}$ is invariant under the group action of translations, which have $G=\left(1,1,0,0\right)$ as the infinitesimal generator.
	Thus, \eqref{eq:Noether} implies
	\begin{equation}
		\frac{d}{dt}\left(\vec{p}_1 + \vec{p}_2\right) = 0.
	\end{equation}
\end{proof}

The next result is crucial. It proves that microscopic interaction preserve a generalized notion of angular momentum, which is fully determined by the structure of the manifold.
	We will denote by $A_{\pt{x}}$ the infinitesimal generators of the canonical action of the group $\text{SO}(d)$ on the Euclidean space $\mathbb{E}^d$, evaluated at $\pt{x}$.
	Thus in three dimensions $A_{\pt{x}}$ is given by
	\begin{equation}
		A_{\pt{x}} : SO(3) \to T_\pt{x}\mathcal{M}, \qquad \vec{q}\mapsto \vec{q}\times \pt{x},
	\end{equation}
	where $\vec{q}$ is the rotation vector representing the element of $SO(d)$ that $A_{\pt{x}}$ takes as argument.
	In two dimensions $A_{\pt{x}}$ reduces to
	\begin{equation}
		A_{\pt{x}} : SO(2) \to T_\pt{x}\mathcal{M}, \qquad \vec{q}\mapsto \vec{q}\times
		\begin{bmatrix}
		x^{(x)} \\ x^{(y)} \\ 0
		\end{bmatrix}.
	\end{equation}
\begin{remark}
	In the particular case of \cref{ex:infinitesimalGenerator3D} and \cref{ex:infinitesimalGenerator2D} the infinitesimal generator of $\mathcal{A}$ and the infinitesimal generators of the canonical action of the group $\text{SO}(d)$ on the Euclidean space $\mathbb{E}^d$ are the same.
	Notice that in general $A_{\pt{x}}$ and $A_\nu$ are not the same operator, since the action $\mathcal{A}$ might differ from the canonical action of $\text{SO}(d)$ on $\mathbb{E}^d$ as in \cref{ex:infinitesimalGeneratorHeadTail}.
\end{remark}
\begin{proposition}
	\label{prop:angular}
	The total angular momentum of the system composed of two constituents, defined as
	\begin{equation}
		\label{eq:conservation_angular}
		\rev{
		    A_{\pt{x}} \mat{Q}\cdot \vec{p}_1
		  + A_{\pt{x}} \mat{Q}\cdot \vec{p}_2
		  + A_\nu \mat{Q}\cdot \varsigma_1
		  + A_\nu \mat{Q}\cdot \varsigma_2, \qquad \varsigma_i \coloneqq \mat{B}_i(\nu_i)\vec{\dot{\nu}}_i,
		}
	\end{equation}
	is preserved by any two body interaction whose potential $\mathcal{W}(\abs{\pt{x}_1-\pt{x}_2}, \nu_1, \nu_2)$ is frame indifferent.
\end{proposition}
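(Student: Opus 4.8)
The plan is to invoke Noether's theorem in the form \eqref{eq:Noether}, applied to the one-parameter family of \emph{rigid rotations acting simultaneously on both constituents}. First I would fix an arbitrary rotation vector $\vec{q}$, generating the one-parameter subgroup $s \mapsto \mat{Q}(s) \in \text{SO}(d)$, and let it act on the two-body configuration by $\pt{x}_i \mapsto \mat{Q}(s)\pt{x}_i$ and $\nu_i \mapsto \mathcal{A}(\mat{Q}(s), \nu_i)$, together with the induced actions on the generalized velocities obtained by differentiating these relations along a trajectory (so that $\vec{v}_i \mapsto \mat{Q}(s)\vec{v}_i$ and $\vec{\dot{\nu}}_i \mapsto dA_{\mat{Q}(s)}(\nu_i)\vec{\dot{\nu}}_i$).

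Second, I would verify that the Lagrangian \eqref{eq:lagrangian} is invariant under this family. The translational kinetic terms $\mathcal{L}_{0,i}$ are invariant because rotations preserve $\abs{\vec{v}_i}$; the rotational kinetic terms $\mathcal{L}_{1,i}$ are invariant precisely because of the frame indifference constraint \eqref{eq:frameIndifference} (equivalently, its differential form \eqref{eq:frameIndifferenceLagrangian}); and the interaction term is invariant because $\abs{\mat{Q}(s)\pt{x}_1 - \mat{Q}(s)\pt{x}_2} = \abs{\pt{x}_1 - \pt{x}_2}$, together with the assumed frame indifference of $\mathcal{W}$, i.e.\ $\mathcal{W}(r, \mathcal{A}(\mat{Q},\nu_1), \mathcal{A}(\mat{Q},\nu_2)) = \mathcal{W}(r, \nu_1, \nu_2)$ for all $\mat{Q} \in \text{SO}(d)$.

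Third, I would read off the infinitesimal generator of the action by differentiating at $s = 0$. The position components give $A_{\pt{x}_i}\vec{q}$ by the definition of the infinitesimal generators of the canonical action of $\text{SO}(d)$ on $\mathbb{E}^d$, and the order-parameter components give $A_{\nu_i}\vec{q}$, since $A_{\nu_i}$ is by definition the differential at the identity of the orbit map $\mat{Q} \mapsto \mathcal{A}(\mat{Q}, \nu_i)$. Hence the generator is $G = (A_{\pt{x}_1}\vec{q}, A_{\pt{x}_2}\vec{q}, A_{\nu_1}\vec{q}, A_{\nu_2}\vec{q})$, and since $\partial\mathcal{L}/\partial \dot{q}_{1,2} = (\vec{p}_1, \vec{p}_2, \mat{B}_1(\nu_1)\vec{\dot{\nu}}_1, \mat{B}_2(\nu_2)\vec{\dot{\nu}}_2)$, the conserved quantity furnished by \eqref{eq:Noether} is exactly \eqref{eq:conservation_angular} paired with $\vec{q}$. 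As $\vec{q} \in \mathfrak{so}(d)$ is arbitrary, the whole $\mathfrak{so}(d)$-valued quantity \eqref{eq:conservation_angular} is conserved.

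The step I expect to be the main obstacle is the invariance of $\mathcal{L}_1$ under the full (finite) one-parameter group: what is directly available from \cite{carpiz} is the differential identity \eqref{eq:frameIndifferenceLagrangian}, and one must argue that it integrates to the finite statement \eqref{eq:frameIndifference}, while keeping careful track of how $\vec{\dot{\nu}}_i$ transforms under the tangent map $dA_{\mat{Q}(s)}$; the discussion in \cref{sec:frameIndifference} is what I would lean on here. A lesser, purely notational point is that in \eqref{eq:conservation_angular} the symbol $\mat{Q}$ stands for an arbitrary element of the Lie algebra $\mathfrak{so}(d)$ (a rotation vector in the parametrization used for the generators), so the statement should be read as the conservation of a linear functional on $\mathfrak{so}(d)$; making this identification explicit is the only remaining bookkeeping.
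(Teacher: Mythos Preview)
Your proposal is correct and follows essentially the same route as the paper: identify the infinitesimal generator of the simultaneous rotational action as $G = (A_{\pt{x}_1}\mat{Q}, A_{\pt{x}_2}\mat{Q}, A_{\nu_1}\mat{Q}, A_{\nu_2}\mat{Q})$, observe that frame indifference of $\mathcal{L}$ (including the potential $\mathcal{W}$) yields invariance, and apply Noether's theorem \eqref{eq:Noether}. The paper's proof is considerably terser---it simply asserts that the frame indifference constraint \eqref{eq:frameIndifference} can be recast as the Noether invariance with this $G$ and reads off the conservation law---so your more careful decomposition of the invariance into the three pieces $\mathcal{L}_{0,i}$, $\mathcal{L}_{1,i}$, $\mathcal{W}$, and your explicit remark about interpreting $\mat{Q}$ as an element of $\mathfrak{so}(d)$, are welcome elaborations rather than deviations.
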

\begin{proof}
	The frame indifference constraint \eqref{eq:frameIndifference} can be recast as \eqref{eq:Noether} where 
	\begin{equation}
		G = \left(A_{\pt{x}}\mat{Q}, A_{\pt{x}}\mat{Q}, A_\nu\mat{Q}, A_\nu\mat{Q}\right),
	\end{equation} 
	for any element of $\mat{Q}\in SO(d)$.
	Thus, \eqref{eq:Noether} implies	
	\begin{equation}
		\frac{d}{dt}\left(
		    A_{\pt{x}} \mat{Q}\cdot \vec{p}_1
		  + A_{\pt{x}} \mat{Q}\cdot \vec{p}_2
		  + A_\nu \mat{Q}\cdot \mat{B}_1(\nu_1)\vec{\dot{\nu}}_1
		  + A_\nu \mat{Q}\cdot \mat{B}_2(\nu_2)\vec{\dot{\nu}}_2\right) = 0.
	\end{equation}
	\rev{
	Equivalently, we can introduce the conjugate momentum to the order parameter $\nu_i$ as $\vec{\varsigma}_i \coloneqq \mat{B}_i(\nu_i)\vec{\dot{\nu}}_i$, and rewrite the previous expression as
	\begin{equation}
		\frac{d}{dt}\left(
		    A_{\pt{x}} \mat{Q}\cdot \vec{p}_1
		  + A_{\pt{x}} \mat{Q}\cdot \vec{p}_2
		  + A_\nu \mat{Q}\cdot \varsigma_1
		  + A_\nu \mat{Q}\cdot \varsigma_2\right) = 0.
	\end{equation}
	}
\end{proof}
\begin{exA}[Non-diffusing gas bubbles: Angular momentum]
	\label{ex:angular_bubbles}
	In the context of \cref{ex:bubbles} we can easily see that since the group action of $\text{SO}(d)$ on the manifold $\mathcal{M}$ is trivial, Proposition \ref{prop:angular} applies with generator $G = \left(A_{\pt{x}}, A_{\pt{x}}, 0, 0\right)$.
	Thus, Noether's theorem does not provide any additional invariants and we can ignore any angular momentum associated with the order parameter $\nu$.
	In fact, the conservation associated with the generator $A_{\pt{x}}$ is already captured by the conservation of the linear momentum \eqref{eq:conservation_linear}.
\end{exA}
\begin{exD}[Calamitic molecules in three dimensions: Angular momentum]
	\label{ex:angular}
	In the setting of \cref{ex:3D} we can further simplify Proposition \ref{prop:angular}.
	In fact, as discussed in \cref{ex:infinitesimalGenerator3D}, the infinitesimal generators of the action $\mathcal{A}$ under the group $\text{SO}(3)$ evaluated at $\vec{\nu}$ are the matrices $A_{\vec{\nu}}$ representing the cross-product with the vector $\vec{\nu}$.
	Furthermore, in this scenario the action $\mathcal{A}$ coincides with the canonical group action of $\text{SO}(3)$ on $\mathbb{E}^3$, and hence $A_{\pt{x}}$ is the matrix representing the cross-product with the position vector $\pt{x}$.
	Adopting \eqref{eq:rodrigues}, we can use a vector $\vec{q}$ to represent an element of $\text{SO}(3)$ and recast the infinitesimal generator $G$ in Proposition \ref{prop:angular} as \\
	\begin{equation}
		G = \left(\vec{q}\times \pt{x}_1, \vec{q}\times \pt{x}_2, \vec{q}\times \vec{\nu}_1, \vec{q}\times \vec{\nu}_2\right).
	\end{equation}
	Thus, following the same steps as in the proof of Proposition \ref{prop:angular}, we can conclude that
	\begin{equation}
		\frac{d}{dt}\left(
		    \vec{q}\times \pt{x}_1\cdot \vec{p}_1
		  + \vec{q}\times \pt{x}_2\cdot \vec{p}_2
		  + \vec{q}\times \vec{\nu}_1\cdot \mat{B}_1(\vec{\nu}_1)\vec{\dot{\nu}}_1
		  + \vec{q}\times \vec{\nu}_2\cdot \mat{B}_2(\vec{\nu}_2)\vec{\dot{\nu}}_2\right) = 0.
	\end{equation}
	Using the triple product identity we can rewrite the previous expression as
	\begin{equation}
		\frac{d}{dt}\left(\vec{q}\cdot \left(\pt{x}_1\times \vec{p}_1 + \pt{x}_2\times \vec{p}_2 + \vec{\nu}_1\times \mat{B}_1(\vec{\nu}_1)\vec{\dot{\nu}}_1 + \vec{\nu}_2\times \mat{B}_2(\vec{\nu}_2)\vec{\dot{\nu}}_2\right) \right) = \vec{0},
	\end{equation}
	which, since we picked $\vec{q}$ to be arbitrary, implies the conservation of
	\begin{equation}
		\label{eq:angular_momentum_3D_gen}
		\vec{\eta} \coloneqq \pt{x}_1\times \vec{p}_1 + \pt{x}_2\times \vec{p}_2 + \vec{\nu}_1\times \mat{B}_1(\vec{\nu}_1)\vec{\dot{\nu}}_1 + \vec{\nu}_2\times \mat{B}_2(\vec{\nu}_2)\vec{\dot{\nu}}_2.
	\end{equation}
	While this quantity might feel unfamiliar at first, we can observe that in the vanishing girth limit considered that the quantity \eqref{eq:angular_momentum_3D_gen} is the classical angular momentum of the segments representing the microscopic constituents.
	In fact, for calamitic molecules the classical Lagrangian is \eqref{eq:decomposition} with $\mat{B}(\vec{\nu}_i) = \mat{I}$, where $\mat{I}$ is the identity.
	Thus we can rewrite \eqref{eq:angular_momentum_3D_gen} as
	\begin{equation}
		\vec{\eta} = \pt{x}_1\times\vec{p}_1 + \vec{\nu}_1 \times \vec{\dot{\nu}}_1 + \pt{x}_2\times\vec{p}_2 + \vec{\nu}_2 \times \vec{\dot{\nu}}_2.
	\end{equation}
	Let $\vec{\omega}$ be the angular velocity of the rodlike molecule. Using the triple product identity, together with the well-known property of segment-like rigid bodies that $\dot{\vec{\nu}_i} = \vec{\omega}\times \vec{\nu}_i$ we can rewrite one term of the previous expression as
	\begin{equation}
		\vec{\nu}_i\times \dot{\vec{\nu}_i}= \vec{\nu}_i\times \vec{\omega}_i\times\vec{\nu}_i = (\vec{\nu}_i\cdot \vec{\nu}_i)\vec{\omega} - (\vec{\nu}_i\cdot \vec{\omega}_i)\vec{\nu}_i = \vec{\omega}_i - (\vec{\nu}_i\cdot \vec{\omega}_i)\vec{\nu}_i = \mat{\mathbb{I}}_i\vec{\omega},
	\end{equation}
	where we have used the fact that the inertia tensor of a segment is $\mat{\mathbb{I}}_i\coloneqq \mat{I}-\vec{\nu}_i\otimes \vec{\nu}_i$.
	This calculation recovers the classical definition of angular momentum, i.e.
	\begin{equation}
		\label{eq:calssic_angular_momentum}
		\vec{\eta} = \pt{x}_1\times \vec{p}_1 + \mat{\mathbb{I}}_1\vec{\omega}_1 + \pt{x}_2\times \vec{p}_2 + \mat{\mathbb{I}}_2\vec{\omega}_2.
	\end{equation}
\end{exD}
\begin{exB}[Calamitic molecules in two dimensions: Angular momentum]
	\label{ex:angular2D}
	As discussed in \cref{ex:infinitesimalGenerator2D}, we can consider the action of $\text{SO}(2)$ on the unit circle $\mathbb{S}^1$ embedded in $\mathbb{R}^3$ as the action of a subgroup of $\text{SO}(3)$ on $\mathbb{R}^3$.
	We can then observe that the same procedure used in \cref{ex:angular} can be used to show that the total angular momentum of the system composed of two calamitic molecules in two dimensions is given once again by \eqref{eq:calssic_angular_momentum}.
	In particular, since the $\pt{x}_i$ and $\vec{p}_i$ are two-dimensional vectors, they can be represented as vectors in $\mathbb{R}^3$ with zero third component and thus
	\begin{equation}
		\begin{bmatrix}x_i^{(x)}\\x_i^{(y)}\\0\end{bmatrix}
		\times
		\begin{bmatrix}p_i^{(x)}\\p_i^{(y)}\\0\end{bmatrix}
		= \begin{bmatrix}0\\0\\x_i^{(x)}p_i^{(y)}-x_i^{(y)}p_i^{(x)}\end{bmatrix},
	\end{equation}
	where $x_i^{(x)}$ and $x_i^{(y)}$ are the $x$ and $y$ components of $\pt{x}_i$ respectively.
	Likewise, since rotations only happen in the $xy$-plane, the angular velocity $\vec{\omega}_i$ has zero first and second component, and thus
	\begin{equation}
		\label{eq:inertia_tensor_2D}
		\mathbb{I}_i\vec{\omega}_i = 
		\begin{bmatrix}
			*&* & 0 \\
			*&* & 0 \\
			0 & 0 & I_i
		\end{bmatrix}\begin{bmatrix}0\\0\\\omega_i\end{bmatrix} = \begin{bmatrix}0\\0\\I_i\omega_i\end{bmatrix},
	\end{equation}
	where the matrix $\mathbb{I}_i$ is the inertia tensor of a segment.
	The structure of this matrix is independent of rotations, since it is obtained by changing the basis in which the inertia tensor is expressed via a rotation matrix on the $xy$-plane $Q(\theta_i)$
	\begin{equation}
		\mathbb{I}_i = \begin{bmatrix}\cos(\theta_i) & -\sin(\theta_i) & 0 \\ \sin(\theta_i) & \cos(\theta_i) & 0\\ 0 & 0 & 1\end{bmatrix}\begin{bmatrix}I_i^{(x)} & 0 \\ 0 & I_i^{(y)} \\ 0 & 0 & I_i^{(z)}\end{bmatrix}\begin{bmatrix}\cos(\theta_i) & \sin(\theta_i)&0\\ -\sin(\theta_i) & \cos(\theta_i)&0\\0 & 0 & 1\end{bmatrix}.
	\end{equation}
	Thus, we can see that the total angular momentum \eqref{eq:calssic_angular_momentum} only has a non-zero third component, and can be expressed for a two body interaction using the single scalar quantity
	\begin{equation}
		\label{eq:calssic_angular_momentum_2d}
		\eta \coloneqq x_1^{(x)}p_1^{(y)}-x_1^{(y)}p_1^{(x)} + I_1\omega_1 + x_1^{(x)}p_2^{(y)}-x_1^{(y)}p_2^{(x)} + I_2\omega_2,
	\end{equation}
	which is the classical definition of angular momentum in two dimensions.
\end{exB}
\begin{remark}
	In \eqref{eq:decomposition} we have described the Lagrangian in general form, yet it is important to notice that in the context of \cref{ex:head-tail} the elements of the tangent space $T_\nu\mathcal{M}$ are linear maps from $\mathbb{R}^2$ to $\mathbb{R}^2$. The matrix $\mat{B}(\nu)$ is thus a 4th order tensor and $\vec{\dot{\nu}}$ is an element of $\mathbb{R}^{2\times 2}$.
	We retain the same notation for the sake of simplicity and because we can think of $\vec{\dot{\nu}}$ as a vector in the vector space $\mathbb{R}^{2\times 2}$ and $\mat{B}(\nu)$ as a linear map from $\mathbb{R}^{2\times 2}$ to $\mathbb{R}^{2\times 2}$.
	In this setting the dot product $\cdot$ has to be understood as double contraction, i.e.~$C: D = \sum_{i,j} C_{ij} D_{ij}$.
\end{remark}
\begin{exC}[Head-tail symmetric molecules in two dimensions: Angular momentum]
	\label{ex:angular_head_tail}
	As discussed in \cref{ex:infinitesimalGeneratorHeadTail} we know that in the setting of \cref{ex:head-tail} the infinitesimal generators of the action $\mathcal{A}$ under the group $\text{SO}(2)$ can be computed by embedding the group $\text{SO}(2)$ in $\text{SO}(3)$.
	This process yields that the infinitesimal generators of the action $\mathcal{A}$ are the third order tensors $A_{\nu}^{\text{ht}}$, which associate to a rotation vector $\vec{q}$ the matrix $(\vec{q}\times \vec{\nu})\otimes \vec{\nu} + \vec{\nu}\otimes (\vec{q}\times \vec{\nu})$.
	Thus applying Proposition \ref{prop:angular} we can conclude that the total angular momentum of the system composed of two head-tail molecules in two dimensions is given by
	\begin{equation}
		  \vec{\eta} \coloneqq \vec{q}\times \pt{x}_1\cdot \vec{p}_1 + \vec{q}\times \pt{x}_2\cdot \vec{p}_2
		  +\sum_{i=1}^{2}(\vec{q}\times \vec{\nu}_i)\otimes \vec{\nu}_i:\mat{B}_i(\nu_i)\dot{\vec{\nu}}_i + \vec{\nu}_i\otimes (\vec{q}\times \vec{\nu}_i):\mat{B}_i(\nu_i)\vec{\dot{\nu}}_i
	\end{equation}
	and it is preserved by any two body interaction whose potential $\mathcal{W}(\abs{\pt{x}_1-\pt{x}_2}, \nu_1, \nu_2)$ is frame indifferent.
	Notice that we have also represented the position vector $\pt{x}_i$ as an element of three-dimensional Euclidean space $\mathbb{E}^3$, assuming $\pt{x}_i$ has zero third component.
\end{exC}
\begin{proposition}
	\label{prop:energy}
	Any two body interaction whose potential is of the form of $\mathcal{W}(\abs{\pt{x}_1-\pt{x}_2}, \nu_1, \nu_2)$ preserves the total energy of the system composed of two constituents, which is equal to $\mathcal{H}$, and it is defined as \\
	\begin{equation}
		\mathcal{H}\coloneqq\mathcal{W}(\abs{\pt{x}_1-\pt{x}_2}, \nu_1, \nu_2) + \sum_{i=1}^2 \left( \frac{1}{2}m_i \vec{v}_i\cdot \vec{v}_i + \frac{1}{2}\vec{\dot{\nu}}_i\cdot \mat{B}_i(\nu_i)\vec{\dot{\nu}}_i \right),
	\end{equation}
\end{proposition}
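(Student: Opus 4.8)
The plan is to exploit the fact that the Lagrangian \eqref{eq:lagrangian} is \emph{autonomous}: neither the kinetic contributions $\mathcal{L}_{0,i}$, $\mathcal{L}_{1,i}$ nor the interaction potential $\mathcal{W}(\abs{\pt{x}_1-\pt{x}_2}, \nu_1, \nu_2)$ depends explicitly on time. For such a Lagrangian the Jacobi energy integral
\begin{equation}
	E \coloneqq \frac{\partial \mathcal{L}}{\partial \dot{q}_{1,2}}\cdot \dot{q}_{1,2} - \mathcal{L}
\end{equation}
is a constant of the motion, and the whole argument reduces to two steps: (i) establishing this conservation along solutions of \eqref{eq:EL}, and (ii) identifying $E$ with $\mathcal{H}$.

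For step (i), I would differentiate $E$ in time along a solution, apply the product rule to the first term, and expand $d\mathcal{L}/dt$ by the chain rule, crucially using $\partial\mathcal{L}/\partial t = 0$. The acceleration terms cancel in pairs, leaving $\frac{dE}{dt} = \left(\frac{d}{dt}\frac{\partial\mathcal{L}}{\partial\dot{q}_{1,2}} - \frac{\partial\mathcal{L}}{\partial q_{1,2}}\right)\cdot \dot{q}_{1,2}$, which vanishes by the Euler--Lagrange equations \eqref{eq:EL}. This part is entirely standard and needs nothing beyond \eqref{eq:EL}.

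For step (ii), I would use that $\mathcal{L}_{0,i}(\vec{v}_i)$ is homogeneous of degree two in $\vec{v}_i$ and $\mathcal{L}_{1,i}(\nu_i, \vec{\dot{\nu}}_i)$ is homogeneous of degree two in $\vec{\dot{\nu}}_i$ — the dependence of $\mat{B}_i$ on $\nu_i$ being irrelevant to this scaling — while $\mathcal{W}$ carries no velocity dependence. Euler's theorem on homogeneous functions (equivalently, reading off the explicit form of $\partial\mathcal{L}/\partial\dot{q}_{1,2}$ recorded below \eqref{eq:Noether}) then gives $\frac{\partial\mathcal{L}}{\partial\dot{q}_{1,2}}\cdot\dot{q}_{1,2} = 2\sum_{i=1}^2\left(\mathcal{L}_{0,i}+\mathcal{L}_{1,i}\right)$. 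Subtracting $\mathcal{L} = -\mathcal{W} + \sum_{i=1}^2(\mathcal{L}_{0,i}+\mathcal{L}_{1,i})$ yields $E = \mathcal{W} + \sum_{i=1}^2(\mathcal{L}_{0,i}+\mathcal{L}_{1,i})$, which is exactly the claimed expression for $\mathcal{H}$ once $\mathcal{L}_{0,i}+\mathcal{L}_{1,i}$ is written out using \eqref{eq:afterlagrangian}.

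There is no genuine obstacle here: this is the textbook statement that the Hamiltonian of a time-independent Lagrangian is conserved. The only point deserving care is bookkeeping of the manifold and tensor structure — in the head-tail symmetric setting of \cref{ex:head-tail} the velocity $\vec{\dot{\nu}}_i$ lives in $\mathbb{R}^{2\times 2}$, $\mat{B}_i(\nu_i)$ is a fourth-order tensor, and $\cdot$ denotes double contraction — but the degree-two homogeneity argument is insensitive to this, so the identification $E = \mathcal{H}$ carries over verbatim.
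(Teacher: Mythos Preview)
Your proposal is correct and follows essentially the same approach as the paper: the paper's proof simply invokes the textbook result (citing \cite[Theorem 8.2]{fasanoMarmi}) that the Hamiltonian is conserved because it is time-independent, the constraints are holonomic, and the kinetic energy is a homogeneous positive-definite quadratic form in the generalized velocities. You have spelled out the two ingredients of that theorem --- conservation of the Jacobi integral for an autonomous Lagrangian, and its identification with $\mathcal{H}$ via degree-two homogeneity --- explicitly, which is exactly what the cited result packages.
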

\begin{proof}
	This statement follows from the fact that the Hamiltonian $\mathcal{H}$ does not depend on time, the system is subject only to holonomic constraints, and the kinetic energy is a homogeneous positive definite quadratic form of the generalized velocities, see \cite[Theorem 8.2]{fasanoMarmi}.
\end{proof}
\begin{remark}
	If the potential $\mathcal{W}$ vanishes on a time interval, by Proposition \ref{prop:energy} the total kinetic energy
	\begin{equation}
		\label{eq:conservation_energy}
		\sum_{i=1}^2 \frac{1}{2}m_i \vec{v}_i\cdot \vec{v}_i + \frac{1}{2}\vec{\dot{\nu}}_i\cdot \mat{B}_i(\nu_i)\vec{\dot{\nu}}_i,
	\end{equation}
	is preserved over that time interval.
	This observation will be useful in Section \ref{sec:BBGKY} where a kinetic theory for ordered fluids is developed.
\end{remark}
\begin{remark}
	\label{rmk:homo}
	It is common to work under the hypothesis that the fluid under consideration is homogeneous, i.e. $\mat{B}_1 = \mat{B}_2$ and $m_1 = m_2$.
	In fact a homogeneous fluid is characterised by the fact that all the microscopic constituents are identical.
	For the rest of the paper we assume that the fluid under consideration is homogeneous.
\end{remark}
\subsection{Microscopic collision}
\label{sec:microscopic_collision}
In the following examples we discuss the dynamics of interacting particles described by an order parameter manifold.
This discussion will be useful when developing the kinetic theory for ordered fluids in Section \ref{sec:BBGKY}. As we will see,
the dimension of the manifold and the symmetry of the group action fully determine another manifold to which the post-collisional variables are confined.
\begin{example}[Hard-sphere: Collision]
	\label{ex:spherical}
	Let us consider for the moment the case of simple spherical molecules of radius $R$ and mass $m$, not endowed with any order parameter.
	We would describe the dynamics of the system composed of two spherical molecules with the Lagrangian 
	\begin{equation}
		\mathcal{L}(\pt{x}_1, \pt{x}_2, \vec{v}_1, \vec{v}_2) = L_{0,1}(\vec{v}_1) + L_{0,2}(\vec{v}_2) - \mathcal{W}(\abs{\pt{x}_1-\pt{x}_2}),
	\end{equation}
	where $L_{0,i}(\vec{v}_i) = \frac{1}{2}m\vec{v}_i\cdot \vec{v}_i$ are defined as in \eqref{eq:afterlagrangian} and $\mathcal{W}(\abs{\pt{x}_1-\pt{x}_2})$ is defined as
	\begin{equation}
		\mathcal{W}(\abs{\pt{x}_1-\pt{x}_2}) = \begin{cases}
			0 & \text{if } \abs{\pt{x}_1-\pt{x}_2} > 2R,\\
			\infty & \text{if } \abs{\pt{x}_1-\pt{x}_2} \leq 2R.
		\end{cases}
	\end{equation}
	Applying Noether's theorem, in particular \Cref{prop:linear,prop:energy}, we can conclude that the total linear momentum and total energy of the system are conserved.
	Thus, for fixed $\vec{v}_1$ and $\vec{v}_2$ we know that the system is confined to a manifold of $\mathbb{R}^6$ defined as 
	\begin{equation}
		\mathfrak{M} = \Big\{
			(\vec{v}_1^\prime, \vec{v}_2^\prime) \in \mathbb{R}^6 \; \text{such that \eqref{eq:conservation_linear} and \eqref{eq:conservation_energy} are conserved}
			\Big\},
	\end{equation}
    where $^\prime$ denotes a quantity after collision.
	This manifold is a two-dimensional submanifold of $\mathbb{R}^6$, which can be parametrized by $\vec{n}\in \mathbb{S}^2$. The resulting collision rule is the classical one
	\begin{equation}
		\vec{v}_1^\prime = \vec{v}_1 - \vec{n}\left[\vec{n}\cdot (\vec{v}_1-\vec{v}_2)\right],\qquad \vec{v}_2^\prime = \vec{v}_2 + \vec{n}\left[\vec{n}\cdot (\vec{v}_1-\vec{v}_2)\right].
	\end{equation}
	\begin{figure}
	\centering
	\scalebox{1.0}{	\begin{tikzpicture}
    	\def\R{1.5}
    	\def\dx{2*\R}  
    	\coordinate (x1) at (-\R,0);
    	\coordinate (x2) at (\R,0);
    	\draw[thick] (x1) circle (\R);
    	\draw[thick] (x2) circle (\R);
    	\draw[style=GreenDashed] (0,-2) -- (0,2);
    	\node at (x1) [below left] {$\pt{x}_1$};
    	\node at (x2) [below right] {$\pt{x}_2$};
    	\draw[->, thick] (x1) -- ($(x1)+(0.7,0.3)$) ;
    	\draw[->, thick] (x2) -- ($(x2)+(-0.7,0.3)$) ;
    	\draw[style=greenArrow] (0,1.75) -- (0.5,1.75) node[above] {$\color{verde}\vec{n}$};
\end{tikzpicture}}
	\caption{Collision of two hard spheres, moving with velocities $v_1$ and $v_2$, respectively, and with centers of mass $\pt{x}_1$ and $\pt{x}_2$.}
	\end{figure}
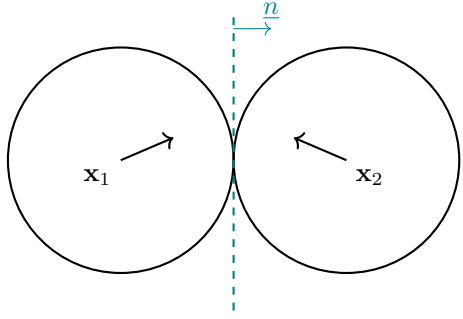
\end{example}
\begin{exA}[Non-diffusing gas bubbles: Collision]
	\label{ex:bubbles_collision}
	Since as discussed in \cref{ex:angular_bubbles} we have no additional invariants in the setting of \cref{ex:bubbles}, the classical collision rule for spherical molecules described here can be used in the context of nondiffusing gas bubbles,
	together with the additional assumption that the volume fraction $\nu$ is exchanged during the collision, i.e.
	\begin{equation}
		\label{eq:volume_fraction_bubbles}
		\nu_1^\prime + \nu_2^\prime = \nu_1 + \nu_2.
	\end{equation}
	The binary rule for the exchange of the volume fraction can thus be rewritten as
	\begin{equation}
		\label{eq:binary_rule_bubbles}
		\vec{\nu}_1^\prime = (1-q)\nu_1 + q \nu_2, \qquad \vec{\nu}_2^\prime = (1-q)\nu_2 + q \nu_1,
	\end{equation}
	where $q\in [0,1]$ is a parameter describing how much volume is exchanged.
	\begin{figure}
		\centering
		\scalebox{0.7}{\begin{tikzpicture}

\def\RlargeIn{1.3}
\def\RsmallIn{0.7}

\def\RlargeOut{1.0}
\def\RsmallOut{1.0}

\def\dx{2.5}
\def\dy{2}

\def\delta{0.3}

\def\centerGap{0.4}

\coordinate (xin1) at (-\dx,-\dy); 
\coordinate (xin2) at (-\dx,\dy);  

\coordinate (xout1) at (\dx,-\dy); 
\coordinate (xout2) at (\dx,\dy);  
\coordinate (xout11) at (\dx-0.2,-\dy+0.2); 
\coordinate (xout22) at (\dx-0.2,\dy-0.2);  

\draw[thick] (xin1) circle (\RlargeIn);
\draw[thick] (xin2) circle (\RsmallIn);

\draw[thick] (xout1) circle (\RlargeOut);
\draw[thick] (xout2) circle (\RsmallOut);

\node at (xin1) {$\pt{x}_1$};
\node at (xin2) {$\pt{x}_2$};
\node at (xout1) {$\pt{x}_1$};
\node at (xout2) {$\pt{x}_2$};

\coordinate (collisionIn1) at ($(1.2,1)!\centerGap!(xin1)$);
\coordinate (collisionIn2) at ($(1.2,-1)!\centerGap!(xin2)$);
\coordinate (collisionOut1) at ($(0.8,-0.65)!-\centerGap!(xout1)$);
\coordinate (collisionOut2) at ($(0.8,0.65)!-\centerGap!(xout2)$);

\draw[->, thick] ($(xin1)!\delta!(0,0)$) -- (collisionIn1);
\draw[->, thick] ($(xin2)!\delta!(0,0)$) -- (collisionIn2);

\draw[->, thick] (collisionOut1) -- (xout11);
\draw[->, thick] (collisionOut2) -- (xout22);

\end{tikzpicture}}
		\caption{Collision of two gas bubbles. As a result of the collision the volume fraction of each bubble is exchanged, with~$\nu_1^\prime + \nu_2^\prime = \nu_1 + \nu_2$.}
	\end{figure}
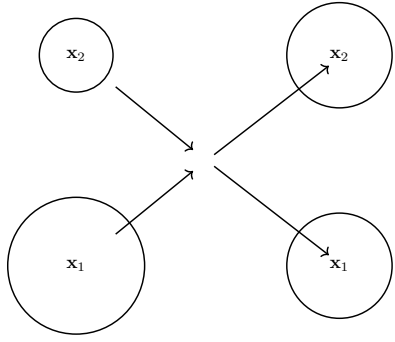
\end{exA}
\begin{exD}[Calamitic molecules in three dimensions: Collision]
	\label{ex:collision_response}
	In the setting of \cref{ex:3D} we can consider a {\it collision response model} for convex body collision \cite{stronge}.
	In particular, we assume that the segment-like molecule $2$ collides on the side of the segment-like molecule $1$, and induces a collision impulse $J$, defined as 
	\begin{equation}
	\label{eq:collision_impulse}
	J = -\dfrac{\vec{V} \cdot \vec{n}}{\frac{2}{m} + \left[ \mathbb{I}^{-1}_1 (\vec{r}_1\times \vec{n})\times \vec{r}_1 + \mathbb{I}^{-1}_2 (\vec{r}_2\times \vec{n})\times \vec{r}_2\right]\cdot \vec{n}},
	\end{equation}
	where $\mathbb{I}_1$ and $\mathbb{I}_2$ are the inertia tensors, $\vec{r}_{1,2}$ are the distances between the centers of mass and the contact point of the collision, $\vec{V}=\vec{v}_2-\vec{v}_1$ is the relative velocity of the center of mass of the two molecules, and $\vec{n}$ is the normal vector to the plane of the collision.
	With this, the post-collisional velocities and angular velocities are given by the following binary rule
    \begin{subequations}
		\label{eq:binary_rule}
	\begin{alignat}{2}
		\vec{v}'_1 &= \vec{v}_1 - (1+e_{{v}})\frac{J}{m} \vec{n},\qquad && \vec{v}'_2  = \vec{v}_2 + (1+e_{{v}})\frac{J}{m} \vec{n}, \\
		\vec{\omega}'_1  &= \vec{\omega}_1 - (1+e_{{\omega}})J \mathbb{I}^{-1}_1 (r_1\times \vec{n}), \qquad &&\vec{\omega}'_2  = \vec{\omega}_2 + (1+e_{{\omega}})J \mathbb{I}^{-1}_2 (r_2\times \vec{n}),
	\end{alignat}
    \end{subequations}
	where $e_{{v}}$ and $e_\omega$ are the coefficients of restitution for the linear and angular velocities, respectively. We assume $e_v = e_\omega = 1$, so that 
	this model of microscopic interaction always preserves linear momentum, angular momentum, and the total energy.

	Analogously to \cref{ex:spherical}, assuming the colliding segments have the same inertia tensor, for fixed $\vec{v}_1, \vec{v}_2, \vec{\omega}_1, \vec{\omega}_2$ we know that the system is confined to a manifold of $\mathbb{R}^{12}$ defined as
	\begin{equation}
		\mathfrak{M} = \Big\{
			(\vec{v}_1^\prime, \vec{v}_2^\prime, \vec{\omega}_1^\prime, \vec{\omega}_2^\prime) \in \mathbb{R}^{12} \; \text{such that \eqref{eq:conservation_linear}, \eqref{eq:conservation_angular} and \eqref{eq:conservation_energy} are conserved}.
			\Big\}
	\end{equation}
	Notice that this is a five dimensional submanifold of $\mathbb{R}^{12}$, which can be parametrized by $(\vec{n}, \vec{r}_{1,2})\in \mathbb{S}^2\times \mathbb{R}^3$ via \eqref{eq:binary_rule} {\cite[p.~37]{stronge}}.
	Notice that if instead of using the angular velocities $\vec{\omega}_i$ we use the total time derivative of the order parameter $\vec{\dot{\nu}}_i$, we can decrease the dimension of the manifold by one, in fact in this case
	\begin{equation}
		\mathfrak{M} = \Big\{
			(\vec{v}_1^\prime, \vec{v}_2^\prime, \vec{\dot{\nu}}_1^\prime, \vec{\dot{\nu}}_2^\prime) \in \mathbb{R}^{10} \; \text{such that \eqref{eq:conservation_linear}, \eqref{eq:conservation_angular} and \eqref{eq:conservation_energy} are conserved}
			\Big\},
	\end{equation}
	which is a four-dimensional submanifold of $\mathbb{R}^{10}$.
\end{exD}
\begin{exB}[Calamitic molecules in two dimensions: Collision]
	\label{ex:response2D}
	Once again in the setting of \cref{ex:2D} we consider the group action of $\text{SO}(2)$ on the unit circle $\mathbb{S}^1$ embedded in $\mathbb{R}^3$ as the action of a subgroup of $\text{SO}(3)$ on $\mathbb{R}^3$.
	Thus, similar to what was done in \cref{ex:angular2D} we can compute the collision impulse $J$ in the 2D setting.
    We begin by computing the cross product of the distances between the centers of mass and the contact point of the collision with the normal vector to the plane of the collision:
    \begin{equation}
    (r_i\times \vec{n}) = \begin{bmatrix}
    0 \\ 0 \\ \vec{r}_i^{(x)}n^{(y)} - \vec{r}_i^{(y)}n^{(x)}
    \end{bmatrix}.
    \end{equation}
    We can then apply the pseudo inverse of the inertia tensors to the cross products, yielding
    \begin{equation}
    	\label{eq:angular_velocity_update_2d}
    \mathbb{I}^{\dagger}_i (\vec{r}_i\times \vec{n}) = -I_3^{-1}\begin{bmatrix}
    0 \\ 0 \\ \vec{r}_i^{(x)}n^{(y)} + \vec{r}_i^{(y)}n^{(x)}
    \end{bmatrix},
    \end{equation}
    where we used the fact that the inertia tensor $\mathbb{I}_i$ is the one of \eqref{eq:inertia_tensor_2D} and we employ the pseudo inverse (denoted with $^\dagger$) to compute the minimum-norm solution.
    Finally, we compute
    \begin{equation}
    	\mathbb{I}^{\dagger}_i (\vec{r}_i\times \vec{n})\times \vec{r}_i = I_3^{-1}\!\!\begin{bmatrix}
    		-\vec{r}_i^{(x)}\vec{r}_i^{(x)}\nu_1^{(y)} - \vec{r}_i^{(x)}\vec{r}_i^{(y)}\nu_1^{(x)} \\
			\;\,\vec{r}_i^{(y)}\vec{r}_i^{(x)}\nu_1^{(x)} + \vec{r}_i^{(y)}\vec{r}_i^{(y)}\nu_1^{(y)} \\ 0
    	\end{bmatrix},
    \end{equation}
    which we employ in \eqref{eq:collision_impulse} to compute $J$.
    The post-collisional velocities and angular velocities are given by the same binary rule \eqref{eq:binary_rule}.
    In particular, since the angular velocities $\vec{\omega}_i$ have zero first and second components, we can represent the angular velocities using a single scalar quantity.
    Furthermore we notice that the same holds for the update of the angular velocities in \eqref{eq:binary_rule} since we see from \eqref{eq:angular_velocity_update_2d} that $\mathbb{I}_i^{\dagger}(r_i\times \vec{n})$ has zero first and second component.

	Thus, in the two-dimensional setting, for fixed $\vec{v}_1, \vec{v}_2, \omega_1, \omega_2$ we know that the system is confined to a manifold of $\mathbb{R}^6$ defined as
	\begin{equation}
		\mathfrak{M} = \Big\{
			(\vec{v}_1^\prime, \vec{v}_2^\prime, \omega_1^\prime, \omega_2^\prime) \in \mathbb{R}^6 \; \text{such that \eqref{eq:conservation_linear}, \eqref{eq:conservation_angular} and \eqref{eq:conservation_energy} are conserved}
			\Big\}.
	\end{equation}
	Notice that this is a two-dimensional submanifold of $\mathbb{R}^6$. This manifold can be parametrized by $(\ell,\tau)\in\mathbb{R}_{\geq 0}\times[0,2\pi)$.
	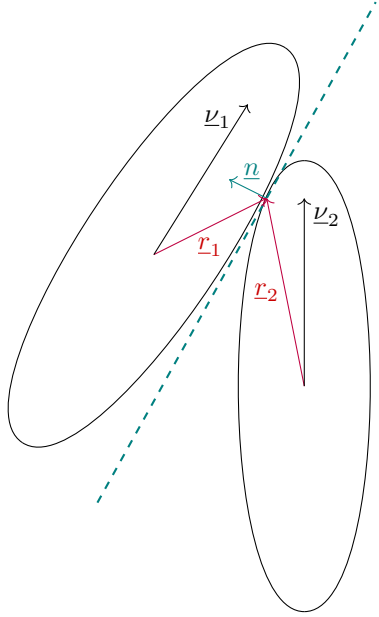
\begin{figure}
		\centering
		\scalebox{1}{\begin{tikzpicture}
	\begin{pgfonlayer}{nodelayer}
		\node [style=none] (0) at (1, 3.75) {};
		\node [style=none] (1) at (-2.5, -1.5) {};
		\node [style=none] (2) at (1.25, 2.25) {};
		\node [style=none] (3) at (1.25, -3.75) {};
		\node [style=none] (4) at (-0.75, 1) {};
		\node [style=none] (5) at (1.25, -0.75) {};
		\node [style=none] (6) at (1.25, -0.75) {};
		\node [style=none] (7) at (-1.5, -2.3) {};
		\node [style=none] (8) at (2.2, 4.36) {};
		\node [style=none] (9) at (0.75, 1.75) {};
		\node [style=none] (10) at (0.25, 2) {};
		\node [style=none] (11) at (0.55, 2.1) {$\color{verde}\vec{n}$};
		\node [style=none] (12) at (0.1, 2.8) {$\vec{\nu}_1$};
		\node [style=none] (13) at (1.55, 1.5) {$\vec{\nu}_2$};
		\node [style=none] (14) at (0.5, 3) {};
		\node [style=none] (15) at (1.25, 1.75) {};
		\node [style=none] (16) at (0, 1.1) {$\color{rosso}\vec{r}_1$};
		\node [style=none] (17) at (0.75, 0.5) {$\color{rosso}\vec{r}_2$};
	\end{pgfonlayer}
	\begin{pgfonlayer}{edgelayer}
		\draw [bend left=270, looseness=0.50] (0.center) to (1.center);
		\draw [bend left=90, looseness=0.50] (0.center) to (1.center);
		\draw [bend right=90, looseness=0.50] (2.center) to (3.center);
		\draw [bend left=90, looseness=0.50] (2.center) to (3.center);
		\draw [style=GreenDashed] (7.center) to (8.center);
		\draw [style=greenArrow] (9.center) to (10.center);
		\draw [style=ArrowRed] (4.center) to (9.center);
		\draw [style=ArrowRed] (6.center) to (9.center);
		\draw [style=Arrow] (4.center) to (14.center);
		\draw [style=Arrow] (6.center) to (15.center);
	\end{pgfonlayer}
\end{tikzpicture}}
		\caption{Collision of calamitic molecules in two dimensions.}
	\end{figure}
\end{exB}
\begin{exC}[Head-tail symmetric molecules in two dimensions: Collision]
	\label{ex:collision_head_tail}
	The collision response model presented in \cref{ex:response2D} can also be used to describe the collision of head-tail symmetric molecules in two dimensions.
	However, to describe a collision event, as well as specifying $\vec{n}$, $\vec{r}_{1,2}$, we must also specify whether the head-tail symmetric molecules are co-rotating or counter-rotating. This is done by specifying $q \in \{0, 1\}$ and setting the angular velocity of the first molecule to be $(-1)^{q}\omega_1$.
\end{exC}

These examples motivate the following proposition describing in general the manifold to which the post-collisional variables are confined.
\begin{proposition}
If the Lagrangian is invariant under the group action $\mathcal{A}$, then the post-collision variables are constrained to a manifold $\mathfrak{M}$ of dimension $d + \iota - 1$, where we recall $\iota = \dim{\mathcal{M}}$. The manifold is given by
\begin{equation}
\mathfrak{M} = \Big\{ (\vec{v}_1^\prime, \vec{v}_2^\prime, \varsigma_1^\prime, \varsigma_2^\prime) \in \mathbb{R}^d \times \mathbb{R}^d \times T_{\nu_1} \mathcal{M} \times T_{\nu_2} \mathcal{M} : \text{\eqref{eq:conservation_linear}, \eqref{eq:conservation_angular}, \eqref{eq:conservation_energy} are conserved} \Big\}.
\end{equation}
\end{proposition}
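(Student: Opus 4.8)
The plan is to realise $\mathfrak{M}$ as a level set of the map that records the conserved quantities, and then to deduce that it is a submanifold — and to compute its dimension — from the rank of the differential of that map. Because the order parameters $\nu_1$ and $\nu_2$ are frozen through a collision, the ambient set $\mathcal{E}\coloneqq\mathbb{R}^d\times\mathbb{R}^d\times T_{\nu_1}\mathcal{M}\times T_{\nu_2}\mathcal{M}$ is a genuine linear manifold of dimension $2d+2\iota$. First I would introduce the smooth map
\[
\Phi\colon\mathcal{E}\longrightarrow\mathbb{R}^d\times\mathfrak{so}(d)^{\ast}\times\mathbb{R}
\]
sending a candidate post-collisional state $(\vec{v}_1^{\prime},\vec{v}_2^{\prime},\varsigma_1^{\prime},\varsigma_2^{\prime})$ to the total linear momentum \eqref{eq:conservation_linear}, the total angular momentum \eqref{eq:conservation_angular} regarded as the functional $\mat{Q}\mapsto A_{\pt{x}}\mat{Q}\cdot\vec{p}_1^{\prime}+A_{\pt{x}}\mat{Q}\cdot\vec{p}_2^{\prime}+A_{\nu_1}\mat{Q}\cdot\mat{B}_1\varsigma_1^{\prime}+A_{\nu_2}\mat{Q}\cdot\mat{B}_2\varsigma_2^{\prime}$ on $\mathfrak{so}(d)$, and the total kinetic energy \eqref{eq:conservation_energy}. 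By \cref{prop:linear,prop:angular,prop:energy} the value of $\Phi$ at the pre-collisional state $z_0$ coincides with its value at every admissible post-collisional state, whence $\mathfrak{M}=\Phi^{-1}(\Phi(z_0))$.

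Next I would show that $d\Phi$ has locally constant rank near $z_0$, so that the constant-rank theorem exhibits $\mathfrak{M}$ as an embedded submanifold with $\operatorname{codim}\mathfrak{M}=\operatorname{rank}d\Phi_{z_0}$, i.e.\ $\dim\mathfrak{M}=2d+2\iota-\operatorname{rank}d\Phi_{z_0}$. The rank would be computed block by block. The differential of the energy component is $(m_1\vec{v}_1^{\prime},m_2\vec{v}_2^{\prime},\mat{B}_1\varsigma_1^{\prime},\mat{B}_2\varsigma_2^{\prime})$, which, since $m_i>0$ and the $\mat{B}_i$ are positive definite (cf.\ \cref{cor:diagonal}), vanishes only at the rest state; away from it, it contributes $1$. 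The differential of the linear-momentum component is already surjective onto $\mathbb{R}^d$ once restricted to the $(\vec{v}_1^{\prime},\vec{v}_2^{\prime})$ block, contributing $d$. The differential of the angular-momentum component contributes at most $\dim\mathfrak{so}(d)=\binom{d}{2}$; it attains this value, and is moreover independent of the linear-momentum block, when $\mathcal{A}$ is free so that the generators $A_{\nu_i}$ are injective (\cref{rem:free}) and the pre-collisional positions and order parameters are in general position. Assembling the three contributions gives, in the generic case, $\operatorname{rank}d\Phi_{z_0}=d+\binom{d}{2}+1$ and hence $\dim\mathfrak{M}=2d+2\iota-d-\binom{d}{2}-1$; in degenerate situations — a non-transitive or non-free action, or a special geometric configuration — the rank is strictly smaller, and the codimension has to be bookkept by hand, the running examples of \cref{sec:microscopic_collision} being exactly these cases (for the trivial action of \cref{ex:bubbles} the angular-momentum block contributes nothing, \eqref{eq:conservation_angular} following from \eqref{eq:conservation_linear}).

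The main obstacle is the angular-momentum block of $d\Phi$: proving, in full generality, precisely when the angular-momentum constraints are independent of the linear-momentum constraints and of one another. This is where the structure of the order parameter manifold $(\mathcal{M},\mathcal{A})$ genuinely enters — through the possible mismatch between $A_\nu$ and the canonical generator $A_{\pt{x}}$ exhibited in \cref{ex:angular_head_tail}, and through the failure of $A_\nu$ to be injective when $\mathcal{A}$ is not free, as for the action of $\text{SO}(3)$ on $\mathbb{S}^2$ whose point stabilizers are nontrivial. I would therefore state the clean dimension count under the standing hypotheses ``$\mathcal{A}$ free (or at least transitive) and the pre-collisional configuration generic'' and, otherwise, record the conclusion in the weaker form: $\mathfrak{M}$ is an embedded submanifold of $\mathcal{E}$ of codimension at most $d+\binom{d}{2}+1$, with equality outside a degenerate set. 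The remaining, routine point is the verification that $\operatorname{rank}d\Phi$ is locally constant near $z_0$, which is what licenses the appeal to the constant-rank theorem.
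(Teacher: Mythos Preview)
Your constant-rank approach is in principle sounder than the paper's, which is a bare dimension count: ambient space $2d+2\iota$, subtract $d$ for linear momentum, $\iota$ for angular momentum, and $1$ for energy, concluding $\dim\mathfrak{M}=d+\iota-1$ (the number the statement omits but which Table~\ref{tab:horizontal} records). There is no level-set or rank argument in the paper; each constraint is simply assumed independent.

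The substantive gap in your proposal is the angular-momentum count. You take the target of that block to be all of $\mathfrak{so}(d)^{\ast}$ and claim rank $\binom{d}{2}$ in the generic case, yielding $\dim\mathfrak{M}=d+2\iota-\binom{d}{2}-1$. But this formula gives $3$ for three-dimensional calamitic molecules ($d=3$, $\iota=2$) and $-1$ for hard spheres ($d=3$, $\iota=0$), whereas \cref{ex:collision_response} and \cref{ex:spherical} give $4$ and $2$. So your ``generic'' rank is never attained in the paper's own examples: the action of $\text{SO}(3)$ on $\mathbb{S}^2$ is not free (as you note), and for $\iota=0$ the angular-momentum block is absent altogether. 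The paper's count of $\iota$ constraints from \eqref{eq:conservation_angular} reflects that only the intrinsic part $\sum_i A_{\nu_i}\mat{Q}\cdot\varsigma_i$ is retained as new information beyond linear momentum (cf.\ the reduction in the proof of \cref{thm:maxwellian}), and since $A_{\nu_i}\colon\mathfrak{so}(d)\to T_{\nu_i}\mathcal{M}$ has image of dimension at most $\iota$, this caps the contribution at $\iota$, not $\binom{d}{2}$.

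If you want to carry your programme through, the work is precisely to show that $\operatorname{rank}d\Phi_{z_0}=d+\iota+1$ --- equivalently, that the orbital angular-momentum components are dependent on the linear-momentum block and that the intrinsic components contribute exactly $\iota$. The paper asserts this; it does not prove it. Your framework would, but only once the rank of the angular block is computed correctly.
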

\begin{proof}
	The proof follows from the fact the velocities $\vec{v}_1^\prime, \vec{v}_2^\prime$ and the conjugate momenta $\vec{\varsigma}_1^\prime, \vec{\varsigma}_2^\prime$ live in $\mathbb{R}^d \times \mathbb{R}^d \times T_{\nu_1} \mathcal{M} \times T_{\nu_2} \mathcal{M}$ and since $\dim(T_{\nu_i}\mathcal{M}) = \iota$ we have that the dimension of the manifold $\mathfrak{M}$ at most is $d + d + \iota + \iota = 2d + 2\iota$.
	The conservation law of linear momentum \eqref{eq:conservation_linear} gives us $d$ constraints, reducing the dimension of the manifold to $d + \iota + \iota = d + 2\iota$.
	The conservation law of angular momentum \eqref{eq:conservation_angular} gives us $\iota$ additional constraints, reducing the dimension of the manifold to $d + \iota$.
	Finally, the conservation law of energy \eqref{eq:conservation_energy} gives us one additional constraint, reducing the dimension of the manifold to $d + \iota - 1$.
	Thus, the dimension of the manifold $\mathfrak{M}$ is $d + \iota - 1$.
\end{proof}
\begin{table}[h]
    \centering
    \caption{Dimension of the manifold $\mathfrak{M}$ to which the post-collisional variables are confined, for the examples presented in this section. The dimension $d$ is the dimension of the space in which the particles move, and $\iota$ is the dimension of the order parameter manifold $\mathcal{M}$.}
    \label{tab:horizontal}
    \begin{tabular}{l|c|c|c|c|c}
        \toprule
        Example & \ref{ex:spherical} & \ref{ex:bubbles_collision} & \ref{ex:collision_response} & \ref{ex:response2D} & \ref{ex:collision_head_tail} \\
        \midrule
        $d$ & 3 & 3 & 3 & 2 & 2 \\
        $\iota$ & 0 & 1 & 2 & 1 & 2 \\
        $d+\iota - 1$ & 2 & 3 & 4 & 2 & 3 \\
        \bottomrule
    \end{tabular}
\end{table}
\section{The BBGKY hierarchy for ordered fluids}
In this section we derive a Bogoliubov--Born--Green--Kirkwood--Yvon (BBGKY) hierarchy to study a system of $N$ pairwise-interacting constituents endowed with an order from the point of view of statistical mechanics.
We will then derive a Vlasov--Boltzmann type kinetic equation for the one particle distribution function, under the hypothesis that the interaction between the ordering of the constituents is of a weak nature, and that no external forces act on the system.
In particular, the Vlasov--Boltzmann equation will be of the form
\begin{equation} 
	\label{eq:boltzmann}
	\frac{\partial f}{\partial t} + m^{-1}\vec{p} \cdot \nabla_{\pt{x}}\, f + \mat{B}(\nu)^{-1}\vec{\varsigma}\cdot \nabla_{\nu}\, f + \mathcal{V}\cdot \nabla_{\vec{\varsigma}} f
	= \mathcal{C}[f, f],
\end{equation}
where {$f = f(\pt{x},\nu,\vec{p}, \vec{\varsigma}, t)$} is the one particle distribution function, representing the probability of finding a particle in the phase space point $\left(\pt{x}, \nu, \vec{p}, \vec{\varsigma}\right)$, $\mathcal{C}$ is the collision operator, representing the effect of binary interactions, and $\mathcal{V}$ is a Vlasov-type force acting only on the order parameters.

We will study the collision operator and its collision invariants for such kinetic equations. We will derive a Boltzmann inequality following the approach of Cercignani \& Lampis \cite{cercignaniLampis}.
This will guarantee that the solution of the kinetic equation will thermalize to a Maxwellian distribution.
	
\subsection{Related literature}
Kinetic equations of the form \eqref{eq:boltzmann} are a particular instance of the wider class of kinetic equations for molecules with internal degrees of freedom.
This class of equations includes Boltzmann-type equations for loaded spheres \cite{jeans,dahlerSatherI,dahlerSandlerII}, and Boltzmann-type equations for non-spherical molecules.
In \cite{taxman} Taxman proposes the use of the standard Boltzmann equation for spherical molecules, but with a modified collision operator, to account for the non-spherical nature of the molecules.
A first classical generalization of the Boltzmann equation for non-spherical molecules known as  Boltzmann--Curtiss equation is used in \cite{curtissI,curtissII,curtissIII,curtissIV, curtissV,curtissVI}. This treats convex symmetric top molecules (one whose moment of inertia has two equal principal moments).
In our work we generalize the Boltzmann--Curtiss equation to include treat generic molecules endowed with an order parameter manifold $\mathcal{M}$.
In particular, in \cref{ex:2D} and \cref{ex:3D} we discuss the case of calamitic molecules in two and three dimensions, which is the direct generalization of the Boltzmann--Curtiss equation to molecules that might not be convex symmetric top molecules.
An extension of the Boltzmann--Curtiss equation for non-spherical molecules that include chattering, i.e.~collisions between two rigid convex molecules involving two or more impulsive hits, is studied in \cite{coleEtAll,allenEtAll, mori}.
The Wang--Chang--Uhlenbeck--de~Boer equation \cite{deBoerEtAll} is a generalization of the Boltzmann equation for non-spherical molecules, where the rotational degrees of freedom are quantized, i.e.~different rotational degrees of freedom are represented as a gas mixture, and molecules with different but close orientations are considered to be of the same species.
Waldmann~\cite{waldmann} and Snider~\cite{snider} presented a generalization of the Wang--Chang--Uhlenbeck--de~Boer equation to include quantum mechanical effects. Inelastic effects including rotational degrees of freedom have also been considered in kinetic theory of granular gases, see for instance \cite{JR85,Goldhirsch2003,oxfordbook}. 

\subsection{The hierarchy}
\label{sec:BBGKY}
We construct a BBGKY hierarchy by adapting the derivation presented in prior work \cite[Appendix A]{farrellEtAll}, which was inspired by those presented in \cite{harris,kardar,huangKerson}.
We consider a system of $N$ constituents endowed with an order parameter manifold $\mathcal{M}$, interacting in pairs, and assumed to be indistinguishable.
To this aim, we introduce the Hamiltonian formalism associated to the Lagrangian $\mathcal{L}$ introduced in the previous section.
As usual, we introduce the conjugate momenta to the generalized coordinates, i.e.
\begin{equation}
	\vec{p_i} = \frac{\partial \mathcal{L}}{\partial \vec{v}_i} = m\vec{v}_i, \qquad \vec{\varsigma_i} \coloneqq \frac{\partial \mathcal{L}}{\partial \vec{\dot{\nu}_i}} = \mat{B}(\nu)\,\vec{\dot{\nu}_i}.
\end{equation}
We then introduce the Hamiltonian $\mathcal{H}$ of the full system of $N$ constituents, only interacting in pairs, as
\begin{equation}
	\mathcal{H} \coloneqq \sum_{i=1}^N \frac{1}{2m}\vec{p_i}\cdot \vec{p_i} + \frac{1}{2}\vec{\varsigma_i}\cdot \mat{B}(\nu)^{-1}\,\vec{\varsigma_i}+\sum_{1\leq i<j\leq N}\mathcal{W}(\abs{\pt{x}_i-\pt{x}_j}, \nu_i, \nu_j).
\end{equation}
Let {$\Gamma^*_i(t) = \left(\pt{x}_i(t), \nu_i(t), \vec{p}_i(t), \vec{\varsigma}_i(t)\right)$} denote the phase space point of constituent $i$ at time $t$, and let \rev{$f_N\left(\{\Gamma_i\}_{i=1}^N\right)\coloneqq \prod_{i=1}^{N}\delta\left(\Gamma_i - \Gamma_i^*(t)\right)$} denote the \textit{empirical measure} or \textit{joint N-particles distribution function}.
Let $f_s$ denote the marginal distributions of the joint N-particles distribution function, with respect to $\Gamma^{(s)} = \left(\Gamma_{s+1}, \dots, \Gamma_N\right)$.
Notice that as a consequence of the indistinguishability of the particles, the marginals $f_s$ are symmetric with respect to the exchange of the particles.

\begin{remark}
	Technically, the Hamiltonian $\mathcal{H}$ is the Legendre transform of the Lagrangian $\mathcal{L}$, which in this case is always well-defined since we assumed that $\mat{B}(\nu)$ is symmetric and positive definite for all $\nu\in \mathcal{M}$.
	In fact, the symmetry and positive definiteness of $\mat{B}(\nu)$ implies that the Hessian of the Lagrangian $\mathcal{L}$ with respect to $\vec{\dot{\nu}}$ is also symmetric and positive definite, which in turn implies that the Legendre transform is well-defined.
	We point out that the Legendre transform of $\mathcal{L}_{1,i}$ with respect to $\vec{\dot{\nu}}_i$ is exactly the co-energy associated with the ordering introduced by Capriz \cite{carpiz}.
\end{remark}
\begin{remark}
	\rev{
		Given the order parameter manifold $\mathcal{M}$, the tuple $(\nu, \vec{\varsigma})$ lives in the cotangent bundle of $\mathcal{M}$.
		The Hamilton description of the dynamics here adopted can thus be expressed using the canonical Poisson brackets \cite[Section 1.3]{marsden}.
		This choice comes both with advantages and disadvantages; a larger phase space has to be used, but any smooth manifold can be used as an order-parameter manifold, in particular not requiring the symplecticity of the manifold.
		An alternative and conceptually appealing perspective is to encode these symmetries directly into the geometric structure of the kinetic theory.
		Since the Lagrangian considered here is invariant under rotations, the generalized angular momentum may be interpreted as a momentum map associated with such symmetry.
		This observation naturally motivates the introduction of reduced Lie--Poisson structure obtained via symmetry reduction, in which generalized angular momentum conservation is built into the formulation at the most fundamental level \cite[Section 1.3]{marsden}.
		The use of the reduced Lie--Poisson brackets is typical in the so-called GENERIC framework \cite{GENERIC}. In particular, systems of rod-like molecules similar to the one presented in \cref{ex:2D} and \cref{ex:3D} have been studied in the GENERIC framework in \cite{grmela}.
	}
\end{remark}

We decompose the Hamiltonian into three terms: two containing respectively contributions from particles $1,\dots,s$ and $s+1,\dots, N$, and a third term containing mixed terms:
\begin{alignat}{3}
  \mathcal{H}_s &= \left(\sum_{i=1}^{s} \frac{\abs{\vec{p_i}}^2}{2m}+\frac{1}{2}\vec{\varsigma_i}\cdot\mat{B}(\nu)^{-1}\vec{\varsigma_i}\right) &&+\sum_{1\leq i < j \leq s}\!\!\!\!\mathcal{W}(\abs{\pt{x}_i-\pt{x}_j},\nu_i,\nu_j),\\
  \mathcal{H}_{N-s} &= \left(\sum_{i=s+1}^{N} \frac{\abs{\vec{p_i}}^2}{2m}+\frac{1}{2}\vec{\varsigma_i}\cdot\mat{B}(\nu)^{-1}\vec{\varsigma_i}\right) &&+\sum_{s+1\leq i < j \leq N}\!\!\!\!\!\!\!\mathcal{W}(\abs{\pt{x}_i-\pt{x}_j},\nu_i,\nu_j),\\
  \hat{\mathcal{Z}}_{s} &= && \sum_{i=1}^{s} \sum_{j=s+1}^{N} \mathcal{W}(\abs{\pt{x}_i-\pt{x}_j},\nu_i,\nu_j),
\end{alignat}
so that $\mathcal{H}=\mathcal{H}_s+\mathcal{H}_{N-s}+\hat{\mathcal{Z}}_s$.
Considering Liouville's equation for the joint N-particles distribution function $f_N$, i.e.
\begin{equation}
  \frac{\partial f_N}{\partial t} = -\{f_N,\mathcal{H}\},
\end{equation}
where $\{\cdot,\cdot\}$ are the Poisson brackets, we can integrate with respect to $\Gamma^{(s)}$ to obtain the integro-differential equation describing the time evolution of the marginal distribution function $f_s$, i.e.
\begin{equation}
  \frac{\partial f_s}{\partial t} 
  = \int \frac{\partial f_N}{\partial t}\, d\Gamma^{(s)}
  =-\int \left(\{f_N,\mathcal{H}_s\}+\{f_N,\mathcal{H}_{N-s}\}+\{f_N,\hat{\mathcal{Z}}_s\}\right)\,d\Gamma^{(s)}.
\end{equation}
We notice that, since $\mathcal{H}_s$ is independent of $\Gamma_{s+1},\dots,\Gamma_{n}$, we can bring the integral inside the Poisson brackets in that term to obtain
\begin{equation}
  \frac{\partial f_s}{\partial t} 
  = \int \frac{\partial f_N}{\partial t}\, d\Gamma^{(s)}
  = - \{f_s,\mathcal{H}_s\}
    - \int \{f_N,\mathcal{H}_{N-s}\} \,d\Gamma^{(s)}
    - \int\{f_N,\hat{\mathcal{Z}}_s\}\,d\Gamma^{(s)}.
    \label{eq:Liouville}
\end{equation}
We then notice that the second term on the right-hand side vanishes since it is an exact divergence, and we assume appropriate decay conditions at infinity:
\begin{align}
  \int\!\!\sum_{i=1}^{N}\!\!\Big(\nabla_{\pt{x}_i}f_N\!\cdot\!\nabla_{\vec{p}_i}\!\mathcal{H}_{N-s}\!\!&+\!\nabla_{\nu_{i}}f_N\!\cdot\!\nabla_{\vec{\varsigma}_i}\!\mathcal{H}_{N-s}\!\!-\!\nabla_{\vec{p}_i}f_N\!\cdot\!\nabla_{\pt{x}_i}\!\mathcal{H}_{N-s}\!\!-\!\nabla_{\vec{\varsigma}_i}f_N\!\cdot\!\nabla_{\nu_i}\!\mathcal{H}_{N-s}\!\Big)d\Gamma^{(s)}\\
  =&\int\!\!\sum_{i=s+1}^{N}\Big(\nabla_{\pt{x}_i}f_N\cdot \frac{\vec{p}_i}{m}+\nabla_{\nu_i}f_N\cdot\mat{B}(\nu_i)^{-1}\vec{\varsigma}_i\Big)\,d\Gamma^{(s)}\nonumber\\
  &-\int\!\!\sum_{i=s+1}^{N}\!\!\nabla_{\vec{p}_i}f_N\cdot\sum_{j=i+1}^{N}\!\!\nabla_{\pt{x}_i}\cdot \mathcal{W}(\abs{\pt{x}_i-\pt{x}_j},\nu_i,\nu_j)d\Gamma^{(s)}\nonumber\\
  &-\int\!\!\sum_{i=s+1}^{N}\nabla_{\vec{\varsigma}_i}f_N\cdot \sum_{j=i+1}^{N}\nabla_{\nu_i}\mathcal{W}(\abs{\pt{x}_i-\pt{x}_j},\nu_i,\nu_j)\,d\Gamma^{(s)}\nonumber\\
  =&\int\!\!\sum_{i=s+1}^{N}\Bigg(\nabla_{\pt{x}_i}\cdot\Big(f_N\frac{\vec{p}_i}{m}\Big)+\nabla_{\nu_i}\cdot\Big(f_N\mat{B}(\nu_i)^{-1}\vec{\varsigma}_i\Big)\Bigg)\,d\Gamma^{(s)}\nonumber\\
  &-\int\!\!\sum_{i=s+1}^{N}\nabla_{\vec{p}_i}\cdot\Big(f_N\sum_{j=i+1}^{N}\nabla_{\pt{x}_i}\mathcal{W}(\abs{\pt{x}_i-\pt{x}_j},\nu_i,\nu_j)\Big)\,d\Gamma^{(s)}\nonumber\\
  &-\int\!\!\sum_{i=s+1}^{N}\nabla_{\vec{\varsigma}_i}\cdot\Big(f_N\sum_{j=i+1}^{N}\nabla_{\nu_i}\mathcal{W}(\abs{\pt{x}_i-\pt{x}_j},\nu_i,\nu_j)\Big)\,d\Gamma^{(s)} \\
  &=0 \text{ by decay conditions at infinity.}\nonumber
\end{align}
This argument relied on the calculation
\begin{equation}
\nabla_{\nu_i}f_N\cdot\mat{B}(\nu_i)^{-1}\vec{\varsigma}_i = \nabla_{\nu_i}\cdot\Big(f_N\mat{B}(\nu_i)^{-1}\vec{\varsigma}_i\Big),
\end{equation}
which is not immediately obvious, since $\mat{B}$ depends on $\nu_i$.
\rev{The subtlety is that while $\mat{B}$ depends on $\nu_i$, the term $\mat{B}(\nu_i)^{-1}\vec{\varsigma}_i$ is equivalent to $\vec{\dot{\nu}}_i$, which being a generalised velocity, is independent of $\nu_i$.
Matters are greatly simplified if the group action $\mathcal{A}$ is transitive, since in this case $\mat{B}(\nu_i)$ is independent of $\nu_i$.}

It remains to study the last term in \eqref{eq:Liouville}, which expands to
\begin{align}
  &\int\!\!\sum_{i=1}^{N}\!\Big(\!\nabla_{\pt{x}_i}f_N\!\cdot\!\nabla_{\vec{p}_i}\hat{\mathcal{Z}}_{s}
  \!+\!\nabla_{\nu_i}f_N\!\cdot\!\nabla_{\vec{\varsigma}_i}\hat{\mathcal{Z}}_{s}
  \!-\!\nabla_{\vec{p}_i}f_N\!\cdot\!\nabla_{\pt{x}_i}\hat{\mathcal{Z}}_{s}
  \!-\!\nabla_{\vec{\varsigma}_i}f_N\!\cdot\!\nabla_{\nu_i}\hat{\mathcal{Z}}_{s}\!\Big)\!\,d\Gamma^{(s)}\\
  = -&\int\sum_{i=1}^{s}\nabla_{\vec{p}_i}f_N\cdot\sum_{j=s+1}^{N}\nabla_{\pt{x}_i}\mathcal{W}(\abs{\pt{x}_i-\pt{x}_j},\nu_i,\nu_j)\,d\Gamma^{(s)}\nonumber\\
  -&\int\sum_{j=s+1}^{N}\nabla_{\vec{p}_j}f_N\cdot\sum_{i=1}^{s}\nabla_{\pt{x}_j}\mathcal{W}(\abs{\pt{x}_i-\pt{x}_j},\nu_i,\nu_j)\,d\Gamma^{(s)}\nonumber\\
  -&\int\sum_{i=1}^{s}\nabla_{\vec{\varsigma}_i}f_N\cdot\sum_{j=s+1}^{N}\nabla_{\nu_i}\mathcal{W}(\abs{\pt{x}_i-\pt{x}_j},\nu_i,\nu_j)\,d\Gamma^{(s)}\nonumber\\
  -&\int\sum_{j=s+1}^{N}\nabla_{\vec{\varsigma}_j}f_N\cdot\sum_{i=1}^{s}\nabla_{\nu_j}\mathcal{W}(\abs{\pt{x}_i-\pt{x}_j},\nu_i,\nu_j)\,d\Gamma^{(s)}.\nonumber
\end{align}
We notice that the second and last terms in the previous equation vanish since they are exact divergences.
Since all the particles are indistinguishable, we can simplify the sum over $j$ as
\begin{align}
  -\int\!\{f_N,\hat{\mathcal{Z}}_{s}\} d\Gamma^{(s)}=& 
  \int\sum_{i=1}^{s}\nabla_{\vec{p}_i} f_N \cdot \sum_{j=s+1}^{N} \nabla_{\pt{x}_i} \mathcal{W}(\abs{\pt{x}_i-\pt{x}_j},\nu_i,\nu_j) \, d\Gamma^{(s)}\\
  &+\int\sum_{i=1}^{s} \nabla_{\vec{\varsigma}_i} f_N \cdot \sum_{j=s+1}^{N} \nabla_{\nu_i} \mathcal{W}(\abs{\pt{x}_i-\pt{x}_j},\nu_i,\nu_j) \, d\Gamma^{(s)}\nonumber\\
  =&\quad\,(N-s)\int\sum_{i=1}^{s} \nabla_{\vec{p}_i} f_N \cdot \nabla_{\pt{x}_i} \mathcal{W}(\abs{\pt{x}_i-\pt{x}_{s+1}},\nu_i,\nu_{s+1}) \, d\Gamma^{(s)}\nonumber\\
  &+(N-s)\int\sum_{i=1}^{s} \nabla_{\vec{\varsigma}_i} f_N \cdot \nabla_{\nu_i} \mathcal{W}(\abs{\pt{x}_i-\pt{x}_{s+1}},\nu_i,\nu_{s+1}) \, d\Gamma^{(s)}\nonumber\\
  =&\quad\,(N-s)\int\sum_{i=1}^{s} \nabla_{\vec{p}_i} f_{s+1} \cdot \nabla_{\pt{x}_i} \mathcal{W}(\abs{\pt{x}_i-\pt{x}_{s+1}},\nu_i,\nu_{s+1}) \, d\Gamma_{s+1}\nonumber\\
  &+(N-s)\int\sum_{i=1}^{s} \nabla_{\vec{\varsigma}_i} f_{s+1} \cdot \nabla_{\nu_i} \mathcal{W}(\abs{\pt{x}_i-\pt{x}_{s+1}},\nu_i,\nu_{s+1}) \, d\Gamma_{s+1},\nonumber
\end{align}
where the last identity has been obtained by integrating $f_N$ over $\Gamma^{(s)}$ to obtain the marginal $f_{s+1}$.
The $(N-s)$ factor comes from the fact that we are summing over $j=s+1,\dots,N$ and this amounts to $N-s$ identical contributions, due to the permutation symmetry of $f_N$.

We have now obtained the BBGKY hierarchy for the Boltzmann--type equations here considered, i.e.~
\begin{align}
  \frac{\partial f_s}{\partial t} + \{f_s,\mathcal{H}_s\} =& \quad\,(N-s)\int\sum_{i=1}^{s}\nabla_{\vec{p}_i}f_{s+1}\cdot\nabla_{\pt{x}_i}\mathcal{W}(\abs{\pt{x}_i-\pt{x}_{s+1}},\nu_i,\nu_{s+1})\,d\Gamma_{s+1}\\
  &+(N-s)\int\sum_{i=1}^{s}\nabla_{\vec{\varsigma}_i}f_{s+1}\cdot\nabla_{\nu_i}\mathcal{W}(\abs{\pt{x}_i-\pt{x}_{s+1}},\nu_i,\nu_{s+1})\,d\Gamma_{s+1}.\nonumber
\end{align}
We abuse notation and denote by $f_s$ the normalized joint $N$-particles distribution. This yields the following expression:
\begin{align}
  \frac{\partial f_s}{\partial t} + \{f_s,\mathcal{H}_s\} =&\quad\, \int\sum_{i=1}^{s}\nabla_{\vec{p}_i} f_{s+1}\cdot\nabla_{\pt{x}_i}\mathcal{W}(\abs{\pt{x}_i-\pt{x}_{s+1}},\nu_i,\nu_{s+1})\,d\Gamma_{s+1}\\
  &+\int\sum_{i=1}^{s}\nabla_{\vec{\varsigma}_i}f_{s+1}\cdot\nabla_{\nu_i}\mathcal{W}(\abs{\pt{x}_i-\pt{x}_{s+1}},\nu_i,\nu_{s+1})\,d\Gamma_{s+1}.\nonumber
\end{align}

We now focus our attention on the first two terms of the BBGKY hierarchy, the terms associated to the one and two particles distribution functions:
\begin{align}
  \label{eq:BBGKY1_pre}
  \frac{\partial f_1}{\partial t}\!+\!\frac{\vec{p}_1}{m}\!\cdot\!\nabla_{\pt{x}_1}\!f_1\!+\!\mat{B}(\nu_1)^{-1}\vec{\varsigma}_1\!\cdot\!\nabla_{\nu_1}\!f_1\!=&\quad\,\!\!\int\!\! \nabla_{\pt{x}_1}\mathcal{W}(\abs{\pt{x}_1-\pt{x}_2},\nu_1,\nu_2)\!\cdot\!\nabla_{\vec{p}_1}\!f_2\!\,d\Gamma_2\\
  &+\!\!\int\!\!\nabla_{\nu_1}\mathcal{W}(\abs{\pt{x}_1-\pt{x}_2},\nu_1,\nu_2)\cdot\nabla_{\vec{\varsigma}_1}f_2\,d\Gamma_2,\nonumber
\end{align}
\begin{align}
  \label{eq:BBGKY2}
  \frac{\partial f_2}{\partial t} + \frac{\vec{p}_1}{m}\cdot \nabla_{\pt{x}_1}f_2&+\mat{B}(\nu_1)^{-1}\vec{\varsigma}_1\cdot\nabla_{\nu_1} f_2 + \frac{\vec{p}_2}{m}\cdot\nabla_{\pt{x}_2} f_2+\mat{B}(\nu_2)^{-1}\vec{\varsigma}_2\cdot\nabla_{\nu_2} f_2\\
  &- \nabla_{\pt{x}_1}\mathcal{W}(\abs{\pt{x}_1-\pt{x}_2},\nu_1,\nu_2)\cdot \Big(\nabla_{\vec{p}_1}f_2-\nabla_{\vec{p}_2}f_2\Big) \nonumber\\
  &- \nabla_{\nu_1}\mathcal{W}(\abs{\pt{x}_1-\pt{x}_2},\nu_1,\nu_2)\cdot\Big(\nabla_{\vec{\varsigma}_1} f_2-\nabla_{\vec{\varsigma}_2} f_2\Big)=0\nonumber.
\end{align}
The right-hand side of the last equation vanishes because we have assumed $f_3\equiv 0$ since we are working under the assumption that we only have binary collisions.

The right-hand side of \eqref{eq:BBGKY1_pre} will be simplified below to yield the familiar Boltzmann collision term. It is more natural to write this collision term with relative velocities. We therefore add a vanishing term of the form $\nabla_{\pt{x}_2} \mathcal{W}(\abs{\pt{x}_1-\pt{x}_2},\nu_1,\nu_2)\cdot\nabla_{\vec{p}_2}f_2$ to \eqref{eq:BBGKY1_pre} to yield:
\begin{align}
  \label{eq:BBGKY1}
  \frac{\partial f_1}{\partial t}\!+\!\frac{\vec{p}_1}{m}\!\cdot\!\nabla_{\pt{x}_1}\!f_1\!+\!\mat{B}(\nu_1)^{-1}\vec{\varsigma}_1\!\cdot\!\nabla_{\nu_1}\!f_1\!=&\quad\,\!\!\int\!\! \nabla_{\pt{x}_1}\mathcal{W}(\abs{\pt{x}_1-\pt{x}_2},\nu_1,\nu_2)\!\cdot\!\Big(\nabla_{\vec{p}_1}\!f_2\!-\!\nabla_{\vec{p}_2}\!f_2\Big)\,d\Gamma_2\\
  &+\!\!\int\!\!\nabla_{\nu_1}\mathcal{W}(\abs{\pt{x}_1-\pt{x}_2},\nu_1,\nu_2)\cdot\nabla_{\vec{\varsigma}_1}f_2\,d\Gamma_2.\nonumber
\end{align}

At this point it is convenient to observe that there at least two timescales in the previous equations: the timescale associated to the macroscopic length scale and the timescale associated to the microscopic length scale.
For example, the left-hand side of \eqref{eq:BBGKY1} acts on the slow timescale while the right-hand side acts on the fast timescale, via the potential $\mathcal{W}$.
To highlight the same timescale separation in \eqref{eq:BBGKY2} we introduce fast and slowly varying coordinates, respectively\\
\begin{equation}
	\label{eq:fast_slow_coordinates}  
	\hat{\pt{x}} = \pt{x}_2-\pt{x}_1, \qquad \hat{\pt{X}} = \frac{1}{2}\left(\pt{x}_2+\pt{x}_1\right).
\end{equation}
We can compute $\nabla_{\hat{\pt{X}}} f_2$ and $\nabla_{\hat{\pt{x}}}f_2$ using the chain rule, i.e.
\begin{align}
	\nabla_{\hat{\pt{X}}} f_2\!&=\! \nabla_{\pt{x}_1}f_2 \cdot \frac{\partial \pt{x}_1}{\partial \hat{\pt{X}}} + \nabla_{\pt{x}_2} f_2\cdot \frac{\partial \pt{x}_2}{\partial \hat{\pt{X}}} \!=\! 2\left(\nabla_{\pt{x}_1} f_2\!+\! \nabla_{\pt{x}_2} f_2\right),\\
	\nabla_{\hat{\pt{x}}} f_2\!&=\! \nabla_{\pt{x}_1} f_2\cdot \frac{\partial \pt{x}_1}{\partial \hat{\pt{x}}} + \nabla_{\pt{x}_2}f_2 \cdot \frac{\partial \pt{x}_2}{\partial \hat{\pt{x}}} \!=\! - \nabla_{\pt{x}_1} f_2 \!+\! \nabla_{\pt{x}_2} f_2,
\end{align}
where we have denoted $\displaystyle\frac{\partial \pt{x}_i}{\partial \hat{\pt{X}}}$ and $\displaystyle\frac{\partial \pt{x}_i}{\partial \hat{\pt{x}}}$ the Jacobian matrices of the transformations \eqref{eq:fast_slow_coordinates}.
Using the previous identities we can rewrite \eqref{eq:BBGKY2} as
\begin{align}
  \frac{\partial f_2}{\partial t}\!+ \frac{1}{2}\frac{\vec{p}_2\!+\!\vec{p}_1}{m}\cdot\nabla_{\hat{\pt{X}}}f_2
  &+\mat{B}(\nu_1)^{-1}\vec{\varsigma}_1\cdot \nabla_{\nu_1} f_2+\mat{B}(\nu_2)^{-1}\vec{\varsigma}_2\cdot  \nabla_{\nu_2}f_2\;
  +\boxed{\!\frac{\vec{p}_2\!-\!\vec{p}_1}{m}\cdot\nabla_{\hat{\pt{x}}} f_2}&\label{eq:BBGKY2Separated}\\
  &-\;\boxed{\nabla_{\pt{x}_1} \mathcal{W}(\abs{\pt{x}_1-\pt{x}_2},\nu_1,\nu_2)\cdot\Big(\nabla_{\vec{p}_1}f_2-\nabla_{\vec{p}_2}f_2\Big)}&\nonumber\\
  &-\;\nabla_{\nu_1} \mathcal{W}(\abs{\pt{x}_1-\pt{x}_2},\nu_1,\nu_2)\cdot\Big(\nabla_{\vec{\varsigma}_1} f_2 - \nabla_{\vec{\varsigma}_2} f_2\Big)=0\nonumber
\end{align}
where we have boxed all the terms acting on the fast timescale.
\begin{remark}
	\rev{
	The standard way of separating the fast and slow timescales is via the following asymptotic expansion.
	Under the hypothesis that the two particle distribution function $f_2$ is non-vanishing only on the collisional length scale, we can introduce a small parameter $\varepsilon$ such that 
	\begin{equation}
		f_2(\pt{x}_1,\nu_1,\vec{p}_1,\vec{\varsigma}_1,\pt{x}_2,\nu_2,\vec{p}_2,\vec{\varsigma}_2,t) \ne 0 \Leftrightarrow \abs{\hat{\pt{x}}} = \pt{x}_2-\pt{x}_1 \geq \varepsilon.
	\end{equation}
	Thus the gradient of $f_2$ with respect to the fast variable $\hat{\pt{x}}$ does not vanish if and only if $\hat{\pt{x}}= \mathcal{O}(\varepsilon)$ and $\hat{\pt{X}} = \mathcal{O}(\frac{1}{2}(\pt{x}_1+\pt{x}_2))$.
	Letting $\varepsilon\to 0$ we obtain the separation of timescales highlighted in \eqref{eq:BBGKY2Separated}, more details can be found in \cite{kardar}.
	}
\end{remark}
\subsection{Weak-order interaction}
\label{sec:weak_order_interactions}
We might be tempted to introduce fast and slowly varying coordinates also for the order parameters $\nu_1$ and $\nu_2$.
Indeed, this is done in \cite{farrellEtAll} to derive the Boltzmann--Curtiss equation for non-spherical molecules from the BBGKY hierarchy.
Unfortunately, this is not always possible. For example, if the manifold $\mathcal{M}$ is not endowed with the structure of a vector space, as often happens, then a definition of fast and slowly varying coordinates similar to \eqref{eq:fast_slow_coordinates} would not be well-defined because the sum of two points in $\mathcal{M}$ might not belong to $\mathcal{M}$.
In \cite{farrellEtAll} to resolve this issue, the manifold $\mathcal{M}$ was implicitly embedded in a vector space, and fast and slowly varying coordinates were then defined in the embedded space.
This approach yields the Boltzmann--Curtiss equation,
but is suboptimal because we are unnecessarily increasing the dimension of the phase space: in the worst case this requires doubling the dimension of the manifold by the Whitney embedding theorem. This worst case arises in examples of practical interest, such as the head-tail symmetric calamitic molecules of \cref{ex:head-tail}.

Ideally, we would like to keep the low dimensionality of the manifold, and still be able to define fast and slowly varying coordinates.
The most straightforward approach is to observe that if the fluid is ordered then  we can assume $\nu_1$ and $\nu_2$ are slowly varying variables.
Thus, we can regard the slow timescale as stationary with respect to the fast timescale to obtain the following identity:
\begin{equation}
	\frac{\vec{p}_2-\vec{p}_1}{m}\cdot \nabla_{\hat{\pt{x}}} f_2 = \nabla_{\pt{x}_1}\mathcal{W}\left(\abs{\pt{x}_1-\pt{x}_2},\nu_1,\nu_2\right)\cdot\left(\nabla_{\vec{p}_1}f_2-\nabla_{\vec{p}_2}f_2\right).
\end{equation}
Substituting the previous identity in \eqref{eq:BBGKY1}, we obtain
\begin{align}
	\label{eq:BBGKY1_weak}
	\frac{\partial f_1}{\partial t} + \frac{\vec{p}_1}{m}\cdot\nabla_{\pt{x}_1}f_1+\mat{B}(\nu_1)^{-1}\vec{\varsigma}_1\cdot\nabla_{\nu_1} f_1 =&\quad\,\int \frac{\vec{p}_2-\vec{p}_1}{m}\cdot \nabla_{\pt{x}} f_2\,d\Gamma_2\\
  	&+\int\nabla_{\nu_1}\mathcal{W}(\abs{\pt{x}_1-\pt{x}_2},\nu_1,\nu_2)\cdot\nabla_{\vec{\varsigma}_1} f_2\,d\Gamma_2\nonumber.
\end{align}
The right-hand side of the previous equation still depends on the two-particle distribution function $f_2$, and thus is not closed.
Unfortunately, without more specific knowledge of the exact interactions taking place in the system, it is extremely difficult to close \eqref{eq:BBGKY1_weak}.
Once again the fact that the order parameters are subject to partial ordering comes to our rescue.
We will assume that the order parameters obey a mean-field description, which corresponds to the assumption that the conjugate moments of the order parameters of any two constituents are independent.
\begin{remark}
	The hypothesis that the conjugate moments of the order parameters obey a mean-field description is crucial to obtain an equation of Vlasov--Boltzmann type.
    This hypothesis is reasonable in many ordered fluids.
	For example liquid crystals phases are characterised by the behavior of the correlation functions of the order parameters only, not of the correlation functions of conjugate moments of the order parameters as well \cite{degennes}.
\end{remark}
\begin{definition}
	We say that a system is governed by weak-order interactions if the derivatives of the two-particle distribution function factorise as
	\begin{align} \label{eq:weak-order-interactions}
		\nabla_{\vec{\varsigma_i}} f_2(\Gamma_i,\Gamma_j, t) &= f_1(\Gamma_j,t)\nabla_{\vec{\varsigma_i}} f_1(\Gamma_i, t),
	\end{align}
	for $i\neq j$ and $i,j=1,2$.
\end{definition}
\begin{remark}
	A common modelling assumption is to assume that also the positions, the momenta, and the order parameters of any two constituents are independent, which is equivalent to assuming that the two-particle distribution function factorizes as
	\begin{equation} \label{eq:weak-interaction}
		f_2(\Gamma_1,\Gamma_2, t) = f_1(\Gamma_1, t)f_1(\Gamma_2, t).
	\end{equation}
	This would imply that all interaction are of weak nature: \eqref{eq:weak-interaction} implies \eqref{eq:weak-order-interactions}, but not vice versa.
	Such weak interactions are undesirable in the context of the present work because they lead to the derivation of Vlasov-type kinetic equations \cite{Klimontovich,lashmoreDavies}.
	Vlasov-type kinetic equations are of reversible nature, as we know from the fact that they are compatible with Loschmidt's paradox.
	\rev{The common resolution of Loschmidt's paradox is to observe that \eqref{eq:weak-interaction} only holds in the pre-collisional phase, while in the post-collisional phase the two-particle distribution function does not factorise because of the correlations generated by the collision.
	Thus to split the collision operator into a gain and a loss term, we need to transform the two-particle distribution function expressed in terms of the post-collisional variables into the two-particle distribution function expressed in terms of the pre-collisional variables.
	In the current section we will follow a similar approach \cite{cercignaniLampis}, but we will always assume that the two-particle distribution function factorises as in \eqref{eq:weak-order-interactions} both in the pre-collisional and post-collisional phase.
	} 
\end{remark}

\begin{remark}
\label{rmk:mesoscale}
	We assumed that $\nu_1$ and $\nu_2$ are varying on a different scale with respect to the fast coordinates. A natural question is whether this assumption is reasonable in physical systems of interest.
	For example let us consider the case of a nematic liquid crystals, the timescales associated with molecular collisions and the director reorientation are key to understanding the validity of the previous assumption. 

	The collision time, $ \tau_{\text{col}} $, is the characteristic time between molecular collisions. It is given by the expression
	\begin{equation}
	\tau_{\text{col}} \sim \frac{\ell_{\text{mfp}}}{v}
	\end{equation}
	where $ \ell_{\text{mfp}} $ is the mean free path and $ v $ is the thermal velocity of the molecules. For a typical liquid crystal, we take $ \ell_{\text{mfp}} \sim 10^{-7} \, \text{m} $ and $ v \sim 100 \, \text{m/s}$ (for this experimental data, see \cite{degennes}), yielding a collision time of
	\begin{equation}
		\tau_{\text{col}} \sim \frac{10^{-7} \, \text{m}}{10^{2} \, \text{m/s}} = 10^{-9} \, \text{seconds}.
	\end{equation}

	The director reorientation time, $ \tau_{\text{director}} $, represents the timescale for the collective reorientation of the director in response to external perturbations.
	The director reorientation time can be approximated by
	\begin{equation}
	\tau_{\text{director}} \sim \frac{K}{\mu},
	\end{equation}
	where $ K $ is the Frank elastic constant and $ \mu $ is the viscosity. Using typical values $ K \sim 10^{-12} \, \text{N} \cdot \text{m} $ and $ \mu \sim 10^{-3} \, \text{Pa} \cdot \text{s}$ (for this experimental data, see \cite{degennes}), we obtain
	\begin{equation}
	\tau_{\text{director}} \sim \frac{10^{-12}}{10^{-3}} = 10^{-9} \, \text{seconds}.
	\end{equation}

	Yet, in nematic liquid crystals, the effective viscosity can be significantly higher than the bulk viscosity due to the molecular alignment inherent in the nematic phase. In this phase, the molecules are anisotropic and tend to align along a preferred direction. Unlike isotropic liquids, where molecules can move freely in all directions, the molecular movement in nematic liquid crystals is constrained by the alignment, which increases resistance to flow. As a result, the effective viscosity in the direction of the director's reorientation is higher.
	The director reorientation time, $ \tau_{\text{director}} $, can be expressed in terms of the effective viscosity $ \mu_{\text{eff}} $ as
	\begin{equation}
	\tau_{\text{director}} \sim \frac{K}{\mu_{\text{eff}}},
	\end{equation}
	where $ K $ is the Frank elastic constant and $ \mu_{\text{eff}} $ is the effective viscosity. Assuming a typical value for the Frank elastic constant $ K \sim 10^{-12} \, \text{N} \cdot \text{m} $ and considering a higher effective viscosity $\mu_{\text{eff}} \sim 10^{-2} \, \text{Pa} \cdot \text{s}$ due to the molecular alignment (for this experimental data, see \cite{OzakiEtAll}), we can compute the director reorientation time as
	\begin{equation}
	\tau_{\text{director}} \sim \frac{10^{-12}}{10^{-2}} = 10^{-10} \, \text{seconds}.
	\end{equation}

	This suggests that the director reorientation time can be an order of magnitude smaller than the collision time, which supports the assumption that $\nu_1$ and $\nu_2$ are varying on a different scale with respect to the fast coordinates.
	In particular, this observation suggest that the nematic ordering correlates molecules on larger length scales than the collisional length scale. The assumption of long-range correlations is typical of Vlasov-type kinetic equations, and justifies the weak-order interaction hypothesis.
\end{remark}
 
Under the assumption of weak-order interactions, we can rewrite \eqref{eq:BBGKY1_weak} as 
\begin{align}
	\frac{\partial f_1}{\partial t} + \frac{\vec{p}_1}{m}\cdot\nabla_{\pt{x}_1}f_1&+\mat{B}(\nu_1)^{-1}\vec{\varsigma}_1\cdot\nabla_{\nu_1}f_1 = \int \frac{\vec{p}_2-\vec{p}_1}{m}\cdot \nabla_{\hat{\pt{x}}}f_2\,d\Gamma_2\\
  	&+\int\nabla_{\nu_1}\mathcal{W}(\abs{\pt{x}_1-\pt{x}_2},\nu_1,\nu_2)f_1(\Gamma_2,t)\cdot\nabla_{\vec{\varsigma_1}}f_1(\Gamma_1,t)\,d\Gamma_2\nonumber.
\end{align}
Introducing the mean-field force,
\begin{equation}
	\label{eq:vlasov_Force}
	\mathcal{V}(\pt{x}_1,\nu_1) = -\int \nabla_{\nu_1}\mathcal{W}(\abs{\pt{x}_1-\pt{x}_2},\nu_1,\nu_2)f_1(\Gamma_2,t)\,d\Gamma_2,
\end{equation}
we can rewrite the previous equation as
\begin{equation}
	\label{eq:BBGKY1_weak_order}
	\frac{\partial f_1}{\partial t} + \frac{\vec{p}_1}{m}\cdot\nabla_{\pt{x}_1}f_1+\mat{B}(\nu_1)^{-1}\vec{\varsigma}_1\cdot\nabla_{\nu_1}f_1 + \mathcal{V}(\pt{x}_1,\nu_1)\cdot\nabla_{\vec{\varsigma}_1}f_1 = \int \frac{\vec{p}_2-\vec{p}_1}{m}\cdot \nabla_{\hat{\pt{x}}}f_2\,d\Gamma_2. 
\end{equation}

Following the classical derivation of the Boltzmann collision operator \cite{harris,cercignani} we assume that the interactions occurring between the fluid constituents are spatially of short range nature. We consider a ball of radius $2R$ centered at $\pt{x}_i$, denoted $B_{2R}(\pt{x}_i)$, where $R>0$ is greater than the interaction diameter, and smaller than the mean-free path. We then assume that any two fluid constituents whose distance is smaller than $2R$ interact with each other in a binary fashion.
Under this hypothesis we can reduce the spatial integral appearing on the right-hand side to an integral over the ball of radius $2R$ and via Gauss' theorem we can rewrite the integral as an integral over the boundary of the ball, 
\begin{equation}
	\int \frac{\vec{p}_2-\vec{p}_1}{m}\cdot \nabla_{\hat{\pt{x}}}f_2\,d\Gamma_2 = \frac{1}{m}\int d\vec{\varsigma}_2\, d\vec{p}_2\, d\nu_2 \,\int_{\partial B_{2R}(\pt{x}_1)} \left(\vec{p}\cdot \vec{n}\right) f_2(\Gamma_1,\Gamma_2,t)\, dA,
\end{equation}
where $\vec{n}$ is the outward normal to the boundary of the ball, $\vec{p} =  \vec{p}_2-\vec{p}_1$ is the relative linear momentum of the two fluid constituents and $dA$ is the surface area element over the boundary of the ball.
We can then split the integral over the boundary of the ball into two integrals, one over the portion of the sphere where $\vec{p}\cdot \vec{n}>0$ and one over the portion of the sphere where $\vec{p}\cdot \vec{n}<0$, respectively denoted as $\partial B_{2R}^+(\pt{x}_1)$ (post-collisional) and $\partial B_{2R}^-(\pt{x}_1)$ (pre-collisional):
\begin{align}
	\int \frac{\vec{p}_2-\vec{p}_1}{m}\cdot \nabla_{\hat{\pt{x}}}f_2\,d\Gamma_2 &= \frac{1}{m}\int d\vec{\varsigma}_2\, d\vec{p}_2\, d\nu_2 \,\int_{\partial B_{2R}^+(\pt{x}_1)} \abs{\vec{p}\cdot \vec{n}} f_2(\Gamma_1^\prime,\Gamma_2^\prime,t)\, dA \\
	&- \frac{1}{m}\int d\vec{\varsigma}_2\, d\vec{p}_2\, d\nu_2 \,\int_{\partial B_{2R}^-(\pt{x}_1)} \abs{\vec{p}\cdot \vec{n}} f_2(\Gamma_1,\Gamma_2,t)\, dA.\nonumber
\end{align}
Using the molecular chaos assumption for constituents about to interact, we rewrite the previous equation as
\begin{align}
	\int \frac{\vec{p}_2-\vec{p}_1}{m}\cdot \nabla_{\hat{\pt{x}}}f_2\,d\Gamma_2 &=\frac{1}{m} \int d\vec{\varsigma}_2\, d\vec{p}_2\, d\nu_2 \,\int_{\partial B_{2R}^+(\pt{x}_1)} \abs{\vec{p}\cdot \vec{n}} f_2(\Gamma_1^\prime,\Gamma_2^\prime,t)\, dA \\
	&-\frac{1}{m} \int d\vec{\varsigma}_2\, d\vec{p}_2\, d\nu_2 \,\int_{\partial B_{2R}^-(\pt{x}_2)} \abs{\vec{p}\cdot \vec{n}} f_1(\Gamma_1,t)f_1(\Gamma_2,t)\, dA.\nonumber
\end{align}
Transforming the area element $\abs{\vec{n}\cdot \vec{p}}dS$ into a double integral over $\theta_2\in [0,\frac{\pi}{2}]$ and $\varphi_2\in [0,2\pi]$, we can rewrite this as
\begin{align}
	\int \frac{\vec{p}_2-\vec{p}_1}{m}\cdot \nabla_{\hat{\pt{x}}}f_2\,d\Gamma_2 &= \int d\vec{\varsigma}_2\, d\vec{p}_2\, d\nu_2 \,\int_{0}^{\frac{\pi}{2}}\int_{0}^{2\pi} \mathcal{B}^{+}(\theta, \abs{\vec{p}}) f_2(\Gamma_1^\prime,\Gamma_2^\prime,t)\, d\theta_2\, d\varphi_2 \\
	&- \int d\vec{\varsigma}_2\, d\vec{p}_2\, d\nu_2 \,\int_{0}^{\frac{\pi}{2}}\int_{0}^{2\pi} \mathcal{B}^{-}(\theta, \abs{\vec{p}}) f_1(\Gamma_1,t)f_1(\Gamma_2,t)\, d\theta_2\, d\varphi_2,\nonumber
\end{align}
where $\mathcal{B}^{\pm}(\theta, \abs{\vec{p}})$ are known as the scattering kernels, which we scaled by the mass, and is the Jacobian of the transformation from the surface area element to the double integral \cite{harris}.
As usual, we can relate the scattering kernel to the differential cross-section of the interaction, i.e.~$\mathcal{B}^{\pm}(\theta, \abs{\vec{p}}) = \abs{\vec{p}}\Sigma^{\pm}(\abs{\vec{p}},\theta)$, to obtain
\begin{align}
	\int \frac{\vec{p}_2-\vec{p}_1}{m}\cdot \nabla_{\hat{\pt{x}}}f_2\,d\Gamma_2 &= \int d\vec{\varsigma}_2\, d\vec{p}_2\, d\nu_2 \,\int_{0}^{\frac{\pi}{2}}\!\!\int_{0}^{2\pi} \abs{\vec{p}}\Sigma^+(\abs{\vec{p}},\theta_2)f_2(\Gamma_1^\prime,\Gamma_2^\prime,t)\, d\theta_2\, d\varphi_2 \\
	&- \int d\vec{\varsigma}_2\, d\vec{p}_2\, d\nu_2 \,\int_{0}^{\frac{\pi}{2}}\!\!\int_{0}^{2\pi} \abs{\vec{p}}\Sigma^-(\abs{\vec{p}},\theta_2)f_1(\Gamma_1,t)f_1(\Gamma_2,t)\, d\theta_2\, d\varphi_2.\nonumber
\end{align}
We now trace back the collision coordinates $\Gamma_1^\prime$ and $\Gamma_2^\prime$ to the pre-collision coordinates $\Gamma_1$ and $\Gamma_2$, to apply the molecular chaos assumption also on the first term of the collision operator:
\begin{align}
	\int \frac{\vec{p}_2-\vec{p}_1}{m}\cdot \nabla_{\hat{\pt{x}}}f_2\,d\Gamma_2 &= \int d\vec{\varsigma}_2\, d\vec{p}_2\, d\nu_2 \,\int_{0}^{\frac{\pi}{2}}\!\!\int_{0}^{2\pi} \abs{\vec{p}}\Sigma^+(\abs{\vec{p}},\theta_2)f_1(\Gamma_1,t)f_1(\Gamma_2,t)\, d\theta_2\, d\varphi_2 \\
	&- \int d\vec{\varsigma}_2\, d\vec{p}_2\, d\nu_2 \,\int_{0}^{\frac{\pi}{2}}\!\!\int_{0}^{2\pi} \abs{\vec{p}}\Sigma^-(\abs{\vec{p}},\theta_2)f_1(\Gamma_1,t)f_1(\Gamma_2,t)\, d\theta_2\, d\varphi_2,\nonumber
\end{align}
Notice that in the previous equation we have only transformed the arguments of the integral, not the domain of integration, and no Jacobian is needed to transform the integral.

Following \cite{cercignaniLampis} we introduce the pre-collisional variables $\Xi_i\coloneqq \left(\vec{p}_i,\vec{\varsigma}_i,\nu_i\right)$ and the post-collisional variables $\Xi_i^\prime\coloneqq \left(\vec{p}_i^\prime,\vec{\varsigma}_i^\prime,\nu_i^\prime\right)$ and the transition probabilities, here denoted $W\!\left(\Xi_1^\prime,\Xi_2^\prime\mapsto \Xi_1,\Xi_2\right)$ and $W\!\left(\Xi_1,\Xi_2\mapsto \Xi_1^\prime,\Xi_2^\prime\right)$ defined as 
\begin{align}
	\int d\Xi_1^\prime\, d\Xi_2^\prime W\left( \Xi_1^\prime,\Xi_2^\prime\mapsto \Xi_1,\Xi_2\right) = \abs{\vec{p}} \Sigma^+(\abs{\vec{p}},\theta_2), \\
	\int d\Xi_1^\prime \, d\Xi_2^\prime W\left( \Xi_1,\Xi_2\mapsto \Xi_1^\prime,\Xi_2^\prime\right) = \abs{\vec{p}} \Sigma^-(\abs{\vec{p}},\theta_2).
\end{align}
We then rewrite the previous integral as the collision operator
\begin{align}
	\label{eq:collision_operator}
	C[f_1,f_1]  \coloneqq \!\! \int     \!d\Xi_1^\prime\, d\Xi_2^\prime \, d\Xi_2 &\int_{0}^{\frac{\pi}{2}}\!\!\int_{0}^{2\pi} \!\!W\left( \Xi_1^\prime,\Xi_2^\prime\mapsto \Xi_1,\Xi_2\right)f_1(\Gamma_1^\prime,t)f_1(\Gamma_2^\prime,t)\\
	- &W\left( \Xi_1,\Xi_2\mapsto \Xi_1^\prime,\Xi_2^\prime\right)f_1(\Gamma_1,t)f_1(\Gamma_2,t)\, d\theta_2\, d\varphi_2.\nonumber
\end{align}
\begin{remark}
	\label{rmk:cercignaniLampis}
	It is important to stress that we are only considering instantaneous binary interactions.
	This is equivalent to stating that only the conjugate momenta are exchanged during the interaction while the order parameters and the positions are left unchanged in the instant at which the interaction takes place.
	Furthermore, we point out that in the previous equation, not only the linear momentum of the two constituents is exchanged, but also the angular momentum associated to the order parameters.
	This observation will be crucial in characterizing the collision invariants of the operator $C[\cdot,\cdot]$. 
	\rev{
		It is well known that the hypothesis of instantaneous binary interactions is not always valid in physical systems of interest, and in particular leads to unphysical predictions in the context of the collision rule considered in \cref{ex:collision_response} \cite{impactParadox}.
		However, the assumption of instantaneous binary interactions is a common modelling assumption in kinetic theory. For this reason we opt to maintain it in this work, and plan to relax it in future works following approaches similar to those of \cite{kanzler}.
	}
\end{remark}

The transformation from the pre-collision coordinates to the post-collision coordinates $\Gamma_1\mapsto \Gamma_1^\prime$ and $\Gamma_2\mapsto \Gamma_2^\prime$ is volume preserving.
In fact, such transformation is the consequence of a Hamiltonian flow, which is known to be volume preserving \cite{fasanoMarmi}.
Furthermore, reversing the time flow of the Hamiltonian system we obtain that 
\begin{equation}
	\label{eq:cercignaniLampis}
	\int d\Xi_1\, d\Xi_2 \;W\left( \Xi_1^\prime,\Xi_2^\prime\mapsto \Xi_1,\Xi_2\right) = \int d\Xi_1\, d\Xi_2\; W\left( \Xi_1,\Xi_2\mapsto \Xi_1^\prime,\Xi_2^\prime\right).
\end{equation}
This last equation, often called the reciprocity principle, is the identity used in \cite{cercignaniLampis} to prove the H-theorem for polyatomic gases, without resorting to the Boltzmann ``closed collision cycle'' argument.
In particular, the previous equation states that the marginal transition probability with respect to the pre-collision conjugate momenta must be equal when transitioning from the pre-collision to the post-collision coordinates and vice-versa.

\subsection{Collision invariants and steady state}
\label{sec:collision_invariants}

With the collision operator in hand, we now turn to proving a H-theorem. As in the case of polyatomic gases, the lack of symmetry of the collision kernel (manifesting as the presence of two different transition probabilities) requires a careful treatment to prove the Boltzmann inequality. The argument in this section follows that of Cercignani \& Lampis~\cite{cercignaniLampis}.

We consider a generalization of the collision operator defined in \eqref{eq:collision_operator}:
\begin{align}
	\label{eq:collision_operator_general}
	C[f,g] =&\frac{1}{2} \! \int     \!d\Xi_1^\prime\, d\Xi_2^\prime \, d\Xi_2 \!\!\int_{0}^{\frac{\pi}{2}}\!\!\!\!\int_{0}^{2\pi} \!\!\!\!\!\!\!W\left( \Xi_1^\prime,\Xi_2^\prime\mapsto \Xi_1,\Xi_2\right)\Big(f(\Gamma_1^\prime,t)g(\Gamma_2^\prime,t)+f(\Gamma_2^\prime,t)g(\Gamma_1^\prime,t)\Big)\\
	&\!-W\left( \Xi_1,\Xi_2\mapsto \Xi_1^\prime,\Xi_2^\prime\right)\Big(f(\Gamma_1,t)g(\Gamma_2t)+f(\Gamma_2,t)g(\Gamma_1,t)\Big)\, d\theta_2\, d\varphi_2.\nonumber
\end{align}
We have dropped the subscript $1$ from the one particle distribution function to lighten the notation.
We are interested in studying integrals of a quantity $\psi(\Gamma)$ tested against the collision operator, i.e.
\begin{equation}
	\label{eq:weighted_integral}
	\int d\Xi_1 \psi(\Gamma_1) C[f,g].
\end{equation}
\begin{lemma}
\label{lem:weak-form}
The result of testing the collision operator $C[f, g]$ against a function $\psi(\Gamma)$ is given by
	\begin{align}
		\label{eq:collision_invariants}
		\int &d\Xi_1\, \psi(\Gamma_1)C[f,g]=\\
		+&\frac{1}{8}\! \int \!\!\!d\Xi_1^\prime\, d\Xi_2^\prime d\Xi_2\,d\Xi_1 \!\!\int_{0}^{\frac{\pi}{2}}\!\!\!\int_{0}^{2\pi} \!\!\!\!\!\!\!W\!\left( \Xi_1^\prime,\Xi_2^\prime\mapsto \Xi_1,\Xi_2 \right) \times\!\left(\psi(\Gamma_1)+\psi(\Gamma_2)-\psi(\Gamma_1^\prime)-\psi(\Gamma_2^\prime)\right) \nonumber\\
		&\times \Big(f(\Gamma_1^\prime,t)g(\Gamma_2^\prime,t)+f(\Gamma_2^\prime,t)g(\Gamma_1^\prime,t)\Big)\,d\theta_2\,d\varphi_2 \nonumber\\
		-&\frac{1}{8}\! \int \!\!\!d\Xi_1^\prime\, d\Xi_2^\prime d\Xi_2\,d\Xi_1 \!\!\int_{0}^{\frac{\pi}{2}}\!\!\!\int_{0}^{2\pi} \!\!\!\!\!\!\!W\!\left( \Xi_1,\Xi_2\mapsto \Xi_1^\prime,\Xi_2^\prime\right) \times\! \left(\psi(\Gamma_1)+\psi(\Gamma_2)-\psi(\Gamma_1^\prime)-\psi(\Gamma_2^\prime)\right) \nonumber\\
		&\times\Big(f(\Gamma_1,t)g(\Gamma_2,t)+f(\Gamma_2,t)g(\Gamma_1,t)\Big)\,d\theta_2\,d\varphi_2.\nonumber
	\end{align}
\end{lemma}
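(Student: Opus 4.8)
The plan is to prove \cref{lem:weak-form} by symmetrisation, that is, by relabelling the dummy integration variables in \eqref{eq:weighted_integral} using two symmetries of the integrand: invariance under exchanging the particle labels $1\leftrightarrow 2$, and invariance under exchanging the pre-collision variables $(\Xi_1,\Xi_2)$ with the post-collision variables $(\Xi_1^\prime,\Xi_2^\prime)$. Write $\mathcal{G}[h]$ and $\mathcal{L}[h]$ for the ``gain'' and ``loss'' functionals, so that by \eqref{eq:collision_operator_general} one has $\int d\Xi_1\,\psi(\Gamma_1)\,C[f,g]=\tfrac12\bigl(\mathcal{G}[\psi(\Gamma_1)]-\mathcal{L}[\psi(\Gamma_1)]\bigr)$, where $\mathcal{G}[h]$ is the integral of $W(\Xi_1^\prime,\Xi_2^\prime\mapsto\Xi_1,\Xi_2)\,h\,\bigl(f(\Gamma_1^\prime)g(\Gamma_2^\prime)+f(\Gamma_2^\prime)g(\Gamma_1^\prime)\bigr)$ against $d\Xi_1^\prime\,d\Xi_2^\prime\,d\Xi_2\,d\Xi_1\,d\theta_2\,d\varphi_2$, and $\mathcal{L}[h]$ is the integral of $W(\Xi_1,\Xi_2\mapsto\Xi_1^\prime,\Xi_2^\prime)\,h\,\bigl(f(\Gamma_1)g(\Gamma_2)+f(\Gamma_2)g(\Gamma_1)\bigr)$ against the same measure.

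First I would record the elementary relabelling identities. Exchanging the labels $1\leftrightarrow 2$ leaves the measure, the kernel $W$, and the symmetric products $f(\Gamma_1)g(\Gamma_2)+f(\Gamma_2)g(\Gamma_1)$ and $f(\Gamma_1^\prime)g(\Gamma_2^\prime)+f(\Gamma_2^\prime)g(\Gamma_1^\prime)$ unchanged, whence $\mathcal{G}[\psi(\Gamma_1)]=\mathcal{G}[\psi(\Gamma_2)]$ and $\mathcal{L}[\psi(\Gamma_1)]=\mathcal{L}[\psi(\Gamma_2)]$. Exchanging the names $(\Xi_1,\Xi_2)\leftrightarrow(\Xi_1^\prime,\Xi_2^\prime)$---which, since $W(\cdot,\cdot\mapsto\cdot,\cdot)$ is a single fixed function, turns $W(\Xi_1^\prime,\Xi_2^\prime\mapsto\Xi_1,\Xi_2)$ into $W(\Xi_1,\Xi_2\mapsto\Xi_1^\prime,\Xi_2^\prime)$ and swaps $\Gamma_i\leftrightarrow\Gamma_i^\prime$---gives $\mathcal{G}[\psi(\Gamma_i^\prime)]=\mathcal{L}[\psi(\Gamma_i)]$ and $\mathcal{L}[\psi(\Gamma_i^\prime)]=\mathcal{G}[\psi(\Gamma_i)]$ for $i=1,2$. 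I would stress that the product measure $d\Xi_1^\prime\,d\Xi_2^\prime\,d\Xi_2\,d\Xi_1\,d\theta_2\,d\varphi_2$ is invariant under both relabellings, so that no Jacobian is produced.

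Setting $\Phi\coloneqq\psi(\Gamma_1)+\psi(\Gamma_2)-\psi(\Gamma_1^\prime)-\psi(\Gamma_2^\prime)$ and using linearity together with the identities above, I would then obtain $\mathcal{G}[\Phi]=2\mathcal{G}[\psi(\Gamma_1)]-2\mathcal{L}[\psi(\Gamma_1)]$ and $\mathcal{L}[\Phi]=2\mathcal{L}[\psi(\Gamma_1)]-2\mathcal{G}[\psi(\Gamma_1)]=-\mathcal{G}[\Phi]$. Since the right-hand side of \eqref{eq:collision_invariants} is precisely $\tfrac18\mathcal{G}[\Phi]-\tfrac18\mathcal{L}[\Phi]$, it equals $\tfrac14\mathcal{G}[\Phi]=\tfrac12\bigl(\mathcal{G}[\psi(\Gamma_1)]-\mathcal{L}[\psi(\Gamma_1)]\bigr)=\int d\Xi_1\,\psi(\Gamma_1)\,C[f,g]$, which is the asserted identity.

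The main obstacle is making the two relabellings rigorous rather than formal. The $1\leftrightarrow 2$ symmetry of $W$ is a consequence of the indistinguishability of the constituents in a homogeneous fluid (\cref{rmk:homo}) and should be posited as a property of the transition probabilities. The delicate step is the pre/post-collision renaming: one must verify that the product measure, \emph{together with} the integration ranges of $(\theta_2,\varphi_2)$, is genuinely invariant, i.e.\ that no Jacobian factor and no change of domain appear. This is where the volume-preserving character of the Hamiltonian collision flow and the reciprocity relation recorded in \eqref{eq:cercignaniLampis} must be invoked, exactly as in Cercignani \& Lampis \cite{cercignaniLampis}; everything else in the argument is a formal manipulation of dummy variables.
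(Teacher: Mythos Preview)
Your argument is correct and follows exactly the approach the paper sketches: the two ingredients are the particle-exchange symmetry of $W$ (indistinguishability, \cref{rmk:homo}) and the pre/post-collision swap, and your gain/loss bookkeeping makes the $\tfrac18$ factors transparent. One small over-caution: in the formulation of \eqref{eq:collision_operator_general} all four variables $\Xi_1,\Xi_2,\Xi_1^\prime,\Xi_2^\prime$ are integrated \emph{independently} and the collision constraints are carried by $W$, so the swap $(\Xi_1,\Xi_2)\leftrightarrow(\Xi_1^\prime,\Xi_2^\prime)$ is a pure relabelling of dummy variables with no Jacobian and no appeal to the reciprocity relation \eqref{eq:cercignaniLampis}; volume preservation of the Hamiltonian flow and reciprocity enter only later, in the proof of \cref{thm:boltzmannInequality}.
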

\begin{proof}
	The proof is a classical result \cite{cercignani}, a consequence of the fact that the transformation $\Gamma_1\mapsto \Gamma_2$ has unitary Jacobian and the fact that the transition $(\Gamma_1,\Gamma_2)\mapsto (\Gamma_1^\prime,\Gamma_2^\prime)$ is also volume preserving.
	Furthermore, one has to use the fact that the transition probability $W(\cdot\mapsto \cdot)$ is independent of which one is molecule one and which one is molecule two, since all molecules are indistinguishable \cite{cercignaniLampis}.
\end{proof}
\begin{remark}
	Notice that \eqref{eq:collision_invariants} differs from the classical weak formulation of the Boltzmann collision operator.
	This is because we are working here under \eqref{eq:cercignaniLampis}, which is referred to as the reciprocity assumption in \cite{cercignaniLampis}.
	Under the stronger assumption of detailed balance, i.e.~$W\left( \Xi_1^\prime,\Xi_2^\prime\mapsto \Xi_1,\Xi_2\right) = W\left( \Xi_1,\Xi_2\mapsto \Xi_1^\prime,\Xi_2^\prime\right)$, we can group together the first and second terms in \eqref{eq:collision_invariants} to obtain the classical weak form of the Boltzmann collision operator:
		\begin{align}
			\int &d\Xi_1\, \psi(\Gamma_1)C[f,g]=\\
			&\frac{1}{8}\! \int \!\!\!d\Xi_1^\prime\, d\Xi_2^\prime d\Xi_2\,d\Xi_1 \!\!\int_{0}^{\frac{\pi}{2}}\!\!\!\int_{0}^{2\pi} \!\!\!\!\!\!\!W\!\left( \Xi_1,\Xi_2\mapsto \Xi_1^\prime,\Xi_2^\prime\right) \times\! \left(\psi(\Gamma_1)+\psi(\Gamma_2)-\psi(\Gamma_1^\prime)-\psi(\Gamma_2^\prime)\right) \nonumber\\
			&\times\Big(f(\Gamma_1^\prime,t)g(\Gamma_2^\prime,t)+f(\Gamma_2^\prime,t)g(\Gamma_1^\prime,t)-f(\Gamma_1,t)g(\Gamma_2,t)-f(\Gamma_2,t)g(\Gamma_1,t)\Big)\,d\theta_2\,d\varphi_2. \nonumber
		\end{align}

\end{remark}
\begin{theorem}
The quantities $\psi_1(\Gamma_1)=1$, $\psi_2(\Gamma_1)=\vec{p}$, $\psi_3(\Gamma_1)=A_{\pt{x}} \mat{Q}\cdot \vec{p}+A_\nu \mat{Q}\cdot \mat{B}(\nu)\vec{\dot{\nu}}$, and $\psi_4(\Gamma_1)=\frac{1}{2m}\vec{p}\cdot\vec{p} + \frac{1}{2}\varsigma\cdot \mat{B}(\nu)^{-1}\varsigma$ are collision invariants, i.e
	\begin{equation}
		\label{eq:collision_invariants_2}
		\int d\Xi_1\, \psi_i(\Gamma_1)C[f,f]=0,\quad i=1,2,3,4.
	\end{equation}
\end{theorem}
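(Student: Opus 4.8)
The plan is to reduce the whole statement to Lemma \ref{lem:weak-form}. That lemma rewrites $\int d\Xi_1\,\psi(\Gamma_1)C[f,f]$ as a sum of two integrals, each carrying the factor $\psi(\Gamma_1)+\psi(\Gamma_2)-\psi(\Gamma_1^\prime)-\psi(\Gamma_2^\prime)$. Hence it suffices to show that each $\psi_i$ is a \emph{summational} (collision) invariant, i.e.
\[
	\psi_i(\Gamma_1)+\psi_i(\Gamma_2)=\psi_i(\Gamma_1^\prime)+\psi_i(\Gamma_2^\prime)
\]
for every admissible binary interaction $(\Gamma_1,\Gamma_2)\mapsto(\Gamma_1^\prime,\Gamma_2^\prime)$. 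Once this pointwise identity holds, the integrand of both terms in \eqref{eq:collision_invariants} vanishes, giving \eqref{eq:collision_invariants_2}. Observe that detailed balance is not needed here: both the forward and the backward transition-probability terms in \eqref{eq:collision_invariants} are annihilated by the vanishing of the $\psi$-difference, so only the reciprocity relation \eqref{eq:cercignaniLampis} underlying Lemma \ref{lem:weak-form} is used.

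The summational-invariant property of each $\psi_i$ is exactly what the Noether analysis of Section \ref{sec:symmetry} supplies, once we recall (Remark \ref{rmk:cercignaniLampis}) that an instantaneous binary interaction changes only the conjugate momenta $(\vec{p}_i,\vec{\varsigma}_i)$ and leaves $\pt{x}_i$ and $\nu_i$ fixed, so that the operators $A_{\pt{x}}$ and $A_\nu$ take the same value before and after. For $\psi_1=1$ the identity is trivial (particle number). For $\psi_2=\vec{p}$ it is conservation of total linear momentum \eqref{eq:conservation_linear}, Proposition \ref{prop:linear}, read componentwise. For $\psi_3=A_{\pt{x}}\mat{Q}\cdot\vec{p}+A_\nu\mat{Q}\cdot\mat{B}(\nu)\vec{\dot{\nu}}$ it is conservation of the generalized angular momentum \eqref{eq:conservation_angular}, Proposition \ref{prop:angular}, holding for each fixed generator $\mat{Q}$. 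For $\psi_4$ one uses $\vec{\varsigma}=\mat{B}(\nu)\vec{\dot{\nu}}$ to write $\frac{1}{2}\vec{\varsigma}\cdot\mat{B}(\nu)^{-1}\vec{\varsigma}=\frac{1}{2}\vec{\dot{\nu}}\cdot\mat{B}(\nu)\vec{\dot{\nu}}$; since the interaction is instantaneous, $\mathcal{W}(\abs{\pt{x}_1-\pt{x}_2},\nu_1,\nu_2)$ is unchanged, so Proposition \ref{prop:energy} and the ensuing remark give conservation of the total kinetic energy \eqref{eq:conservation_energy}. In each case this is precisely the statement that $(\Gamma_1^\prime,\Gamma_2^\prime)$ lies on the post-collision manifold $\mathfrak{M}$ of Section \ref{sec:microscopic_collision}.

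The only point requiring a little care — and the main, if mild, obstacle — is the passage from the Lagrangian/Hamiltonian conservation laws, derived for the \emph{continuous} two-body flow, to the collision \emph{map} appearing in $C[\cdot,\cdot]$. This is handled by recalling that the collision map is the scattering map of that Hamiltonian flow (asymptotic in-state to asymptotic out-state): the quantities shown to be constant along the flow, \eqref{eq:conservation_linear}, \eqref{eq:conservation_angular} and \eqref{eq:conservation_energy}, therefore take equal values on $(\Gamma_1,\Gamma_2)$ and $(\Gamma_1^\prime,\Gamma_2^\prime)$, which is exactly the desired summational invariance. Assembling the four identities and inserting them into Lemma \ref{lem:weak-form} completes the proof; no estimates beyond the integrability already implicit in Section \ref{sec:collision_invariants} are needed.
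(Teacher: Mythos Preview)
Your proof is correct and follows essentially the same approach as the paper: invoke Lemma~\ref{lem:weak-form} to reduce the claim to the summational identity $\psi_i(\Gamma_1)+\psi_i(\Gamma_2)=\psi_i(\Gamma_1^\prime)+\psi_i(\Gamma_2^\prime)$, then appeal to the conservation laws of Section~\ref{sec:symmetry} (Propositions~\ref{prop:linear}, \ref{prop:angular}, \ref{prop:energy}) and the discussion in Section~\ref{sec:microscopic_collision}. Your write-up is somewhat more detailed than the paper's---in particular the remark on passing from the continuous Hamiltonian flow to the scattering map---but the logical skeleton is identical.
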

\begin{proof}	
	From Lemma \ref{lem:weak-form} we can conclude that if the following identity holds, then the quantity $\psi(\Gamma_1)$ is a collision invariant:
	\begin{equation}
		\label{eq:collision_invariants_final}
		\psi(\Gamma_1,t)+\psi(\Gamma_2,t)-\psi(\Gamma_1^\prime,t)-\psi(\Gamma_2^\prime,t) = 0.
	\end{equation}
	From Section \ref{sec:microscopic_collision} we know that $\psi_i$, with $i=1,2,3,4,$ satisfies this identity.
\end{proof}
Of particular interest are one-particle distribution functions $\bar{f}(\Gamma,t)$ for which the collision operator $C[\bar{f},\bar{f}]$ vanishes.
The standard approach to identifying such functions is to prove a Boltzmann type inequality.
As pointed out by Lorentz~\cite[Volume 6, p.~74]{lorentz}, Boltzmann's initial approach to proving the H-theorem is not applicable to polyatomic gases.
The same issues arise in the context of ordered fluids.
\begin{theorem}
	\label{thm:boltzmannInequality}
	The following Boltzmann-type inequality holds:
	\begin{equation}
		\int d\Xi_1\, \log(f(\Gamma_1,t))C[f,f]\leq 0.
	\end{equation}
\end{theorem}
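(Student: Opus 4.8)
The plan is to adapt the H-theorem argument of Cercignani \& Lampis~\cite{cercignaniLampis} by testing the collision operator against $\psi(\Gamma_1)=\log\!\big(f(\Gamma_1,t)\big)$. First I would invoke Lemma~\ref{lem:weak-form} with $g=f$: with this choice the factor $\psi(\Gamma_1)+\psi(\Gamma_2)-\psi(\Gamma_1^\prime)-\psi(\Gamma_2^\prime)$ becomes $\log\!\big(f(\Gamma_1,t)f(\Gamma_2,t)\big)-\log\!\big(f(\Gamma_1^\prime,t)f(\Gamma_2^\prime,t)\big)$, and the symmetrized products reduce, $f(\Gamma_1^\prime,t)f(\Gamma_2^\prime,t)+f(\Gamma_2^\prime,t)f(\Gamma_1^\prime,t)=2f(\Gamma_1^\prime,t)f(\Gamma_2^\prime,t)$, so \eqref{eq:collision_invariants} reduces to the difference of two integrals, one weighted by $W(\Xi_1^\prime,\Xi_2^\prime\mapsto\Xi_1,\Xi_2)\,f(\Gamma_1^\prime,t)f(\Gamma_2^\prime,t)$ and one by $W(\Xi_1,\Xi_2\mapsto\Xi_1^\prime,\Xi_2^\prime)\,f(\Gamma_1,t)f(\Gamma_2,t)$, each multiplied by the same logarithmic factor. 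Since we only have the reciprocity relation \eqref{eq:cercignaniLampis} and not detailed balance, these two integrals cannot be merged into a single one with a sign-definite integrand, unlike in the monatomic Boltzmann case.

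Next I would write $\alpha=f(\Gamma_1^\prime,t)f(\Gamma_2^\prime,t)$ and $\beta=f(\Gamma_1,t)f(\Gamma_2,t)$ and apply the elementary inequality $\log s\le s-1$ (for $s>0$) to each integrand separately, as $\alpha\log(\beta/\alpha)\le\beta-\alpha$ in the first and $\beta\log(\alpha/\beta)\le\alpha-\beta$ in the second. With $d\mu=d\Xi_1^\prime\,d\Xi_2^\prime\,d\Xi_2\,d\Xi_1\,d\theta_2\,d\varphi_2$ this yields
\[
\int d\Xi_1\,\log\!\big(f(\Gamma_1,t)\big)\,C[f,f]\;\le\;\frac14\int d\mu\;\big[W(\Xi_1^\prime,\Xi_2^\prime\mapsto\Xi_1,\Xi_2)-W(\Xi_1,\Xi_2\mapsto\Xi_1^\prime,\Xi_2^\prime)\big]\,(\beta-\alpha),
\]
where the key structural fact I would exploit is that $\beta$ depends only on the variables $(\Xi_1,\Xi_2)$ while $\alpha$ depends only on $(\Xi_1^\prime,\Xi_2^\prime)$.

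It then remains to show the right-hand side vanishes. Splitting it as $\frac14\int d\mu\,[\cdots]\,\beta-\frac14\int d\mu\,[\cdots]\,\alpha$, in the first term I carry out the integration over $(\Xi_1^\prime,\Xi_2^\prime)$ first, and \eqref{eq:cercignaniLampis} (read with the dummy pairs $(\Xi_1,\Xi_2)$ and $(\Xi_1^\prime,\Xi_2^\prime)$ interchanged) shows that $\int d\Xi_1^\prime\,d\Xi_2^\prime\,W(\Xi_1^\prime,\Xi_2^\prime\mapsto\Xi_1,\Xi_2)=\int d\Xi_1^\prime\,d\Xi_2^\prime\,W(\Xi_1,\Xi_2\mapsto\Xi_1^\prime,\Xi_2^\prime)$, so the bracket integrates to zero; in the second term I carry out the integration over $(\Xi_1,\Xi_2)$ first and apply \eqref{eq:cercignaniLampis} directly, again making the bracket integrate to zero. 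Hence the bound equals $0$, which proves the inequality; moreover, equality in $\log s\le s-1$ forces $\alpha=\beta$ on the support of the transition kernels, which is exactly the characterization of equilibria needed in the subsequent steady-state analysis.

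The main obstacle is precisely the absence of detailed balance: one cannot reduce the weak form to the integral of the pointwise non-positive quantity $(\alpha-\beta)\log(\beta/\alpha)$ as in the classical proof, so the linear majorization step above is essential, and it closes only because the leftover terms pair up and cancel under the weaker reciprocity identity \eqref{eq:cercignaniLampis} (which itself rests on the time-reversibility and volume preservation of the Hamiltonian collision flow). A secondary technical point to be discharged is that $\psi=\log f$ must be an admissible weight in \eqref{eq:weighted_integral}, which is guaranteed under the usual mild positivity and integrability hypotheses on $f$.
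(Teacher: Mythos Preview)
Your proof is correct and follows essentially the same Cercignani--Lampis strategy as the paper: both arguments combine the weak form of Lemma~\ref{lem:weak-form}, the elementary inequality $\log s\le s-1$, and the reciprocity identity~\eqref{eq:cercignaniLampis}. The only cosmetic difference is the order of operations---the paper first relabels primed and unprimed variables to collapse the two terms of~\eqref{eq:collision_invariants} into a single integral against $W(\Xi_1,\Xi_2\mapsto\Xi_1',\Xi_2')\,f(\Gamma_1,t)f(\Gamma_2,t)$ and then subtracts the mass-conservation identity before applying $\log s\le s-1$, whereas you majorize each term separately and then use reciprocity to kill the residual---but the content is the same.
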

\begin{proof}
	The proof of this result is based on the weak form of the collision operator \eqref{eq:collision_invariants} and follows the same lines as the proof of the Boltzmann inequality for polyatomic gases \cite{cercignaniLampis}.
    A detailed proof is given in Appendix \ref{sec:mathaspects}.
\end{proof}
\begin{theorem}
	\label{thm:maxwellian}
	Let the order parameter manifold $(\mathcal{M},\mathcal{A})$ be such that the group action $\mathcal{A}$ is transitive.
	We define a Maxwellian distribution to be any function of the following form
	\rev{
	\begin{align}
			\bar{f}(\Gamma_1,t) = \exp\Big(c+&\vec{a}\cdot\vec{p}_1+\vec{b}\cdot \vec{\varsigma}_1+d(m^{-1}\vec{p}_1\cdot\vec{p}_1+\vec{\varsigma}_1\cdot \mat{B}(\nu_1)^{-1}\vec{\varsigma}_1)\Big)\nonumber, 
	\end{align}
	}
    where  $c, d \in \mathbb{R}$ and $\vec{a}, \vec{b} \in \mathbb{R}^d$.
	The Maxwellian distribution is the only stationary solution of \eqref{eq:boltzmann}, i.e.
	\begin{equation}
		 m^{-1} \vec{p}\cdot\nabla_{\pt{x}} \bar{f}+\mat{B}(\nu)^{-1}\vec{\varsigma} \cdot\nabla_{\nu}\bar{f} = 0.
	\end{equation}
\end{theorem}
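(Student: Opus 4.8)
The plan is to establish the equivalence ``$\bar{f}$ stationary $\iff$ $\bar{f}$ Maxwellian'' by proving the two implications in turn, the forward one being essentially a computation and the converse relying on the Boltzmann inequality of \cref{thm:boltzmannInequality} together with a characterization of the collision invariants.

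First I would show that every Maxwellian is stationary. Since $\log\bar{f} = a + \vec{b}\cdot\vec{p}_1 + c\,\psi_3(\Gamma_1) + d\,\psi_4(\Gamma_1)$ is a linear combination of the collision invariants of \eqref{eq:collision_invariants_2}, it obeys the functional identity \eqref{eq:collision_invariants_final}, so $\bar{f}(\Gamma_1)\bar{f}(\Gamma_2) = \bar{f}(\Gamma_1^\prime)\bar{f}(\Gamma_2^\prime)$ on the collision manifold $\mathfrak{M}$; substituting this into \eqref{eq:collision_operator} and using the reciprocity principle \eqref{eq:cercignaniLampis} gives $\mathcal{C}[\bar{f},\bar{f}]=0$. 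For the streaming part I would use that, by \cref{prop:constant}, transitivity forces $\mat{B}$ to be constant, so that $\mat{B}(\nu)^{-1}\vec{\varsigma}$ is independent of $\nu$ and the operator $m^{-1}\vec{p}\cdot\nabla_{\pt{x}} + \mat{B}(\nu)^{-1}\vec{\varsigma}\cdot\nabla_{\nu}$ is exactly the generator of the free ($\mathcal{W}\equiv 0$) Hamiltonian flow, whose vector field has $\dot{\vec{p}} = \dot{\vec{\varsigma}} = 0$. Each of $1$, $\vec{p}$, $\psi_3$, $\psi_4$ is conserved along that flow: $\psi_4$ is a multiple of the free one-particle Hamiltonian; $1$ and $\vec{p}$ are trivial; and $\psi_3$ is conserved by the Noether argument underlying \cref{prop:angular} applied to the vanishing --- hence vacuously frame-indifferent --- potential. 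Hence $\log\bar{f}$, and therefore $\bar{f}$, is constant along the free flow, which is precisely the displayed identity $m^{-1}\vec{p}\cdot\nabla_{\pt{x}}\bar{f} + \mat{B}(\nu)^{-1}\vec{\varsigma}\cdot\nabla_{\nu}\bar{f} = 0$.

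For the converse, let $\bar{f}$ be a stationary solution. Testing \eqref{eq:boltzmann} against $\log\bar{f}$ and integrating over the one-particle phase space kills the streaming terms: using $\log g\,(\vec{w}\cdot\nabla g) = \nabla\cdot\big((g\log g - g)\vec{w}\big)$ for divergence-free $\vec{w}$, and that $\vec{p}/m$, $\mat{B}(\nu)^{-1}\vec{\varsigma}$ (again with $\mat{B}$ constant) and $\mathcal{V}$ --- which does not depend on $\vec{\varsigma}$ by \eqref{eq:vlasov_Force} --- are divergence-free in their conjugate variables, the divergence theorem with the decay conditions at infinity and the compactness of $\mathcal{M}$ makes each contribution vanish. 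One is left with $\int d\Xi_1\,\log\bar{f}\,\mathcal{C}[\bar{f},\bar{f}] = 0$. By \cref{thm:boltzmannInequality}, or more precisely by the sign-definiteness of the integrand in the weak form of \cref{lem:weak-form} with $\psi = \log\bar{f}$ together with positivity of $W$ on $\mathfrak{M}$, equality forces $\bar{f}(\Gamma_1)\bar{f}(\Gamma_2) = \bar{f}(\Gamma_1^\prime)\bar{f}(\Gamma_2^\prime)$ for almost every admissible collision, i.e.\ $\log\bar{f}$ satisfies \eqref{eq:collision_invariants_final}. Invoking the characterization of collision invariants --- for a transitive action, the only solutions of \eqref{eq:collision_invariants_final} are linear combinations of $\psi_1,\dots,\psi_4$ --- gives $\log\bar{f} = a + \vec{b}\cdot\vec{p} + c\,\psi_3 + d\,\psi_4$, so $\bar{f}$ is a Maxwellian.

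The hard part is the last step: the completeness of $\{\psi_1,\dots,\psi_4\}$ as a list of collision invariants, i.e.\ that \eqref{eq:collision_invariants_final} restricted to the $(d+\iota-1)$-dimensional manifold $\mathfrak{M}$ --- parametrized as in \cref{sec:microscopic_collision} by impact data such as $\vec{n}$ and $\vec{r}_{1,2}$ --- has no solutions beyond the expected span. This is the analogue of the classical characterization of summational invariants for the Boltzmann equation; the standard arguments exploit the arbitrariness of the collision parameters together with a regularity hypothesis on $\bar{f}$, and here the manifold structure and the presence of the generalized angular momentum $\psi_3$ make the bookkeeping more delicate. It is also precisely where transitivity is indispensable: in a non-transitive example such as the gas bubbles of \cref{ex:bubbles}, the exchange rule \eqref{eq:volume_fraction_bubbles} yields the extra invariant $\nu$ and the Maxwellian family as defined ceases to be exhaustive. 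A secondary technical point --- since we assume only reciprocity \eqref{eq:cercignaniLampis} rather than detailed balance, following Cercignani \& Lampis --- is the careful passage from equality in the Boltzmann inequality to the pointwise relation $\bar{f}(\Gamma_1)\bar{f}(\Gamma_2) = \bar{f}(\Gamma_1^\prime)\bar{f}(\Gamma_2^\prime)$.
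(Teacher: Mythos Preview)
Your overall strategy is correct and matches the paper's: the forward implication is a computation with the collision invariants and the free Hamiltonian flow, and the converse goes through the equality case in the Boltzmann inequality of \cref{thm:boltzmannInequality} to force $\log\bar{f}$ to be a summational invariant, followed by the characterization of such invariants. You also correctly flag that the hard step is precisely this characterization, and that transitivity is essential there.

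Where your proposal differs from the paper is that you leave the hard step as an invocation, whereas the paper's proof \emph{is} that step. The paper's argument runs as follows. First, the listed invariants are not minimal: conservation of $\vec{p}$ already implies conservation of $A_{\pt{x}}\mat{Q}\cdot\vec{p}$, so the genuinely new orientational invariant is $A_{\nu}\mat{Q}\cdot\mat{B}(\nu)\vec{\dot{\nu}}$. Second --- and this is the crucial use of transitivity --- by \cref{prop:constant} the inertia $\mat{B}$ is constant, and because the action is transitive the factor $A_{\nu}\mat{Q}$ can be stripped off, so that this invariant reduces to $\mat{B}\,\vec{\dot{\nu}} = \vec{\varsigma}$. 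Hence every collision invariant is a function of $(\vec{p},\vec{\varsigma})$ alone, which lives in a genuine vector space: the manifold structure of $\mathcal{M}$ has been eliminated from the problem. From there the paper follows Arkeryd--Cercignani: split $\phi(\vec{p},\vec{\varsigma})$ into even and odd parts, observe that the even part depends only on $\vec{p}\cdot\vec{p}+\vec{\varsigma}\cdot\mat{B}^{-1}\vec{\varsigma}$ and satisfies Cauchy's additive functional equation in that variable, and that the odd part is additive in $(\vec{p},\vec{\varsigma})$ along orthogonal pairs; in both cases \cref{lem:Cauchy} forces linearity.

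So your sketch is sound, but the content you should supply for the ``hard part'' is exactly this two-stage reduction: transitivity collapses the orientational invariant to the conjugate momentum $\vec{\varsigma}$, restoring a linear phase space, after which the classical Cauchy-functional-equation argument applies verbatim. Your remark that non-transitive examples (such as the bubbles) acquire extra invariants is exactly right and is consistent with this mechanism failing when the reduction is unavailable.
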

\begin{proof}
	The proof of this result follows the classical approach pioneered by Boltzmann.
	The only differences are minor technicalities arising from the fact that we are working on a manifold.
	We deal with this issue in Appendix \ref{sec:mathaspects}, following the approach presented in \cite{arkerydCercignani}.
\end{proof}
\begin{exA}[Non-diffusing gas bubbles: Collision operator]
	\label{ex:collision_operator_bubbles}
	Continuing on the setting of nondiffusing gas bubbles introduced in \cref{ex:bubbles} and using the collision rule introduced in \cref{ex:bubbles_collision}, we give an explicit expression for the collision operator $C[f,f]$.
	In particular we notice that we only need to give an expression for the transition probability $W(\cdot\mapsto\cdot)$.
	Following the idea that gas bubbles behave like a mixture of ``mass-less'' hard spheres that interchange volume when they collide, we have assumed that the interaction rule between two bubbles is the same as the one between two hard spheres, with the additional exchange of the order parameter, see \cref{ex:bubbles_collision}.
	We have thus neglected the first peculiarity of gas mixtures, i.e. the fact that different molecules might have different masses. However, we still need to take into account the fact that different molecules might have different order parameters.
	To this end we assume the transition probability $W(\Xi_1,\Xi_2\mapsto \Xi_1^\prime,\Xi_2^\prime)$ is the same one presented in \cite[Chapter 6.2]{cercignani}, i.e.
	\begin{align}
		W(\Xi_1,\Xi_2\mapsto \Xi_1^\prime,\Xi_2^\prime) &= S(\nu_1,\nu_2)\delta(\vec{p}_1+\vec{p}_2-\vec{p}_1^\prime-\vec{p}_2^\prime)\\
		&\times \delta(\nu_1+\nu_2-\nu_1^\prime - \nu_2^\prime)\delta(\norm{\vec{p}_1}^2+\norm{\vec{p}_2}^2-\norm{\vec{p}_1^\prime}^2-\norm{\vec{p}_2^\prime}^2)\nonumber,
	\end{align}
	where $\delta$ is the Dirac delta measure and $S(\nu_1,\nu_2)$ encodes that two bubbles with larger volume are more likely to collide. For example, assuming that the bubbles behave like Maxwellian molecules, we can choose as $S(\nu_1,\nu_2)=\frac{1}{2}(\nu_1+\nu_2)$.
\end{exA}
\begin{exB}[Calamitic molecules in two dimensions: Collision operator]
	\label{ex:collision_operator_calamitic}
	Continuing on the setting of two-dimensional calamitic molecules introduced in \cref{ex:2D}, we give an explicit expression for the transition probability $W(\Xi_1,\Xi_2\mapsto \Xi_1^\prime,\Xi_2^\prime)$.
	We follow \cite[Appendix]{curtissV}, where Curtiss \& Dahler prove that the transition probability for the collision of two rigid convex bodies is given by
	\begin{equation}
		W(\Xi_1,\Xi_2\mapsto \Xi_1^\prime,\Xi_2^\prime) = (\vec{\mathfrak{g}}\cdot \vec{n})S(\vec{\nu}_1,\vec{\nu}_2),
	\end{equation}
	with $S(\vec{\nu}_1,\vec{\nu}_2)$ being the Jacobian of the transformation from the excluded volume integral between the two calamitic molecules to the sphere, $\vec{\mathfrak{g}}$ is the relative velocity of the point of contact between the two molecules, and $\vec{n}(\theta_2,\varphi_2)$ is the unit normal to $\mathbb{S}^2$.
	Under the vanishing girth assumption, we can neglect any rotation of the molecules around the axis $\vec{\nu}_i$, i.e.~$\vec{\omega}_i\cdot \vec{\nu}_i = 0$, thus expressing the angular velocity $\vec{\omega}_i$ in terms of the conjugate momenta to $\vec{\nu}_i$ and $\vec{\nu}_i$ itself, i.e.~$\vec{\omega}_i = \vec{\nu}_i\times \vec{\varsigma}_i$.
	Thus, the effective velocity of the point of contact connected to the center of mass of molecule $i$ by a vector $\vec{r}_i$ is given by
	\begin{equation}
		\label{eq:effective_velocity_calamitic}
		\vec{\mathfrak{g}} = \frac{1}{m}\left(\vec{p}_1-\vec{p}_2\right) + \Big((\vec{\varsigma}_1\cdot\vec{r}_i)\vec{\nu}_1-(\vec{\varsigma}_2\cdot\vec{r}_i)\vec{\nu}_2-(\vec{\nu}_1\cdot \vec{r}_i)\vec{\varsigma}_1+(\vec{\nu}_2\cdot \vec{r}_i)\vec{\varsigma}_2\Big).
	\end{equation}
\end{exB}
\begin{remark}
Curtiss \& Dahler further assumed that the molecules are hard top-symmetric bodies to simplify the collision operator to obtain a standard gain-loss splitting, typical of the Boltzmann collision operator. They do this to prove an H-theorem. We can retain the full complexity of this collision operator, thanks to \Cref{thm:boltzmannInequality}.
\end{remark}
\begin{exC}[Head-tail symmetric calamitic molecules: Collision operator]
	\label{ex:collision_operator_head_tail}
	Continuing on the setting of head-tail symmetric calamitic molecules introduced in \cref{ex:head-tail}, we can give an explicit expression for the transition probabilities.
	As remarked in \cref{ex:collision_head_tail}, the binary interaction between two head-tail symmetric calamitic molecules at a microscopic level can be described by the same transition probability $W(\Xi_1,\Xi_2\mapsto \Xi_1^\prime,\Xi_2^\prime)$ as the one presented in \cref{ex:collision_operator_calamitic}.
	The only difference is that in this context the effective velocity of the point of contact between the two molecules is given by
	\begin{equation}
		\vec{\mathfrak{g}} = \frac{1}{m}\left(\vec{p}_1-\vec{p}_2\right) +
		\begin{pmatrix}
			\omega_1r_{1}^{(y)}-\omega_2r_{2}^{(y)}\\
			\omega_2r_{2}^{(x)}-\omega_1r_{1}^{(x)}\\
		\end{pmatrix},
	\end{equation}
	since in \cref{ex:head-tail} we have limited ourselves to the two-dimensional case, and considered as conjugate momenta the angular velocities $\omega_i$.
\end{exC}
\subsection{Mean-field potential}
\label{sec:meanField}
We now turn our attention to the transport part of \eqref{eq:boltzmann}.
We devote particular care to understanding the role of the mean-field potential $\mathcal{W}$ and of the Vlasov-type force $\mathcal{V}$.
For the moment, let us consider the space-homogeneous case, i.e.~$\mathcal{W}=\mathcal{W}(\nu,\vec{\varsigma})$ and $\mathcal{V}=\mathcal{V}(\nu,\vec{\varsigma})$, and assume that the collision operator is ignored.
Thus \eqref{eq:boltzmann} reads as
\begin{equation}
	\label{eq:boltzmannHomogeneous}
	\frac{\partial f}{\partial t} +\mat{B}(\nu)^{-1}\vec{\varsigma}\cdot \nabla_{\nu}\, f + \mathcal{V}\cdot \nabla_{\vec{\varsigma}} f
 = 0.
\end{equation}
To study the dynamics of this equation, we assume that the one-particle distribution function $f$ is of the form of the empirical one-particle distribution, i.e.
\begin{equation}
	\label{eq:empirical_distribution}
	f(\nu,\vec{\varsigma},t)\approx f^N(\nu,\vec{\varsigma},t) = \frac{1}{N} \sum_{i=1}^{N} \delta(\nu-\nu_i(t))\otimes\delta(\vec{\varsigma}-\vec{\varsigma}_i(t)),
\end{equation}
where $\nu_i(t)$ and $\vec{\varsigma}_i(t)$ are the order parameters of the $i$-th particle.
This ansatz is equivalent to assuming that the particles are non-interacting and that the dynamics of each constituent is described by the following system of ordinary differential equations (ODEs),
\begin{equation}
	\label{eq:ODEs}
	\frac{d \nu_i}{d t}  = \vec{\varsigma}_i,\qquad \frac{d \vec{\varsigma}_i}{d t}  = \mathcal{V}(\nu_i,\vec{\varsigma}_i).
\end{equation}
Thus, to describe the long term dynamics of the system we need to study the stability of the fixed points of the system of ODEs \eqref{eq:ODEs}.
This is written in a general form, which might be impractical to work with, since it would require operations on the manifold $\mathcal{M}$. Instead, we here prefer to work on the domain of the parametrization of the manifold, and we therefore rewrite the characteristics \eqref{eq:ODEs} in terms of the parametrization of $\mathcal{M}$.
\begin{exA}[Non-diffusing gas bubbles: Transport]
	\label{ex:transport_bubbles}
	In the setting of nondiffusing gas bubbles, the manifold $\mathcal{M}$ is the unit interval. This can be trivially parametrized, and the transport term is the classical one, where $\frac{\partial f}{\partial \nu}$ is computed in the standard way.
\end{exA}
\begin{exB}[Calamitic molecules in two dimensions: Transport]
	\label{ex:transport_calamitic}
	From \eqref{eq:BBGKY1_weak_order} we notice that the Vlasov-type term and the transport appearing on the left-hand side are non-standard, in the sense that the gradient determining the transport direction has to be computed on the order parameter manifold $\mathcal{M}$.
	In the setting of two-dimensional calamitic molecules, which have been the focus of \cref{ex:2D}, \cref{ex:infinitesimalGenerator2D} and \cref{ex:angular2D}, we begin by finding an expression for $\mat{B}(\vec{\nu})^{-1}\vec{\varsigma}\cdot \nabla_{\vec{\nu}}f$ in terms of $\theta$ and $\omega$.
	We observe that $\mat{B}(\vec{\nu})^{-1}\vec{\varsigma}$ is equivalent to $\dot{\vec{\nu}}$ and that the following identity holds for segment like molecules $\dot{\vec{\nu}} = \vec{\omega}\times \vec{\nu}$, i.e.
	\begin{equation}
		\dot{\vec{\nu}} = \begin{bmatrix}  0 \\ 0 \\ \omega \end{bmatrix}\times \begin{bmatrix} \cos(\theta) \\ \sin(\theta) \\ 0 \end{bmatrix} = \omega \begin{bmatrix} -\sin(\theta)\\ \cos(\theta)\\0\end{bmatrix}.
	\end{equation}
	We can then construct the projector onto the tangent space to the order parameter manifold $\mathcal{M}$, i.e.
	\begin{equation}
		\Pi_{\mathcal{M}} = \begin{bmatrix} -\sin(\theta) \\ \cos(\theta) \\ 0 \end{bmatrix}\otimes \begin{bmatrix} -\sin(\theta) \\ \cos(\theta) \\ 0 \end{bmatrix} = \begin{bmatrix} \sin^2(\theta) & -\sin(\theta)\cos(\theta) & 0 \\ -\sin(\theta)\cos(\theta) & \cos^2(\theta) & 0 \\ 0 & 0 & 0 \end{bmatrix}.
	\end{equation}
	We then compute $\nabla_{\vec{\nu}}f$ in terms of $\theta$ and $\omega$ using the chain rule, i.e.
	\begin{align}
		&\partial_{\vec{\nu}^{(x)}}f = \frac{-1}{\sqrt{1-\left(\nu^{(x)}\right)^2}}\partial_\theta f=\frac{-1}{\sin(\theta)}\partial_\theta f,\\
		&\partial_{\vec{\nu}^{(y)}}f = \frac{1}{\sqrt{1-\left(\nu^{(y)}\right)^2}}\partial_\theta f=\frac{1}{\cos(\theta)}\partial_\theta f.
	\end{align}
	Thus, we can compute $\mat{B}(\nu)^{-1}\vec{\varsigma}\cdot \nabla_{\vec{\nu}}f$ as
	\begin{align}
		\mat{B}(\nu)^{-1}\vec{\varsigma}\cdot \nabla_{\vec{\nu}}f &= \omega \begin{bmatrix} -\sin(\theta) \\ \cos(\theta) \\ 0 \end{bmatrix}\cdot \Pi_{\mathcal{M}}\begin{bmatrix} -\sin(\theta)^{-1} \\ \cos(\theta)^{-1}\\ 0 \end{bmatrix}\partial_\theta f \\
		& = 2\omega(\sin^2(\theta) + \cos^2(\theta))\partial_\theta f = 2\omega\partial_\theta f.
	\end{align}

This recovers the transport term of the Boltzmann--Curtiss equation.
\end{exB}

\begin{exC}[Head-tail symmetric calamitic molecules: Transport]
	Even if $\mathbb{RP}^1$ is homeomorphic to the circle $\mathbb{S}^1$, the order parameter manifold in \cref{ex:head-tail} is different to the one introduced in \cref{ex:2D}, by virtue of the fact that the group action $\mathcal{A}^{\text{ht}}$, introduced in \cref{ex:infinitesimalGeneratorHeadTail}, is different from the group action $\mathcal{A}$ introduced in \cref{ex:infinitesimalGenerator2D}.
	Thus, the gradient determining the transport direction in the order parameter manifold $\mathcal{M}$ in the context of head-tail symmetric calamitic molecules is different from the one in the context of simple calamitic molecules.
	In particular, considering the parametrization \eqref{eq:head-tail-parametrization} of $\mathbb{RP}^1$ we can compute $\mat{B}(\nu)^{-1}\vec{\varsigma}\cdot \nabla_{\nu}f$. In fact, it is easy to see that 
	\begin{equation}
		\mat{B}(\nu)^{-1}\vec{\varsigma} = \omega \begin{bmatrix}
			-2\sin(\theta) \cos(\theta) & \sin^2(\theta)-\cos^2(\theta) \\
			\sin^2(\theta)-\cos^2(\theta) & 2\sin(\theta)\cos(\theta)
		\end{bmatrix}.
	\end{equation}
	Following the same procedure as above we can compute the projector onto the tangent space to the order parameter manifold $\mathcal{M}$ to be the following fourth order tensor,
	\begin{equation}
		\Pi_\mathcal{M}\!\! =\!\! \begin{bmatrix}
			-2\sin(\theta) \cos(\theta) & \sin^2(\theta)-\cos^2(\theta) \\
			\sin^2(\theta)-\cos^2(\theta) & 2\sin(\theta)\cos(\theta)
		\end{bmatrix}\otimes \begin{bmatrix}
			-2\sin(\theta) \cos(\theta) & \sin^2(\theta)-\cos^2(\theta) \\
			\sin^2(\theta)-\cos^2(\theta) & 2\sin(\theta)\cos(\theta)
		\end{bmatrix}\!.
	\end{equation}
	Lastly, using the chain rule as in the previous example, we can compute $\nabla_{\nu}f$ in terms of $\theta$, i.e.
	\begin{equation}
		\nabla_{\nu}f = \begin{bmatrix}
			-\sin(\theta)^{-1} &  \frac{1}{2}\cos(2\theta)^{-1}\\
			\frac{1}{2}\cos(2\theta)^{-1} & \cos(\theta)^{-1}
		\end{bmatrix}.
	\end{equation}
	Thus we conclude that 
	\begin{align}
\mat{B}(\nu)^{-1}\vec{\varsigma}\cdot \nabla_{\nu}f &= \\ 
		 \omega\begin{bmatrix}
			-2\sin(\theta) \cos(\theta) & \sin^2(\theta)-\cos^2(\theta) \\
			\sin^2(\theta)-\cos^2(\theta) & 2\sin(\theta)\cos(\theta)
		\end{bmatrix}:&\Pi_{\mathcal{M}}:\begin{bmatrix}
			-\sin(\theta)^{-1} &  \frac{1}{2}\cos(2\theta)^{-1}\\
			\frac{1}{2}\cos(2\theta)^{-1} & \cos(\theta)^{-1}
		\end{bmatrix}\partial_\theta f , \nonumber
	\end{align}
	where $:$ denotes the double tensor contraction, i.e. $\mat{A}:\mat{B} = \sum_{i,j} A_{ij}B_{ij}=\tr(\mat{A}^T\mat{B})$.
\end{exC}
\begin{exB}[Calamitic molecules in two dimensions: Mean-field potential]
	\label{ex:Vlasov2D_calamitic_quadratic}
	Let us consider the setting of \cref{ex:2D} and assume that the mean-field potential is given by
	\begin{equation}
		\label{eq:quadratic_potential}
		\mathcal{W}(\theta,\omega) = \frac{1}{2}\alpha\abs{\theta-\hat{\theta}}^2 + \beta \omega\theta,\qquad \hat{\theta}=\arctan(\hat{\nu}^{(y)},\hat{\nu}^{(x)}), \qquad \hat{\vec{\nu}}=\frac{1}{N} \sum_{i=1}^N \vec{\nu}_i.
	\end{equation}
	Under this hypothesis the Vlasov-type force can be computed from \eqref{eq:vlasov_Force} to be
	\begin{equation}
		\mathcal{V}(\theta,\omega) = -\alpha\left(\theta-\hat{\theta}\right) - \beta\omega.
	\end{equation}
	Thus the system of ODEs \eqref{eq:ODEs} can be recast as a linear system of ODEs, i.e.
	\begin{equation}
		\frac{d}{dt}
		\begin{bmatrix}
			\theta_i \\
			\omega_i
		\end{bmatrix} = \begin{bmatrix}
			0 & 1 \\ -\alpha & -\beta
		\end{bmatrix}\begin{bmatrix}
			\theta_i\\
			\omega_i
		\end{bmatrix}+\alpha\begin{bmatrix}
			0 \\ \hat{\theta}
		\end{bmatrix}.
	\end{equation}
	We can immediately see that the fixed point of this system is unique and it is given by $\vec{\nu}=\hat{\vec{\nu}}$ and $\vec{\varsigma}=0$.
	It remains to study the stability of the fixed point. The eigenvalues of the Jacobian of the system are given by
	\begin{equation}
		\lambda_{1,2} = \frac{-\beta\pm \sqrt{\beta^2-4\alpha}}{2}.
	\end{equation}
	If $\beta$ vanishes and $\alpha>0$, the equilibrium point $(\hat{\vec{\nu}}, 0)$ is a center.
	If $\alpha<0$, the equilibrium point $(\hat{\vec{\nu}},0)$ is a saddle point, and we do not expect any asymptotic alignment phenomena.
	A different scenario arises if $\beta$ is non-vanishing. If $\beta>0$ the fixed point $(\hat{\vec{\nu}},0)$ is stable provided that also $\alpha>0$, hence we should expect the system of calamitic molecules to exhibit alignment along a preferred direction $\hat{\vec{\nu}}$.
	In particular, if $\beta>0$ and $\alpha>\frac{\beta^2}{4}$ the fixed point $(\hat{\vec{\nu}},0)$ is a stable spiral, thus we should expect the system to exhibit alignment along a preferred direction $\hat{\vec{\nu}}$ in the long term, while there might be some oscillations around the fixed point before the system reaches alignment.
	On the other hand, if $\beta>0$ and $\alpha\in(0,\frac{\beta^2}{4})$ the fixed point $(\hat{\vec{\nu}},0)$ is a stable node, thus we should expect the system to exhibit alignment along a spontaneously emerging direction $\hat{\vec{\nu}}$ in the long term, without any oscillations around the fixed point.
	Figure \ref{fig:ODEs} gives numerical examples of the dynamics of the system of ODEs \eqref{eq:ODEs} for different values of $\alpha$ and $\beta$.
	\begin{figure}[h]
		\begin{center}
		\includegraphics[width=0.3\textwidth]{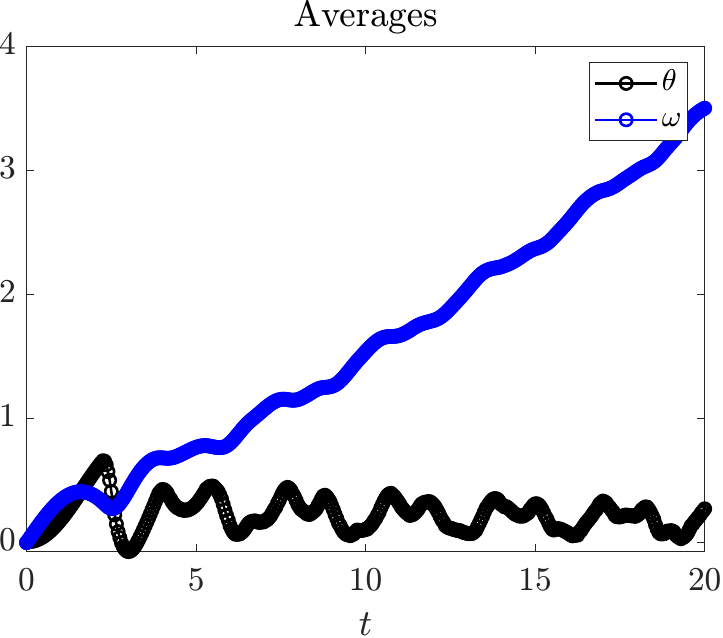}
		\includegraphics[width=0.3\textwidth]{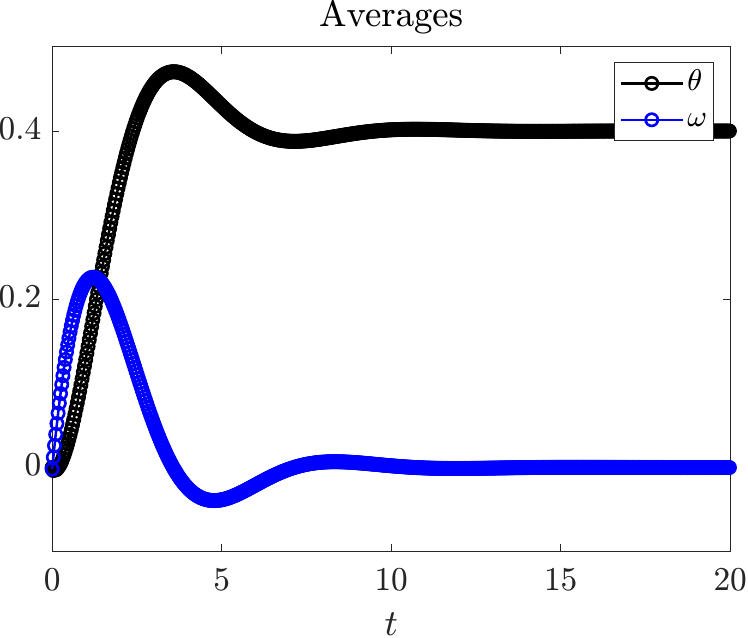}
		\includegraphics[width=0.3\textwidth]{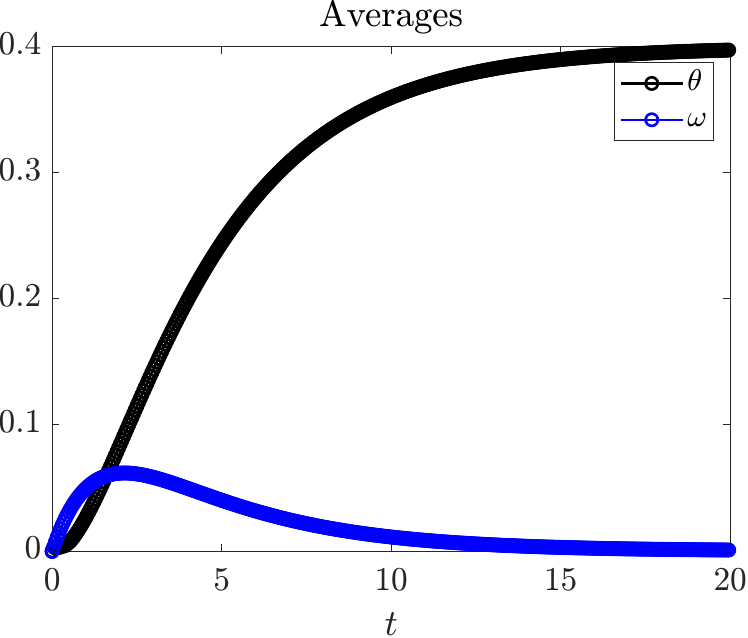}
		\includegraphics[width=0.3\textwidth]{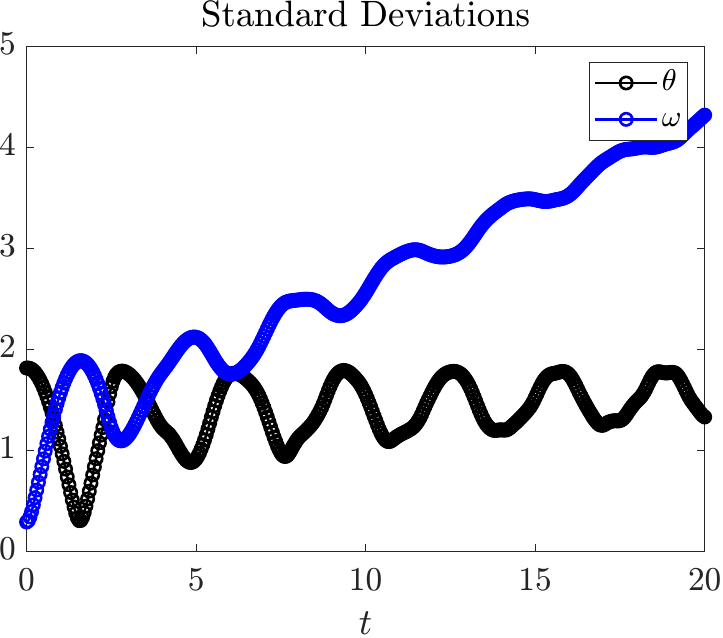}
		\includegraphics[width=0.3\textwidth]{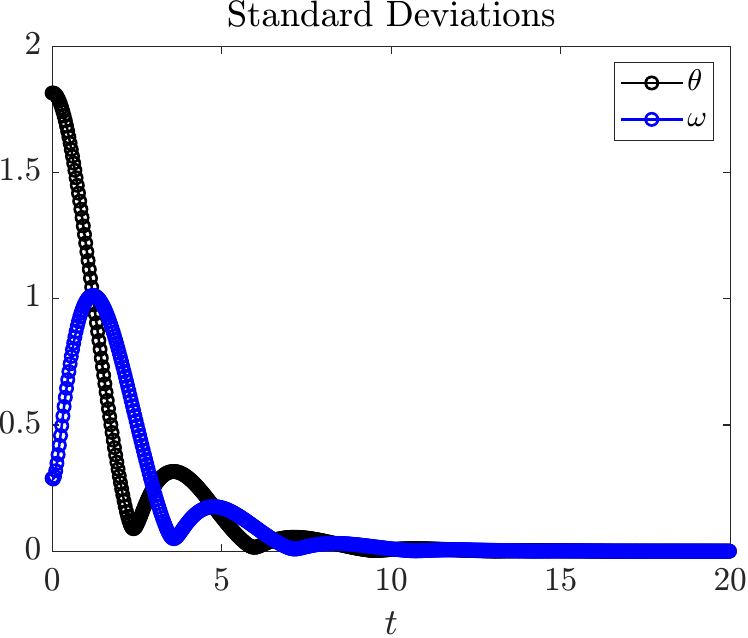}
		\includegraphics[width=0.3\textwidth]{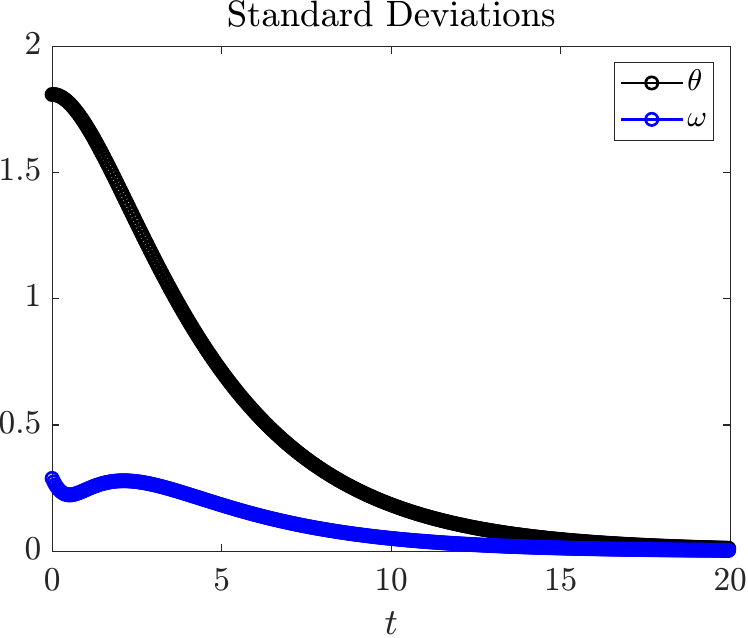}
		\end{center}
		\caption{Mean (top) and standard deviation (bottom) of a system of particles evolving according to \eqref{eq:ODEs} for $\beta=0$ and $\alpha=1$ (saddle point, left), $\beta=1$ and $\alpha=1$ (stable spiral, middle), and $\beta=1$ and $\alpha=0.25$ (stable node, right).
		The initial conditions for $\vec{\nu}$ and $\vec{\varsigma}$ are drawn from a uniform distribution in $[-\pi,\pi]$ and $[-1,1]$ respectively. Lastly we fixed $\vec{\hat{\nu}}=\left(\sin(\hat{\theta}),\cos(\hat{\theta})\right)$ with $\hat{\theta}=0.4$.
		}
		\label{fig:ODEs}
	\end{figure}
\end{exB}
\begin{exB}
	The Vlasov force considered in the previous equation can be obtained as the linearisation of the following non-linear Vlasov force
	\begin{equation}
		\mathcal{V}(\theta) = -\sin\left(\theta-\hat{\theta}\right), \qquad \hat{\theta}=\arctan(\hat{\nu}^{(y)},\hat{\nu}^{(x)}), \qquad \hat{\vec{\nu}}=\frac{1}{N} \sum_{i=1}^N \vec{\nu}_i.
	\end{equation}
	This is inspired by the torque imposed on the calamitic constituents by an external electric field.
	Thus we can observe alignment phenomena in the system of ODEs \eqref{eq:ODEs} also considering the microscopic potential
	\begin{equation}
		\label{eq:cosine_potential}
		\mathcal{W}(\theta) = \alpha\cos\left(\theta-\hat{\theta}\right), \qquad \hat{\theta}=\arctan(\hat{\nu}^{(y)},\hat{\nu}^{(x)}), \qquad \hat{\vec{\nu}}=\frac{1}{N}\sum_{i=1}^N \vec{\nu}_i.
	\end{equation}
\end{exB}
\section{Conclusion}
In this work, we have proposed a unified kinetic theory framework for ordered fluids, inspired by the geometric perspective pioneered by Capriz.
By introducing the notion of an order parameter manifold and characterizing its symmetries via Lie group actions, we have developed a Lagrangian formulation that naturally reflects the internal structure of microscopic constituents.
Through Noether's theorem, we identified key conserved quantities---such as linear momentum, angular momentum, and total energy---that are preserved under two-body interactions, and derived the explicit form of these invariants in both two- and three-dimensional settings.

Building on these foundations, we developed a BBGKY hierarchy adapted to the ordered fluid setting. We then derived a Vlasov--Boltzmann-type kinetic equation that captures both translational and orientational degrees of freedom.
This derivation crucially relied on a weak-order interaction assumption. This assumption is physically justified in a range of systems, including nematic and smectic liquid crystals, that exhibit long-range orientational order.

The resulting kinetic equation provides a flexible and consistent starting point for analyzing the collective dynamics of ordered fluids from first principles.
Its structure accommodates general forms of microscopic interactions, respects the geometric constraints of the order parameter manifold, and complies with classical thermodynamic properties such as the H-theorem via a suitable generalization of the Boltzmann entropy inequality.

Throughout the paper we have considered several examples to illustrate the applicability of our framework. We now briefly summarize the full kinetic equation derived in each of these examples.
\begin{exA}[Non-diffusing gas bubbles: Kinetic equation]
	\label{ex:KT_bubbles}
	In the context of non-diffusing gas bubbles, we derived a kinetic equation that describes the evolution of bubbles in a fluid medium, where the order parameter is the volume fraction $\nu$ associated with the gas bubbles and the momentum $\vec{\varsigma}$ is the rate of change of the volume fraction.
	We assumed that the interaction between bubbles is governed by a collision rule that conserves the total volume fraction and momentum, leading to a Boltzmann-type equation with a collision operator that accounts for the exchange of volume fraction during collisions.
	The resulting kinetic equation is of the form
	\begin{align}
		\label{eq:boltzmann_bubbles}
		\frac{\partial f}{\partial t} &+ \frac{1}{m}\vec{p}_1\cdot\nabla_{\pt{x}}f + \vec{\varsigma}_1\cdot\nabla_{\nu}f =\\
	    &\frac{1}{2} \! \int \!\!\!d\Xi_1^\prime\, d\Xi_2^\prime d\Xi_2 \!\!\int_{0}^{\frac{\pi}{2}}\!\!\!\!\int_{0}^{2\pi} \!\!\!\!\!\!\!W\left( \Xi_1^\prime,\Xi_2^\prime\mapsto \Xi_1,\Xi_2\right)\Big(f(\Gamma_1^\prime,t)g(\Gamma_2^\prime,t)+f(\Gamma_2^\prime,t)g(\Gamma_1^\prime,t)\Big)\!\nonumber \\
	    &\qquad\qquad\qquad\quad-W\!\left( \Xi_1,\Xi_2\mapsto \Xi_1^\prime,\Xi_2^\prime\right)\!\Big(f(\Gamma_1,t)g(\Gamma_2t)+f(\Gamma_2,t)g(\Gamma_1,t)\Big)\, d\theta_2\, d\varphi_2,\nonumber
	\end{align}
	with the transition probability $W(\Xi_1,\Xi_2\mapsto \Xi_1^\prime,\Xi_2^\prime)$ being given by
	\begin{equation}
		W(\Xi_1,\Xi_2\mapsto \Xi_1^\prime,\Xi_2^\prime) = S(\nu_1,\nu_2)\delta(\vec{p}_1+\vec{p}_2-\vec{p}_1^\prime-\vec{p}_2^\prime)\delta(\norm{\vec{p}_1}^2+\norm{\vec{p}_2}^2-\norm{\vec{p}_1^\prime}^2-\norm{\vec{p}_2^\prime}^2),
	\end{equation}
	where $\delta$ is the Dirac delta measure and $S(\nu_1,\nu_2)$ encodes that two bubbles with larger volume are more likely to collide.
\end{exA}
\begin{exB}[Calamitic molecules in two dimensions: Kinetic equation]
	\label{ex:KT_calamitic}
	In the context of two-dimensional calamitic molecules, we derived a kinetic equation that describes the evolution of the distribution function of calamitic molecules, where the order parameter is given by the angle $\theta$ and the angular velocity $\omega$.
	The interaction between calamitic molecules is governed by a collision rule that conserves both linear momentum and angular momentum, leading to a Boltzmann-type equation with a collision operator that accounts for the exchange of orientation during collisions.
	The resulting kinetic equation is of the form
	\begin{align}
		\label{eq:boltzmann_calamitic}
		\frac{\partial f}{\partial t} & +2\omega_1 \nabla_{\theta_1}f + \mathcal{V}\cdot\nabla_{\omega_1}f =\\ 
	    &\frac{1}{2} \! \int \!\!\!d\Xi_1^\prime\, d\Xi_2^\prime d\Xi_2 \!\!\int_{0}^{\frac{\pi}{2}}\!\!\!\!\int_{0}^{2\pi} \!\!\!\!\!\!\!W\left( \Xi_1^\prime,\Xi_2^\prime\mapsto \Xi_1,\Xi_2\right)\Big(f(\Gamma_1^\prime,t)g(\Gamma_2^\prime,t)+f(\Gamma_2^\prime,t)g(\Gamma_1^\prime,t)\Big)\nonumber \\
	    &\qquad\qquad\qquad-W\left( \Xi_1,\Xi_2\mapsto \Xi_1^\prime,\Xi_2^\prime\right)\Big(f(\Gamma_1,t)g(\Gamma_2,t)+f(\Gamma_2,t)g(\Gamma_1,t)\Big)\, d\theta_2\, d\varphi_2,\nonumber
	\end{align}
	with the transition probability $W(\Xi_1,\Xi_2\mapsto \Xi_1^\prime,\Xi_2^\prime)$ being given by
	\begin{equation}
		W(\Xi_1,\Xi_2\mapsto \Xi_1^\prime,\Xi_2^\prime) = (\vec{\mathfrak{g}}\cdot \vec{n})S(\vec{\nu}_1,\vec{\nu}_2),
	\end{equation}
	where $\vec{\mathfrak{g}}$ is the effective velocity of the point of contact between the two molecules, $\vec{n}$ is the normal vector to the contact surface, and $S(\vec{\nu}_1,\vec{\nu}_2)$ is the Jacobian of the transformation from the excluded volume between two molecules with orientation $\vec{\nu}_1$ and $\vec{\nu}_2$ to $\mathbb{S}^2$.
	The effective velocity $\vec{\mathfrak{g}}$ is once again given by
	\begin{equation}
		\vec{\mathfrak{g}} = \frac{1}{m}\left(\vec{p}_1-\vec{p}_2\right) +
		\begin{pmatrix}
			\omega_1r_{1}^{(y)}-\omega_2r_{2}^{(y)}\\
			\omega_2r_{2}^{(x)}-\omega_1r_{1}^{(x)}\\
		\end{pmatrix}.
	\end{equation}
	As Vlasov force we considered two different cases, a linear Vlasov force $\mathcal{V}(\theta,\omega) = -\alpha\left(\theta-\hat{\theta}\right) - \beta\omega$, where $\hat{\theta}$ is the average orientation of the system, and a non-linear Vlasov force $\mathcal{V}(\theta) = -\sin\left(\theta-\hat{\theta}\right)$.
	These Vlasov forces can lead to different dynamics, including alignment along a preferential direction corresponding the emergence of a nematic alignment, and random oscillations of each molecule corresponding to the isotropic phase.
\end{exB}
\begin{exC}[Head-tail symmetric calamitic molecules: Kinetic equation]
	\label{ex:KT_head_tail}
	In the context of head-tail symmetric calamitic molecules, we derived a kinetic equation that describes the evolution of the distribution function, where the order parameter is given by the angle $\theta$ and the angular velocity $\omega$.
	The interaction between head-tail symmetric calamitic molecules is governed by a collision rule that conserves both linear momentum and angular momentum, leading to a Boltzmann-type equation with a collision operator that accounts for the exchange of orientation during collisions.
	The resulting kinetic equation has the following form,
	\begin{align}
		\label{eq:boltzmann_head_tail}
		\frac{\partial f}{\partial t} & +\omega_1\begin{bmatrix}
			-2\sin(\theta_1) \cos(\theta_1) & \sin^2(\theta_1)-\cos^2(\theta_1) \\
			\sin^2(\theta_1)-\cos^2(\theta_1) & 2\sin(\theta_1)\cos(\theta_1)
		\end{bmatrix}\!:\!\Pi_{\mathcal{M}}\!:\!\begin{bmatrix}
			-\sin(\theta_1)^{-1} &  \frac{1}{2}\cos(2\theta_1)^{-1}\\
			\frac{1}{2}\cos(2\theta_1)^{-1} & \cos(\theta_1)^{-1}
		\end{bmatrix}\nabla_{\theta_1}f\\ 
	    &+ \mathcal{V}\cdot\nabla_{\omega_1}f\nonumber\\
		&=\frac{1}{2} \! \int \!\!\!d\Xi_1^\prime\, d\Xi_2^\prime d\Xi_2 \!\!\int_{0}^{\frac{\pi}{2}}\!\!\!\!\int_{0}^{2\pi} \!\!\!\!\!\!\!W\left( \Xi_1^\prime,\Xi_2^\prime\mapsto \Xi_1,\Xi_2\right)\Big(f(\Gamma_1^\prime,t)g(\Gamma_2^\prime,t)+f(\Gamma_2^\prime,t)g(\Gamma_1^\prime,t)\Big)\nonumber \\
	    &\qquad\qquad\qquad-W\left( \Xi_1,\Xi_2\mapsto \Xi_1^\prime,\Xi_2^\prime\right)\Big(f(\Gamma_1,t)g(\Gamma_2,t)+f(\Gamma_2,t)g(\Gamma_1,t)\Big)\, d\theta_2\, d\varphi_2,\nonumber
	\end{align}
	with $\Pi_{\mathcal{M}}$ defined as 
	\begin{equation}
		\Pi_\mathcal{M}\!\! =\!\! \begin{bmatrix}
			-2\sin(\theta_1) \cos(\theta_1) & \sin^2(\theta_1)-\cos^2(\theta_1) \\
			\sin^2(\theta_1)-\cos^2(\theta_1) & 2\sin(\theta_1)\cos(\theta_1)
		\end{bmatrix}\otimes \begin{bmatrix}
			-2\sin(\theta_1) \cos(\theta_1) & \sin^2(\theta_1)-\cos^2(\theta_1) \\
			\sin^2(\theta_1)-\cos^2(\theta_1) & 2\sin(\theta_1)\cos(\theta_1)
		\end{bmatrix}\!.
	\end{equation}
	The transition probability $W(\Xi_1,\Xi_2\mapsto \Xi_1^\prime,\Xi_2^\prime)$ is given by
	\begin{equation}
		W(\Xi_1,\Xi_2\mapsto \Xi_1^\prime,\Xi_2^\prime) = (\vec{\mathfrak{g}}\cdot \vec{n})S(\vec{\nu}_1,\vec{\nu}_2),
	\end{equation}
	where $\vec{\mathfrak{g}}$ is the effective velocity of the point of contact between the two molecules, $\vec{n}$ is the normal vector to the contact surface, and $S(\vec{\nu}_1,\vec{\nu}_2)$ is the Jacobian of the transformation from the excluded volume between two molecules with orientation $\vec{\nu}_1$ and $\vec{\nu}_2$ to $\mathbb{S}^2$.
	The effective velocity $\vec{\mathfrak{g}}$ is given by
	\begin{equation}
		\vec{\mathfrak{g}} = \frac{1}{m}\left(\vec{p}_1-\vec{p}_2\right) +
		\begin{pmatrix}
			\omega_1r_{1}^{(y)}-\omega	_2r_{2}^{(y)}\\
			\omega_2r_{2}^{(x)}-\omega_1r_{1}^{(x)}\\
		\end{pmatrix}.
	\end{equation}
	We suggest the same Vlasov forces as in \cref{ex:KT_calamitic}, to mimic different phases of the system.	
\end{exC}

This work lays the foundation for future developments in the kinetic modelling of ordered fluids. Many open questions remain, including numerical simulations of the derived kinetic equations and studying their hydrodynamic limits, which will likely require more refined closure assumptions~\cite{farrellEtAll, farrellMalek}.

\section*{Acknowledgments}
The authors would like to acknowledge A.~Janigro and G.\ Russo for fruitful discussions on this work.
\\
\appendix

\section{Frame indifference in terms of group actions}
\label{sec:frameIndifference}
We here discuss the frame indifference hypothesis and its relation to the group action $\mathcal{A}$ introduced in Section \ref{sec:symmetry}.
Let us consider a function $\mathcal{L}:\mathcal{M}\to \mathbb{R}$, where $(\mathcal{M},\mathcal{A})$ is an order parameter manifold introduced in \cref{def:orderParameterManifold} with \gpt{$\iota$ being} the dimension of the manifold $\mathcal{M}$.
The differential of $\mathcal{L}$ evaluated at $\nu$ is a map of the form
\begin{equation}
	D\mathcal{L}\Big \vert_\nu: T_\nu \mathcal{M} \simeq \mathbb{R}^\iota \to T_{\mathcal{L}(\nu)}\mathbb{R}\simeq \mathbb{R}.
\end{equation}
Fixing $\nu \in \mathcal{M}$, we define an auxiliary map describing the effect of rotations on the Lagrangian:
\begin{equation}
	\Phi:SO(d)\to \mathbb{R},\qquad \Phi(\mat{Q}) = \mathcal{L}(\mathcal{A}(\mat{Q},\nu)),
\end{equation}
and observe that the differential of $\Phi$ is a map of the form
\begin{equation}
	D\Phi\Big \vert_Q: T_{\mat{Q}=\mat{\text{Id}}} SO(d) \simeq \mathfrak{so}(d)\to T_{\Phi(\mat{Q})}\mathbb{R}\simeq \mathbb{R}.
\end{equation}
In particular, we can compute, using the chain rule, the value of $D\Phi\Big \vert_{\mat{Q}=\mat{\text{Id}}}$ by making use of the infinitesimal generator of the group action $\mathcal{A}$, i.e. $A_\nu\Big\lvert_{Q=Id}: \mathfrak{so}(d)\to T_\nu\mathcal{M}$.
\begin{equation}
	\begin{tikzcd}
		\mathfrak{so}(d)\arrow[r,"A_\nu"]\arrow[d, phantom, sloped, "\simeq"] & T_{\nu}\mathcal{M}\arrow[d, phantom, sloped, "\simeq"]\arrow[r,"D\mathcal{L}\lvert_{\nu}"]&T_{\mathcal{L}(\nu)}\mathbb{R}\arrow[d, phantom, sloped, "\simeq"]\\
		\mathbb{R}^{\frac{d(d-1)}{2}}\arrow[r, "C"] & \mathbb{R}^{\iota}\arrow[r,"b"]&\mathbb{R}
	\end{tikzcd}
\end{equation}
Thus, we can write the differential of $\Phi$ at $\mat{Q}=\mat{\text{Id}}$ as $C^T b$, where $C$ is the representation of $A_\nu$ in $\mathbb{R}^{\frac{d(d-1)}{2}\times \iota}$ and $b$ is the \gpt{representation} of $D\mathcal{L}\Big\lvert_{\nu}$ in $\mathbb{R}^{\iota\times 1}$.
Hence, we can express the frame indifference hypothesis for the Lagrangian $\mathcal{L}=\mathcal{L}(\nu)$ as the condition:
\begin{equation}
	A_\nu^T \frac{\partial\mathcal{L}}{\partial \nu}(\nu) = 0.
\end{equation}
Similarly, we consider a Lagrangian $\mathcal{L}:\mathcal{M}\times \mathbb{R}^\iota\to \mathbb{R}$, and define the differential of $\mathcal{L}$ as a map of the form
\begin{equation}
	D\mathcal{L}\Big \vert_{(\nu,\vec{\dot{\nu}})}: T_\nu \mathcal{M} \times T_{\vec{\dot{\nu}}}\mathbb{R}^\iota\to T_{\mathcal{L}(\nu,\vec{\dot{\nu}})}\mathbb{R}.
\end{equation}
In this case, we define the auxiliary map for a fixed $\nu \in \mathcal{M}$ and $\vec{\dot{\nu}}\in \mathbb{R}^\iota$ as
\begin{equation}
	\Phi:SO(d)\to \mathbb{R},\qquad \Phi(\mat{Q}) = \mathcal{L}(\mathcal{A}(\mat{Q},\nu),\vec{\dot{\nu}} + D(A_\nu \vec{q})\Big\lvert_{\nu}\vec{\dot{\nu}} ).
\end{equation}
\gpt{Similarly to} before, we compute the differential of $\Phi$ at $\mat{Q}=\mat{\text{Id}}$ using the chain rule, first observing that
\begin{equation}
	\begin{tikzcd}
		\mathfrak{so}(d)\arrow[r,"DJ\lvert_{\mat{Q}=\mat{\text{Id}}}"{yshift=2pt}]\arrow[d, phantom, sloped, "\simeq"] & T_{\vec{\dot{\nu}}}\mathbb{R}^\iota\arrow[d, phantom, sloped, "\simeq"]\arrow[r,"D\mathcal{L}\lvert_{(\cdot,\vec{\dot{\nu}})}"yshift=3pt]&T_{\mathcal{L}(\nu,\vec{\dot{\nu}})}\mathbb{R}\arrow[d, phantom, sloped, "\simeq"]\\
		\mathbb{R}^{\frac{d(d-1)}{2}}\arrow[r, "\mathfrak{C}"] & \mathbb{R}^{\iota}\arrow[r,"\mathfrak{b}"]&\mathbb{R}
	\end{tikzcd}
\end{equation}
where $J(\mat{Q})=\vec{\dot{\nu}} + D(A_\nu \vec{q})\Big\lvert_{\nu}\vec{\dot{\nu}}$, and $\mathfrak{C}$ is the representation of $DJ\lvert_{\mat{Q}=\mat{\text{Id}}}$ in $\mathbb{R}^{\frac{d(d-1)}{2}\times \iota}$, i.e. $\mathfrak{C}=\frac{\partial A_\nu}{\partial \nu}\vec{\dot{\nu}}$, and $\mathfrak{b}$ is the representation of $D\mathcal{L}\lvert_{(\cdot,\vec{\dot{\nu}})}$ in $\mathbb{R}^{\iota\times 1}$.
Thus, we can express the frame indifference hypothesis for the Lagrangian $\mathcal{L}=\mathcal{L}(\nu,\vec{\dot{\nu}})$ as:
\begin{equation}
	\frac{\partial\mathcal{L}_1(\nu, \vec{\dot{\nu}})}{\partial \mat{Q}}\coloneqq {A^T_{\nu}}{\frac{\partial \mathcal{L}_1}{\partial \nu}} + {\left(\frac{\partial A_{\nu}}{\partial \nu}\vec{\dot{\nu}}\right)^T}{\frac{\partial \mathcal{L}_1}{\partial \vec{\dot{\nu}}}} = \vec{0}.
\end{equation}
Notice that the above condition is equivalent to \eqref{eq:frameIndifferenceLagrangian}.
\section{Mathematical aspects of the collision operator}
\label{sec:mathaspects}
In this appendix we will discuss some mathematical aspects of the collision operator \eqref{eq:collision_operator_general}, in particular we will here present some of the more tedious proofs that we have omitted in the main text, for the sake of clarity.
In particular, we will here report the proof of Theorem \ref{thm:boltzmannInequality} which follows from the work of \cite{cercignaniLampis}.
We will also discuss some technicalities related to the proof of \eqref{thm:maxwellian}.
\begin{proof}[Proof of Theorem \ref{thm:boltzmannInequality}]
The idea behind the approach of \cite{cercignaniLampis} is to start from \eqref{eq:collision_invariants} and obtain the following equation, after suitable changes of variables \cite{cercignaniLampis},
\begin{equation}
	\begin{alignedat}{2}
		&\int d\Xi_1\, \psi(\Gamma_1)C[f,f]=\\
		&\frac{1}{2} \!\int \!\!\!d\Xi_1^\prime\, d\Xi_2^\prime d\Xi_2\,d\Xi_1 \!\!\int_{0}^{\frac{\pi}{2}}\!\!\!\int_{0}^{2\pi} \!\!\!\!\!\!\!W\!\left( \Xi_1,\Xi_2\!\mapsto\! \Xi_1^\prime,\Xi_2^\prime\right)&\!\times\! \left(\psi(\Gamma_1,t)\!+\!\psi(\Gamma_2,t)\!-\!\psi(\Gamma_1^\prime,t)\!-\!\psi(\Gamma_2^\prime,t)\right)\\
		&\times\Big(f(\Gamma_1,t)f(\Gamma_2,t)\Big)\,d\theta_2\,d\varphi_2\nonumber.\\
	\end{alignedat}
\end{equation} 
We then pick as $\psi(\Gamma_1)$ the quantity $\log(f(\Gamma_1,t))$ to obtain the following equation,
\begin{equation}
	\label{eq:collision_invariants_cercignaniLampis}
	\begin{alignedat}{2}
		\int &d\Xi_1\, \log(f(\Gamma_1,t))C[f,f]=\\
		&\frac{1}{2} \int \!\!\!d\Xi_1^\prime\, d\Xi_2^\prime d\Xi_2\,d\Xi_1 \!\!\int_{0}^{\frac{\pi}{2}}\!\!\!\int_{0}^{2\pi} \!\!\!\!\!\!\!W\left( \Xi_1,\Xi_2\mapsto \Xi_1^\prime,\Xi_2^\prime\right)& \log \left(\frac{f(\Gamma_1^\prime,t)f(\Gamma_2^\prime,t)}{f(\Gamma_1,t)f(\Gamma_2,t)}\,d\theta_2\,d\varphi_2\right)\\
		&\times \Big(f(\Gamma_1,t)f(\Gamma_2,t)\Big)\,d\theta_2\,d\varphi_2\nonumber
	\end{alignedat}
\end{equation}
Using \eqref{eq:cercignaniLampis} together with \eqref{eq:collision_invariants} with $\psi(\Gamma_1)=1$ we can obtain the following identity,
\begin{align}
	&\int d\Xi_1^\prime\, d\Xi_2^\prime d\Xi_2\,d\Xi_1 \int_{0}^{\frac{\pi}{2}}\!\!\!\int_{0}^{2\pi} W( \Xi_1,\Xi_2\mapsto \Xi^\prime,\Xi_2^\prime) f(\Gamma_1^\prime,t)f(\Gamma_2^\prime,t)\,d\theta_2\,d\varphi_2 \\
	=&\int d\Xi_1^\prime\, d\Xi_2^\prime d\Xi_2\,d\Xi_1\int_{0}^{\frac{\pi}{2}}\!\!\!\int_{0}^{2\pi}  W( \Xi_1,\Xi_2\mapsto \Xi^\prime,\Xi_2^\prime) f(\Gamma_1,t)f(\Gamma_2,t)\,d\theta_2\,d\varphi_2.\nonumber
\end{align}
Using the previous identity we can rewrite \eqref{eq:collision_invariants_cercignaniLampis} as
\begin{equation}
	\begin{alignedat}{2}
		&\int d\Xi_1\, \log(f(\Gamma_1,t))C[f,f]=\\
		&\frac{1}{2} \!\int \!\!\!d\Xi_1^\prime\, d\Xi_2^\prime d\Xi_2\,d\Xi_1 \!\!\int_{0}^{\frac{\pi}{2}}\!\!\!\int_{0}^{2\pi} \!\!\!\!\!\!\!W\!\left( \Xi_1,\Xi_2\!\mapsto\! \Xi^\prime,\Xi_2^\prime\right)\Big(f(\Gamma_1,t)f(\Gamma_2,t)\Big)\\
		&\times \left(\log \left(\frac{f(\Gamma_1^\prime,t)f(\Gamma_2^\prime,t)}{f(\Gamma_1,t)f(\Gamma_2,t)}\right)-\left(\frac{f(\Gamma_1^\prime,t)f(\Gamma_2^\prime,t)}{f(\Gamma_1,t)f(\Gamma_2,t)}-1\right)\right)\,d\theta_2\,d\varphi_2\\
	\end{alignedat}
\end{equation}
Using the fact that the one particle distribution function is non-negative as is $W(\cdot\mapsto \cdot)$ we know that
\begin{equation}
	\label{eq:boltzmannInequality}
	\int d\Xi_2\, \log(f(\Gamma_2,t))C[f,f]\leq 0,
\end{equation}
since $\log(x)\leq x-1$ for all $x\geq 0$. In particular, the previous inequality becomes an identity if and only if $x=1$, i.e. if and only if
\begin{equation}
	f(\Gamma_1^\prime,t)f(\Gamma_2^\prime,t)=f(\Gamma_1,t)f(\Gamma_2,t).
\end{equation}
\end{proof}
\begin{lemma}
	\label{lem:Cauchy}
	Let $f:\mathbb{R}^d \to \mathbb{R}$ be a continuous function such that $f(\vec{x}) + f(\vec{y}) = f(\vec{x}+\vec{y})$ for all $\vec{x},\vec{y}\in \mathbb{R}^d$, then $f$ is a linear function, i.e. $f(\vec{x})=a+b\cdot \vec{x}$ for some $a\in \mathbb{R}$ and $b\in \mathbb{R}^d$.
\end{lemma}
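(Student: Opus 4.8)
The plan is to follow the classical treatment of Cauchy's functional equation, extended to $\mathbb{R}^d$ coordinate by coordinate, with continuity invoked only where it is genuinely needed. First I would extract $\mathbb{Q}$-homogeneity of $f$ from additivity alone. Setting $\vec{x}=\vec{y}=\vec{0}$ gives $f(\vec{0})=0$, so in fact the constant $a$ in the statement is forced to be zero. A straightforward induction on $n$ yields $f(n\vec{x})=nf(\vec{x})$ for all $n\in\mathbb{N}$; adding $f(\vec{x})+f(-\vec{x})=f(\vec{0})=0$ gives $f(-\vec{x})=-f(\vec{x})$, hence $f(n\vec{x})=nf(\vec{x})$ for all $n\in\mathbb{Z}$; and applying this identity to $\vec{x}/q$ in place of $\vec{x}$ produces $f\!\left(\tfrac{p}{q}\vec{x}\right)=\tfrac{p}{q}f(\vec{x})$ for every rational $p/q$.

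Next I would upgrade rational homogeneity to real homogeneity using the continuity hypothesis: fix $\vec{x}\in\mathbb{R}^d$ and $t\in\mathbb{R}$, and choose rationals $q_k\to t$. Then $q_k\vec{x}\to t\vec{x}$, so continuity of $f$ gives $f(q_k\vec{x})\to f(t\vec{x})$, while the previous step gives $f(q_k\vec{x})=q_k f(\vec{x})\to t f(\vec{x})$; comparing limits yields $f(t\vec{x})=t f(\vec{x})$ for all $t\in\mathbb{R}$.

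Finally I would decompose along the standard basis $\{\vec{e}_1,\dots,\vec{e}_d\}$. Writing $\vec{x}=\sum_{i=1}^d x_i\vec{e}_i$ and applying additivity repeatedly together with the homogeneity just established,
\[
  f(\vec{x})=\sum_{i=1}^d f(x_i\vec{e}_i)=\sum_{i=1}^d x_i\, f(\vec{e}_i)= b\cdot \vec{x},
\]
where $b\coloneqq (f(\vec{e}_1),\dots,f(\vec{e}_d))\in\mathbb{R}^d$. This is exactly the asserted affine form, with $a=0$.

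The main obstacle is precisely the passage from $\mathbb{Q}$-homogeneity to $\mathbb{R}$-homogeneity: additivity by itself admits pathological solutions (everywhere discontinuous, non-measurable) constructed from a Hamel basis of $\mathbb{R}$ over $\mathbb{Q}$, so some regularity assumption is indispensable. Continuity is what rules these out, and it is used at exactly one point in the argument; any weaker hypothesis such as measurability or local boundedness on a set of positive measure would serve equally well, but here continuity makes the density argument above completely routine.
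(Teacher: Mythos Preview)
Your argument is correct and complete; it is the standard treatment of Cauchy's additive functional equation, passing from additivity to $\mathbb{Q}$-linearity and then invoking continuity (via density of $\mathbb{Q}$) to obtain $\mathbb{R}$-linearity. There is nothing to compare against: the paper states this lemma without proof, treating it as a classical fact and citing the literature on collision invariants for context. Your observation that additivity forces $a=0$ is also well taken---the ``affine'' form $a+b\cdot\vec{x}$ in the statement is slightly loose, since only the linear case actually arises.
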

In order to prove that \cref{thm:maxwellian} we will have to prove that the collision invariants we obtained in Section \ref{sec:symmetry} are the only possible ones.
This problem has been dealt with by Boltzmann himself under the hypothesis that $\phi\in C^2(\mathbb{R}^3,\mathbb{R})$. Most modern proof of the fact there are only five linearly-independent invariants to the Boltzmann equation, see \cite{carleman, grad,cercignaniInvariants, arkeryd, arkerydCercignani}, are based on Gr\"onwall approach to reformulate \eqref{eq:collision_invariants} as Cauchy's additive functional equation, i.e.
$f(\vec{x}+\vec{y})=f(\vec{x})+f(\vec{y})$.
This reformulation clearly makes use of an additive structure in the phase space that might be lost when working with an order parameter manifold, for this reason we will now proceed proving \cref{thm:maxwellian} in the context of the present paper.
\begin{proof}[Proof of \cref{thm:maxwellian}]
	We begin taking a closer look at the collision invariants introduced in Section \ref{sec:symmetry}, i.e.
	\begin{enumerate}
		\item the microscopic linear momentum, i.e. $\vec{p_1} $,
		\item the microscopic angular momentum, i.e. $A_{\pt{x}_1} \mat{Q}\cdot \vec{p}_1+A_{\nu_1} \mat{Q}\cdot \mat{B}(\nu_1)\vec{\dot{\nu_1}}$,
		\item the microscopic kinetic energy, i.e. $m^{-1}\vec{p}_1\cdot\vec{p}_1+\vec{\varsigma}\cdot \mat{B}(\nu_1)^{-1}\vec{\varsigma_1}$,
	\end{enumerate}
	and we observe that these are not minimal, in the sense of linear independence. In fact, we know the conservation of linear momentum implies the conservation of $A_{\pt{x}_1} \mat{Q}\cdot \vec{p}_1$ and thus we are left considering the first and the last invariants, together with $A_{\nu_1} \mat{Q}\cdot \mat{B}(\nu_1)\vec{\dot{\nu_1}}$.
	Next we observe that under the hypothesis that $\mathcal{A}$ is transitive by \cref{prop:constant} we can further simply the invariant $A_{\nu_1} \mat{Q}\cdot \mat{B}(\nu_1)\vec{\dot{\nu_1}}$ to $A_{\nu_1}Q\cdot \mat{B}\,\vec{\dot{\nu_1}}$.
	Ultimately, given that the group action $\mathcal{A}$ is transitive, we can further simplify the invariant $A_{\nu_1}Q\cdot \mat{B}\,\vec{\dot{\nu_1}}$ to $\mat{B}\,\vec{\dot{\nu_1}}$ which is equivalent to $\vec{\varsigma}_1$.
	Hence the collision invariants can be expressed as functions of $(\vec{p}_i,\vec{\varsigma}_i)$ only, i.e. $\phi(\Gamma_i)=\phi(\vec{p}_i,\vec{\varsigma}_i)$.
	
	Notice that the portion of the phase space $(\vec{p}_i,\vec{\varsigma}_i)$ is endowed with the structure of a vector space we need.
	From this point on we can follow the same steps as in \cite{arkerydCercignani} to prove that the only possible linear independent invariants are the ones we have introduced in Section \ref{sec:symmetry}, i.e. the microscopic linear momentum, the microscopic angular momentum, and the microscopic kinetic energy.
	
	For the sake of completeness we will here report the main steps of the proof.
	We begin studying the even part of $\phi(\vec{p}_i,\vec{\varsigma}_i)$, i.e. $\phi^{E}(\vec{p}_i,\vec{\varsigma}_i)=\frac{1}{2}\left(\phi(\vec{p}_i,\vec{\varsigma}_i)+\phi(-\vec{p}_i,-\vec{\varsigma}_i)\right)$.
	In particular, we observe that $\phi(\vec{p}_1,\vec{\varsigma}_1)+\phi(\vec{p}_2,\vec{\varsigma}_2)$ is only a function of the sum $\vec{p}_1+\vec{p}_2$, $\vec{\varsigma}_1+\vec{\varsigma}_2$, and of $\vec{p}_1\cdot\vec{p}_1+\vec{p}_2\cdot\vec{p}_2 + \vec{\varsigma}_1\mat{B}^{-1}\cdot\vec{\varsigma}_1+ \vec{\varsigma}_2\mat{B}^{-1}\cdot\vec{\varsigma}_2$.
	Hence $\phi^{E}(\vec{p}_1,\vec{\varsigma}_1)$ is only a function of $\vec{p}_1\cdot\vec{p}_1+ \vec{\varsigma}_1\mat{B}^{-1}\cdot\vec{\varsigma}_1$ and furthermore $\phi^{E}(\vec{p}_1,\vec{\varsigma}_1)+ \phi^{E}(\vec{p}_2,\vec{\varsigma}_2)$ will only be a function of $\vec{p}_1\cdot\vec{p}_1+ \vec{p}_2\cdot\vec{p}_2 + \vec{\varsigma}_1\mat{B}^{-1}\cdot\vec{\varsigma}_1+ \vec{\varsigma}_2\mat{B}^{-1}\cdot\vec{\varsigma}_2$, thus we can apply \cref{lem:Cauchy} to conclude that $\phi^{E}(\vec{p}_1,\vec{\varsigma}_1)$ is a linear function of $\vec{p}_1\cdot \vec{p}_1+ \vec{\varsigma}_1\mat{B}^{-1}\cdot\vec{\varsigma}_1$, i.e. $\phi^{E}(\vec{p}_1,\vec{\varsigma}_1)=\vec{a}\cdot \vec{p}_1+\vec{b}\cdot \vec{\varsigma}_1$ for some $\vec{a},\vec{c}\in \mathbb{R}^3$.
	Considering $(\vec{p}_1,\vec{\varsigma}_1)$ and $(\vec{p}_2,\vec{\varsigma}_2)$ orthogonal to each other, we can prove that the odd part of $\phi$, i.e. $\phi^{O}(\vec{p}_1,\vec{\varsigma}_1)\coloneqq \frac{1}{2}\left(\phi(\vec{p}_1,\vec{\varsigma}_1)-\phi(-\vec{p}_1,-\vec{\varsigma}_1)\right)$ verifies the hypothesis of \cref{lem:Cauchy} thus it must a linear function of $(\vec{p}_1,\vec{\varsigma}_1)$, i.e. $\phi^{O}(\vec{p}_1,\vec{\varsigma}_1)=\vec{a}\cdot \vec{p}_1+\vec{b}\cdot \vec{\varsigma}_1$ for some $\vec{a},\vec{b}\in \mathbb{R}^3$.
	Finally, summing the \rev{odd} and even parts of $\phi$ we obtain the desired result.
\end{proof}
\bibliographystyle{siamplain}
\bibliography{references}
\end{document}